\newcommand{\cc}{\mathrm{C}}
\newcommand{\Alg}{\mbox{ALG}\xspace}
\newcommand{\LIS}{\mbox{LIS}\xspace}
\newcommand{\cC}{{\mathcal C}}
\newcommand{\cP}{{\mathcal P}}
\renewcommand{\ell}{\pi}
\newcommand{\lmin}{{\ell_{min}}\xspace}
\newcommand{\lmax}{{\ell_{max}}\xspace}
\newcommand{\LISs}{m\mbox{-LIS}\xspace}
\newcommand{\SLBurst}{\rho\mbox{m-Preamble}\xspace}
\newcommand{\GroupLIS}{\mbox{GroupLIS}}
\newcommand{\Greedy}{\mbox{Greedy}\xspace}
\newcommand{\MGreedy}{\mbox{MGreedy}\xspace}
\newcommand{\Preamble}{\mbox{SL-Preamble}\xspace}
\newcommand{\rhoflr}{\overline{\rho}}
\newcommand{\Amortized}{k\mbox{-Amortized}\xspace}
\newtheorem{definition}{Definition}
\newtheorem{theorem}{Theorem}
\newtheorem{lemma}{Lemma}
\newtheorem{observation}{Observation}  
\newtheorem{claim}{Claim}[section]
\newcommand{\sq}{\hbox{\rlap{$\sqcap$}$\sqcup$}}
\newcommand{\qed}{\hspace*{\fill}\sq}
\newenvironment{proofof}[1]{\bigskip \noindent {\bf Proof of #1:}}
\newenvironment{proof}{\noindent{\bf Proof:}}{\hfill\rule{2mm}{2mm}\\}    
\newcommand{\ez}[1]{{\color{magenta} #1}}
\newcommand{\remove}[1]{}
\begin{document}

\title{
Online Distributed Scheduling on a Fault-prone Parallel System
}




\author[1,2]{Elli Zavou \thanks{elli.zavou@imdea.org}}
\author[1]{Antonio Fern\'andez Anta}
\affil[1]{IMDEA Networks Institute, 28918, Legan\'es (Madrid), Spain}
\affil[2]{Universidad Carlos III de Mdrid, Madrid, Spain}

%


\date{}


\maketitle 

\begin{abstract}
We consider a parallel system of $m$ identical machines prone to unpredictable crashes and restarts, trying to cope with the continuous arrival of tasks to
be executed. Tasks have different computational requirements (i.e., processing time or \emph{size}). The flow of tasks, their size, and the crash and restart of
the machines are assumed to be controlled by an adversary.
Then, we focus on the study of online distributed algorithms for the efficient scheduling of the tasks.
We use competitive analysis, considering as efficiency metric the {\em completed-load}, i.e., the aggregated size of the completed tasks.

We first present optimal completed-load competitiveness algorithms when the number of different task sizes that can be injected by the adversary is
bounded. (It is known that, if it is not bounded, competitiveness is not achievable.) We first consider only two different task sizes, 
and then proceed to $k$ different ones, showing in both cases that the optimal completed-load competitiveness can be achieved.

Then, we consider the possibility of having some form of resource augmentation, allowing the scheduling algorithm to run with a speedup $s \geq 1$.
In this case, we show that the competitiveness of all work-conserving scheduling algorithms can be increased by using a large enough speedup.
\end{abstract}

{\bf Keywords:} 
Scheduling, Parallel Computation, Non-uniform Tasks, Failures, Competitiveness, Online Algorithms

%


\section{Introduction}
\label{sec:intro}

With the widespread use of cloud computing (which is essentially equivalent to computing in large scale data centers) and big data processing, parallel computing
is taking new forms. Parallelism appears as the execution of lightly coupled tasks (or jobs), like the map or reduce tasks of a map-reduce computation, in a collection of
decoupled processors (or cores). The large scale of both size and time of these types of computation, has two important consequences. First, 
it makes it highly likely that processors will fail during the computation (failures are the norm, not the exception \cite{Ekanayake2010,Zhang2010,jhawar2013fault,bala2012fault}), and hence recovery mechanisms must be an
intrinsic part of the task scheduler. Second, it is unlikely that the information about all the tasks to be scheduled is available at the initial time of the computation, which means that
the task scheduler must make online decisions. These two aspects make most prior work on task scheduling on parallel 
machines not applicable in these new environments (e.g.,~\cite{el1995task,gharbi2005optimal,166609,KS97,
sanlaville1998machine,shirazi1990analysis}).
For instance, some works tackle the issue of dynamic job arrivals but do not consider failures (e.g.,~\cite{AKP_STOC92}), 
others consider failures but assume knowledge of all the jobs a priori (e.g.,~\cite{Faith2013,kalyanasundaram1994fault}), finally, some others
consider energy efficiency issues but not machine failures, or only jobs of the same computational demand (e.g.,~\cite{chan2009optimizing}).

\begin{figure}[h!]
  \centering
    \includegraphics[width=0.45\textwidth]{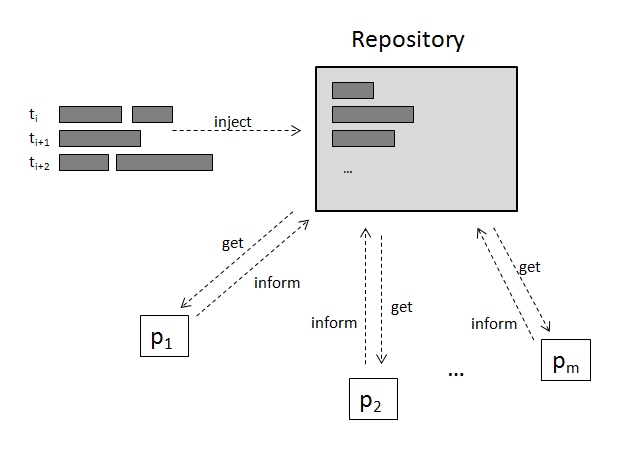}
    \caption{\small The computing setting considered, with the $m$ homogeneous machines, the shared Repository, and the three operations; {\em inject}, for the dynamic task arrivals from the users, {\em get}, for the machines to obtain the set of pending tasks, and {\em inform} for the repository to update the set of pending tasks.}
    \label{fig:diagram}
\end{figure}

In this paper, we explore the scheduling of tasks in a parallel system like the one depicted in Figure~\ref{fig:diagram} (which is similar to the one
introduced in \cite{anta2015online}). The system has
$m$ identical machines\footnote{We use the terms \emph{machine} and \emph{processor} interchangeably.}
 prone to crashes and restarts (controlled by an adversary). Independent idempotent tasks with different computational demands (e.g., in terms of processing time) arrive into 
the system, to be executed by any of the machines. In order to remove a single point of failure, the system considered has no central scheduler. Instead (see Fig.~\ref{fig:diagram}),
each task that arrives is held in a repository until some machine executes it and reports this fact. Hence, the scheduling is done in a distributed way
by the machines. The objective is to design efficient distributed scheduling algorithms that run at the machines.
In summary, the characteristics of the parallel system considered here are (1) continuous task arrivals,
(2) machine failures and restarts, and (3) distributed scheduling.
We measure the performance of an algorithm by its completed-load (i.e., the total size of the completed tasks) competitiveness.

One additional aspect of cloud computing is its impact in terms of energy consumption. It is known that the power consumed by a processor grows with
its processing speed (in cycles per second)~\cite{Arjona2014}. In our model we introduce this in the form of a speedup $s \geq 1$,
that can be used to allow processors to run faster than the baseline. This is a form of resource augmentation. We show that this
resource augmentation, increasing the speed of the machines, improves the
competitiveness of the system.

\paragraph{Related Work}
Probably, the most important research line related to this work is the study of machine scheduling with availability constraints 
(e.g.,~\cite{sanlaville1998machine,gharbi2005optimal,
georgiou2015competitiveness,Faith2013}). One of the most important outcomes of this line is the necessity of algorithms that take into account unexpected machine breakdowns. Most works allow preemptive scheduling 
\cite{gharbi2005optimal,166609} and show optimality only for nearly online algorithms (for example, algorithms that need to know the time of the next job arrival or machine availability).
Among these works some also consider energy issues, and use speed-scaling to tune the power consumption of the processors (e.g.,~\cite{Albers:2011:MSS:1989493.1989539,Chan:2009:SSP:1583991.1583994,chan2009optimizing}).


The work of Georgiou and Kowalski~\cite{georgiou2015competitiveness} was the one that initiated our study. They looked at a rather different setting, consisting of a cooperative computing system of $m$ message-passing processors prone to crashes and restarts, having to collaborate in order to complete dynamically injected tasks. For the efficiency of the system, they performed competitive analysis, focusing on the maximum number of pending tasks. They proved competitiveness with unit-length tasks, and showed that if tasks have different lengths competitiveness cannot be achieved.

In~\cite{anta2015online} we looked at a setting similar to the one used here, with $m$ machines, introduced the term of speedup, representing the resource augmentation required, and showed competitiveness in terms of pending load (sum of sizes of pending tasks).
More precisely, tasks of at least two different sizes were considered, and it was found that the threshold for competitiveness was defined by these two conditions: (a) $s < \rho$ and (b) $s < 1 + \gamma/\rho$, where $\rho$ is the ratio between the largest and smallest task and $\gamma$ is a parameter that depends on $\rho$ and $s$ (see~\cite{anta2015online} for details). If both conditions hold then \emph{no} deterministic algorithm is pending-load competitive. Then, some algorithms were proposed that achieve competitiveness as soon as one of the conditions does not hold; unfortunately imposing other restrictions, like considering only two different task costs.
A follow-up work~\cite{Anta2015competitive} compared popular scheduling algorithms (like FIFO, LIFO or LIS -- Longest in System) 
on the basic model of one machine, and looked at different efficiency measures, including the completed-load and latency competitiveness.
%
Kowalski et al.~\cite{kowalski2015fault}, also inspired by~\cite{anta2015online}, proved that in a system with one machine, and for speedup satisfying the conditions (a) and (b) as described above, no deterministic algorithm can be $1$-completed-load competitive.
They then proposed an algorithm that achieves 
$1$-completed-load competitiveness
if $s \geq 1+\gamma/\rho$.

In~\cite{Anta2013measuring}, a different setting was considered: an unreliable communication link between two nodes.
The problem of scheduling packets over such a link, is very related to the problem in this work, closely resembling the problem of 
scheduling tasks in one single machine with crashes and restarts (and without speedup).
In~\cite{Anta2013measuring}, the authors proposed the metric of {\em asymptotic throughput} for the performance evaluation of scheduling algorithms studied, which corresponds to the long term completed-load in our current setting.
Assuming only two packets lengths, they showed that for adversarial arrivals there is a tight value for the asymptotic throughput, giving upper bound with a fixed adversarial strategy and a matching lower bound with an online algorithm.
%
Jurdzinski et al.~\cite{Jurdzinski2015} extended that work, presenting optimal online algorithms for the case of $k$ fixed packet lengths, 
matching the upper bounds on the asymptotic throughput shown in~\cite{Anta2013measuring}.
Finally, they sketch a modification to one of the algorithms, in order to adapt it for the case of $f$ independent channels (stating that the analysis is not trivial).
This modified algorithm cannot be used in our setup because it uses
central scheduling; the sender has updated information and full control of the channel through which each packet is transmitted.

\paragraph{Contributions}

\begin{table*}[!t]\small
\begin{center}
\setlength{\tabcolsep}{1.5mm}
{\tabulinesep=1.5mm
\begin{tabu}{|c||c|c|}	
\hline
{\bf Speedup} & {\bf $\mathbf{1}$ - machine} & {\bf $\mathbf{m}$ - machines} \\
\hline \hline
 & \multicolumn{2}{c|}{$\cC(\Alg_W) = 0$, {\bf any} task size, \hfill~\cite{Anta2015competitive}} \\
 \cline{2-3}
\multirow{7}{*}{ $s = 1$ } & \multicolumn{2}{c|}{$\cC(\Alg_W) \leq \frac{\bar{\rho}}{\bar{\rho} + \rho} \approx 1/2$, {\bf two} task sizes, \hfill~\cite{Anta2013measuring}} \\
  \cline{2-3}
 & $\cC(\Preamble) \geq \frac{\bar{\rho}}{\bar{\rho} + \rho} \approx 1/2$, \hfill~\cite{Anta2013measuring} &  $\cC(\SLBurst) \geq \frac{\bar{\rho}}{\bar{\rho} + \rho} \approx 1/2$, \hfill [Th.~\ref{t:rmpreamble}] \\
 & {\bf two} task sizes & {\bf two} task sizes \\
 \cline{2-3}
 & $\cC(\Greedy) \geq \frac{\bar{\rho}}{\bar{\rho} + \rho} \approx 1/2$, \hfill~\cite{Jurdzinski2015} & $\cC(\Amortized) \gtrsim 1/2$, \hfill [Th.~\ref{th:Amortized}] \\
 & $k$ task sizes, {\bf pairwise divisible} &  $k$ task sizes, {\bf pairwise divisible} \\
 \cline{2-3}
 & $\cC(\MGreedy) \geq \min\limits_{1\leq j < i \leq k} \Big\{ \frac{\overline{\rho_{i,j}}}{\overline{\rho_{i,j}} + \rho_{i,j}} \Big\}$, \hfill~\cite{Jurdzinski2015} & $\cC(M\Amortized) \geq \min\limits_{1\leq j < i \leq k} \Big\{ \frac{\overline{\rho_{i,j}}}{\overline{\rho_{i,j}} + \rho_{i,j}} \Big\}$, \hfill [Th.~\ref{th:mkAmortized}] \\
 & $k$ task sizes $\ell_1,\ldots,\ell_k$; $\rho_{i,j}=\ell_i/\ell_j$ & $k$ task sizes $\ell_1,\ldots,\ell_k$; $\rho_{i,j}=\ell_i/\ell_j$ \\
 & {\bf general} & {\bf general} \\
\hline \hline
 $s \geq \rho$ & $\cC(\Alg_W) \geq 1/\rho$, \hfill~\cite{Anta2015competitive} & $\cC(\Alg_W) \geq 1/\rho$, \hfill [Th.~\ref{t:srho}] \\
\hline \hline
$s \geq \max\{\rho, 2\}$ & $\cC(\LIS) \geq 1$ \hfill~\cite{Anta2015competitive}& $\cC(\LISs) < 1$, {\bf when} $m=s=\rho=2$ \hfill [Th.~\ref{t:mLIS}] \\
\hline \hline
$s \geq 1 + \rho$ & $\cC(\Alg_W) \geq 1$, \hfill~\cite{Anta2015competitive} & $\cC(\Alg_W) \geq 1$, \hfill [Th.~\ref{t:srhoplus}] \\
 \hline
\end{tabu}}
\caption{\small 
{\em Negative} (upper bounds) and {\em positive} (lower bounds) results on the completed-load competitiveness.
$\Alg_W$ is any work conserving algorithm.
Note that the negative results hold for both $1$-machine and $m$-machines. Recall that $\rho=\lmin/\lmax$ and $\bar{\rho} = \lfloor \rho \rfloor$.}
\label{table-general}
\end{center}
\end{table*}

As mentioned, in this work, we consider a setting with $m$ {\em machines} prone to crashes and restarts, controlled by an {\em adversary} (to model worst-case scenarios), and a {\em shared repository} (an entity that provides the service by which the clients of the system submit the tasks to be executed and notifies them of their completion -- see Fig.~\ref{fig:diagram}). Note that the shared repository is not a scheduler, since it does not make any decisions on the execution of the tasks. It is basically 
a passive storage that behaves as an interface between the clients that generate tasks and the machines. It also allows the machines to maintain information about the tasks 
that have not been executed yet.

Tasks arrive in the system continuously and have different computational demands (which is their {\em size}). We assume that each task $\tau$ has size $\ell(\tau) \in[\lmin,\lmax]$, where $\lmin$ and $\lmax$ correspond to the smallest and largest possible values respectively, and that $\ell(\tau)$ only becomes known at the moment of the task arrival. Tasks are held in the shared repository, which is later accessed by the machines in order to decide which task to execute next. Note, then, that the machines' decisions are taken in a distributed manner and without any communication between them. When a task is completed, the corresponding machine {\em informs} the shared repository (which in turn notifies the client). 

As mentioned, we consider the possibility of having \emph{resource augmentation} in the form of \emph{speedup} $s \geq1$~\cite{kalyanasundaram2000speed,AGM-ICALP11} (i.e., we increase the computational speed of the machines) in order to cope with the performance cutback from the machine failures (crashes) and restarts, as well as the lack of information for the future task arrivals. More precisely, we consider uniform speedup $s\geq 1$ for all the machines, under which a task $\tau$ is executed $s$ times faster, i.e., in time $\ell(\tau)/s$.

Since the scheduling decisions must be made in a continuous manner and without future knowledge (neither of the task arrivals nor of the machine crashes and restarts), 
we study the problem as an {\em online} scheduling problem~\cite{pruhs2004online,anta2015online,Anta2013measuring}.
Hence, for the evaluation of the different algorithms proposed, we 
use {\em competitive analysis}~\cite{ST_CACM85}, measuring the {\em total completed load} of the system: the sum of sizes of the completed tasks. More precisely, an algorithm is considered $\alpha$-completed-load competitive, also expressed as $\cC(\Alg) = \alpha$, with speedup $s$ if under any adversarial behavior its completed-load complexity is at least $\alpha$ times the completed-load complexity of any algorithm $X$, running with no speedup and the same adversarial behavior.
Fully detailed specifications of the model used are given in Section~\ref{sec:model}.


In Table~\ref{table-general}, we summarize our results, including also some relevant results found in previous works. Note that the upper bounds found for the case of one machine, hold directly for the case of $m$ machines, since the adversary can simply crash all machines except one, and force the corresponding adversarial scenarios to occur. However, the {\em positive} results (lower bounds) do not necessarily transfer from $1$ to $m$ machines.

The upper part of Table~\ref{table-general} presents the results obtained in Section~\ref{sec:noSpeedup} when the machines run without speedup (i.e. $s=1$).
As can be seen, for this case we present algorithms that achieve optimal completed-load competitiveness, in all the cases. 
Observe that, in~\cite{Anta2015competitive}, it was shown for one machine that no work-conserving algorithm\footnote{An algorithm is work conserving if it does not allow a machine to be idle if there are tasks to be executed in the repository.} 
can achieve competitiveness
if there is an arbitrary number of different task sizes. This, being a negative result, also holds for the case of $m$-machines. 
Then, we give three work-conserving algorithms, focusing on the cases of two task sizes and bounded number of task sizes, {\em with} and {\em without pairwise divisibility}; a property that holds between each pair of task sizes (explained further in Section~\ref{sec:noSpeedup}). For the three cases, algorithms $\SLBurst$, $\Amortized$ and M$\Amortized$ respectively, achieve optimal competitiveness, matching the upper bound shown in~\cite{Anta2013measuring}. These algorithms
are non-trivial generalization of algorithms proposed in \cite{Anta2013measuring} and \cite{Jurdzinski2015}.

The lower part of Table~\ref{table-general} presents the results obtained in Section~\ref{sec:Speedup} for systems running with speedup $s>1$. The first
interesting observation from these results is that, contrary to intuition, to move from one machine to multiple machines it is not enough to complement an
algorithm that works for $m=1$ with a mechanism that prevents redundant execution of tasks when $m>1$. 
This is shown in the case of $s \geq \max\{\rho, 2\}$ with the algorithm \LISs, proposed in~\cite{anta2015online},
which is the natural adaptation of \LIS to multiple machines. As observed, while \LIS guarantees 1-completed-load competitiveness in one machine,
\LISs cannot achieve that level of competitiveness even with 2 machines.
Fortunately, as shown, we have been able to generalize two important general positive results obtained for work-conserving algorithms in one machine to
multiple machines.


\section{Model and Definitions}
\label{sec:model}

The parallel system considered has $m$ identical machines, prone to crashes and restarts, with unique ids in the set $\{0,1,2,\ldots,m-1\}$. Please see also Fig.~\ref{fig:diagram} for the graphical representation of the system. 
As mentioned, the machines have access to a shared {\em repository}.
The repository supports three essential operations: {\em inject,~get,} and {\em inform}. The {\em inject} operation is executed by a client of the system to add a new task to the current set of tasks to be executed. We assume here that this operation is controlled by an adversary (as will be further discussed below). Operations {\em get} and {\em inform} are executed by the machines. A machine uses the {\em get} operation to obtain the set of {\em pending tasks}, i.e., the tasks that are in the repository because they were injected and no machine has notified their completion yet. For simplicity, we assume that the {\em get} operation is blocking, i.e., if the repository is empty when
executed, it waits until some new task is before returning (the set of newly injected tasks).
A machine then executes an {\em inform} operation when it has completed the task scheduled, notifying the repository about its completion. 
Then the repository removes immediately this task. We assume that the execution of these operations is instantaneous (takes negligible time),
except a {\em get} operation that blocks.

We consider machines running in {\em processing cycles}, controlled by the scheduling algorithm considered. Each cycle, starts with a {\em get} operation, a task execution, and an {\em inform} operation (if the task is completed). Since the repository operations ({\em get} and {\em inform}) are instantaneous, a processing cycle lasts the time needed for the scheduled task to be completed. 
We assume that machines run with a {\em speedup} $s \geq 1$ ($s=0$ means no speedup).
Then, a processing cycle lasts a time equal to the size of the task divided by the speedup $s$.
When a machine crashes, the cycle is interrupted and the progress in the task execution is lost.
If the machine later recovers, it starts a new cycle.


\paragraph{Event ordering} 
Since the injection of tasks by clients, the {\em get} operations, and the notification via {\em inform} operations of task completion by the machines 
may occur simultaneously, we define the following order among these events. We assume that in an instant $t$ the {\em inform} operations occur first,
then the injections, and finally the {\em get} operations. Hence, a {\em get} operation executed at time $t$ will include the tasks injected at time $t$ but not the ones completed at that time.


\paragraph{Tasks}
As already explained, computational tasks are injected to the system by the clients, with the {\em inject} operation at the repository. We assume that this operation is controlled by an arrival pattern $A$ (a sequence of task injections) defined by an omniscient adversary.
Each task 
$\tau$ has an {\em arrival time} $a(\tau)$ 
and a {\em size} $\ell(\tau)$, which is the time required to complete the task without speedup. The task attributes are only known at the time of its injection.
We use the term $\ell$-task to refer to a task of size $\ell \in [\lmin,\lmax]$. The values $\lmin$ and $\lmax$ are the smallest and largest possible
task sizes, and are usually assumed to be known by the scheduling algorithm. 
We also define parameter $\rho = \frac{\lmax}{\lmin}$ to be the ratio between the largest and smallest task sizes.

We assume that tasks are {\em atomic} in the sense that not executing one completely due to a crash implies that it has to be executed again from the start.
On the other hand, we assume the tasks to be {\em idempotent}~\cite{GS_book08}, which means that executing the same task more than once has the same effect as executing it only once.

\paragraph{Machine Crashes and Restarts}
For the machine crashes and restart, we consider an omniscient adversary, which is the same entity responsible for the task injections at the repository described above.
This means that the adversary is expected to coordinate injections, crashes, and recoveries.
In an execution of the system, the adversary defines an {\em error pattern} $E$, which is a list of crash and restart events, each associated with the time it occurs (e.g., $crash(t,p)$ is the event that that machine $p$ is crashed at time $t$).
We consider a machine $p$ being {\em alive} in time interval $I = [t,t']$, if it is operational at time $t$ and does not crash at any time $t'' \leq t'$. 

\begin{definition}
An {\em adversarial pattern} is a combination of arrival and error patterns $A$ and $E$. 
It is {\em admissible} only when at all time instants there is at least one machine alive. 
In our work we only consider admissible adversarial patterns.
\end{definition}

\paragraph{Notation}
We consider it useful to provide all extensively used notation.
Since it is essential to keep track of injected, completed and pending tasks at each time instant in an execution, we introduce sets $I_t(A)$, $N^s_t(X,A,E)$ and $Q^s_t(X,A,E)$, where $X$ is an algorithm, $A$ and $E$ the arrival and error patterns respectively, $t$ the time instant under consideration and $s$ the speedup of the machines.
$I_t(A)$ represents the set of injected tasks within the interval $[0,t]$, $N^s_t(X,A,E)$ the set of completed tasks within $[0,t]$ and $Q^s_t(X,A,E)$ the set of pending tasks at time instant $t$.
As implied by the event ordering defined above, $Q^s_t(X,A,E)$ contains the tasks that were injected by time $t$ inclusively, but not the ones completed before and up to time $t$. Observe that $I_t(A) = N^s_t(X,A,E) \cup Q^s_t(X,A,E)$ and that set $I$ depends only on the arrival pattern $A$, while sets $N$ and $Q$ also depend on the error pattern $E$, the algorithm run by the scheduler, $X$, and the speedup of the machine, $s$.
For simplicity, we omit the superscript $s$ in further sections of the paper. However, the appropriate speedup in each case is clearly stated at all times.

\sloppy{We use $L_\ell$ to refer to the subset of $Q^s_t(X,A,E)$ that includes only pending tasks of size $\ell$, and we assume an ascending order of tasks in each queue $L_\ell$, according to their arrival time. To simplify the presentation of the algorithms, in a list of pending tasks we number them starting with $0$. Then, for instance, the tasks in $L_\ell$ are numbered from $0$ to $|L_\ell|$.
What is more, we will use notation $|L_\ell(X,t)|$ to refer to the number of $\ell$-tasks pending in the execution of $X$ at time $t$.
In a similar way, we use notation $|N_p(X,t)|$ to denote the number of completed tasks by machine $p$ in the execution of algorithm $X$ at time $t$.
}

Finally, we include some definition that originally appeared in~\cite{anta2015online} and will be used in the rest of the paper.

\begin{definition}[\cite{anta2015online}]
An algorithm is of type {\bf\em \GroupLIS}, if all the following hold:
\begin{itemize}[leftmargin=3mm]
\item It separates the pending tasks into classes containing tasks of the same size.
\item It sorts the tasks in each class in increasing order with respect to their arrival time.
\item If a class contains at least $m^2$ pending tasks and a machine $p$ schedules a task from that class, then it schedules the $(p\cdot m)$th task in the class.
\end{itemize} 
\end{definition}

\paragraph{Efficiency Measures}
We evaluate our algorithms with the \emph{completed load} measure. Given an algorithm $\Alg$ running with speedup $s\geq 1$, and adversarial arrival and error patterns $A$ and $E$ respectively, we look at time $t\geq 0$ of the execution and focus on the completed load complexity. This means, that we look at the sum of sizes of the completed tasks up to time instant $t$:
$$
C^s_t(\Alg,A,E) = \sum\limits_{\tau \in N^s_t(\Alg,A,E)} \ell(\tau)
$$
Finding the algorithm, in other words computing the schedule, that maximizes the measure offline (having the knowledge of patterns $A$ and $E$) is an NP-hard problem~\cite{anta2015online,Anta2013measuring}.

We will also be using a slightly changed notation $C^X(t,\ell)$ -- resp., $P^X(t,\ell)$ -- to denote the completed load -- resp., pending load -- of size $\ell$ at time instant $t$ in the execution of algorithm $X$. Similarly, $C^X(t,<\ell)$ -- resp., $P^X(t,<\ell)$ -- refers to the completed load -- resp., pending load -- of size smaller than $\ell$ at time instant $t$ in the execution of $X$.

As already mentioned, since the system is dynamic in respect to the task arrivals and machine crashes and restarts, we view the scheduling problem of task as an online one, and pursue competitive analysis. Specifically, considering any time $t$ of an execution, any combination of arrival and error patterns, $A$ and $E$, and any algorithm $X$ designed to solve the scheduling problem, the completed load competitiveness of an algorithm $\Alg$ that runs with speedup $s \geq 1$ is defined as follows:\\
Algorithm $\Alg$ is $\alpha$-completed-load competitive if $\forall t,X,A,E$, $C^s_t(\Alg,A,E) \geq \alpha\cdot C^1_t(X,A,E) + \Delta_C$, where parameter $\Delta_C$ does not depend on $t,X,A$ or $E$, and $\alpha$ is the completed-load competitive ratio, which we denote by $\cC(\Alg)$.
What is more, $\alpha$ is also independent of $t,X,A$ and $E$, but it may depend on system parameters like $\lmin, \lmax, m$ or $s$, which are not considered as inputs of the problem; they are fixed and given upfront. The input of the problem is formed only by the adversarial arrival and error patterns $A$ and $E$.
Finally, let us clarify that the number of machines $m$ is {\em fixed} for a given execution, and that the algorithm used may take it into consideration; hence different $m$ may result to different performance of the same algorithm, affecting additive term in the competitiveness. The same holds for $\lmin,\lmax$ and $s$.

\section{No Speedup}
\label{sec:noSpeedup}

Let us start with the section in which machines have no speedup, i.e., $s=1$. We aim to show that the upper bound of completed-load competitiveness shown in~\cite{Anta2013measuring} can be achieved by some online algorithms in the distributed setting of $m$ machines. In particular, we propose three scheduling algorithms, $\SLBurst$, $\Amortized$ and M$\Amortized$, and analyze their performance under worst-case arrival and error patterns $A$ and $E$, showing that the upper bound of completed-load competitiveness is guaranteed.


\subsection{Two Task Sizes}
\label{subsec:slBurst}

Let us start with the first algorithm, $\SLBurst$, that runs on the $m$ machines of the system and considers only two different task sizes, $\lmin$ and $\lmax$ (see the algorithm's pseudocode Alg.~\ref{algRBurst}).\\

\lstset{columns=fullflexible,
basicstyle=\small,
	tabsize=3,
	columns=flexible,
	mathescape=true,
	morekeywords={Parameters, Calculate, Repeat, Get, Create, Sort, Case, 1,2,3,4, If, then, else, Inform, Upon, awaking, or, restart},
}
\lstset{escapeinside={('}{')}}

\setlength{\textfloatsep}{2pt}
\begin{algorithm}[!t]
\caption{ {\bf $\SLBurst$} (for machine $p$)}
\label{algRBurst}
\begin{lstlisting}
Parameters: $m, \lmin, \lmax$
Upon awaking or restart
	Get $L_\lmin$ and $L_\lmax$, from the Repository;
	$preamble \gets \textbf{FALSE}$	  																												('\footnotesize{//Reset preamble status}')
	$c \gets 0$;																															('\footnotesize{//Reset counter}')
	Calculate $\rhoflr \gets \left\lfloor \frac{\lmax}{\lmin} \right\rfloor$
	If $|L_\lmin| \geq \rhoflr\cdot m^2$ then
		$preamble \gets \textbf{TRUE}$;

	Repeat																												('\footnotesize{//At decision times}')
		Get the queues of pending tasks, $L_\lmin$ and $L_\lmax$, from the Repository;
		Sort $L_\lmin$ and $L_\lmax$ by arrival time (ascending);
		If $preamble = \textbf{TRUE} \wedge (c < \rhoflr)$ then
			execute task at position $p\cdot m$ in $L_\lmin$;
			$c \gets c + 1$;
		else
			If $|L_\lmax| \geq m^2$ then
				execute task at position $p\cdot m$ in $L_\lmax$;
			else if $|L_\lmin| \geq m^2$ then
				execute task at position $p\cdot m$ in $L_\lmin$;
			else if $L_\lmax \neq \emptyset$ then
				execute task at position $(p\cdot m)\mod |L_\lmax|$
				in $L_\lmax$;
			else if $L_\lmin \neq \emptyset$ then
				execute task at position $(p\cdot m)\mod |L_\lmin|$
				in $L_\lmin$;
		Inform the Repository for the task completion;
\end{lstlisting}
\end{algorithm}

{\em Algorithm description.}
Upon awaking or restart, machine $p$ reads the two queues of pending tasks from the Repository, $L_\lmin$ and $L_\lmax$, and applies ascending sort by their arrival time, such that the earliest injected task is at position $1$ of the queue.
It then calculates parameter $\rhoflr = \lfloor \rho \rfloor$ and along with parameter $preamble$ decides which is the next task to be scheduled, avoiding redundancy when {\em enough} tasks are pending. More precisely, at each decision time, if there are at least $\rhoflr\cdot m^2$ tasks of size $\lmin$ and $preamble = \textbf{TRUE}$, the machine attempts to complete $\rhoflr$ $\lmin$-tasks before continuing with a {\em non-redundant} version of the {\em Largest Size} (LS) scheduling approach. Let us explain further: after the preamble is completed by the machine ({\em if} there was enough time and the machine did not crash), it gives priority to the largest tasks, given that it has {\em enough} of them, so that redundancy is avoided (see exact conditions in Algorithm~\ref{algRBurst}). Note that if there are at least $m^2$ $\lmin$-tasks and/or at least $m^2$ of $\lmax$-tasks, each machine will complete a different task of the same size. Hence, if there are not enough $\lmax$-task but there are enough $\lmin$ ones, it will schedule $\lmin$-tasks instead of risking redundant executions with scheduling $\lmax$ ones.\\

Observe that algorithm $\SLBurst$ belongs to the class of scheduling algorithms named \GroupLIS\xspace (see definition in Section~\ref{sec:model}), which was initially defined by Fern\'andez et al. in~\cite{anta2015online}. They showed that the algorithms in this class, considering speedup $s\geq 1$, do not execute the same task twice, {\em as long as there are enough pending tasks} (i.e., $\geq m^2$), and thus we show that the same holds for $\SLBurst$.
Let us start with the next lemma, that corresponds to the adaption of Lemma 8 in~\cite{anta2015online} for the case of no speedup.

\begin{definition}[\cite{anta2015online}]
A {\bf\em full task execution} of a task $\tau$ is the interval $[t,t^\prime]$, during which a machine $p$ schedules $\tau$ at time $t$ and reports its completion to the repository at $t^\prime$, without stopping its execution within the interval $[t,t^\prime)$.
\end{definition}

\begin{lemma}
\label{l:nonred}
For an algorithm $\Alg$ of type \mbox{\GroupLIS} and a time interval $T$ in which a queue $L_\ell$ has at least $m^2$ pending tasks, any two full task executions (in $T$) by different machines, are executions of different tasks; i.e., the full executions of tasks $\tau_1,\tau_2 \in L_\ell$ by machines $p_1$ and $p_2$ respectively, must have $\tau_1 \neq \tau_2$.
\end{lemma}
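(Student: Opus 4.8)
The plan is to argue by contradiction. Suppose that, within $T$, machine $p_1$ performs a full execution of $\tau_1$ and machine $p_2\neq p_1$ performs a full execution of $\tau_2$, and that in fact $\tau_1=\tau_2=\tau$. Let $t_1$ and $t_2$ be the times at which $p_1$ and $p_2$, respectively, schedule $\tau$, and assume without loss of generality $t_1\leq t_2$. Since $\Alg$ is of type \GroupLIS and $L_\ell$ holds at least $m^2$ pending tasks throughout $T$, whenever a machine $r$ schedules a task of $L_\ell$ it schedules the one currently at position $r\cdot m$ of $L_\ell$ (positions indexed from $0$, the list sorted by arrival time with a fixed tie-break). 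Hence $\tau$ occupies position $p_1\cdot m$ in $L_\ell$ at time $t_1$ and position $p_2\cdot m$ at time $t_2$; both positions are legal since $p_i\cdot m\leq (m-1)m<m^2$.

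I would first dispose of the degenerate case $t_1=t_2$: by the event-ordering convention all \emph{get} operations at a fixed instant observe the same state of the repository, so $\tau$ can occupy only one position of $L_\ell$ at that instant; but $p_1\neq p_2$ implies $p_1\cdot m\neq p_2\cdot m$, a contradiction. Therefore $t_1<t_2$.

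The heart of the argument is a monotonicity plus counting estimate. First note that the position of a fixed task $\tau$ in $L_\ell$ equals the number of currently pending $\ell$-tasks that precede $\tau$ in arrival order, and this quantity is non-increasing in time: any task injected after $a(\tau)$ has a strictly later arrival time and is placed after $\tau$, while the quantity drops only when an earlier-arriving $\ell$-task is completed and removed from the repository. Next, $p_1$'s full execution occupies the entire interval $[t_1,t_1^\prime)$ with $t_1^\prime=t_1+\ell(\tau)/s$, at whose end $\tau$ is removed; since $\tau$ must still be in $L_\ell$ when $p_2$ schedules it, the event ordering forces $t_2<t_1^\prime$, so the window $(t_1,t_2]$ has length strictly less than $\ell(\tau)/s$. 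Combining: over $(t_1,t_2]$ the position of $\tau$ falls from $p_1\cdot m$ down to $p_2\cdot m$, so $p_2<p_1$ and the drop is at least $(p_1-p_2)m\geq m$; equivalently, at least $m$ distinct $\ell$-tasks that arrived before $\tau$ are completed, with \emph{inform}-time in $(t_1,t_2]$.

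This is where the main obstacle lies, and it is resolved by bounding how many $\ell$-task completions can occur in so short a window. Completing an $\ell$-task requires one uninterrupted processing cycle of length $\ell/s$ (tasks are atomic, and a crash forces a restart), ending exactly at its \emph{inform}-time; hence a single machine can complete at most one $\ell$-task with \emph{inform}-time in an interval of length $<\ell/s$, since two such completions by the same machine would be at least $\ell/s$ apart. Moreover machine $p_1$ is executing $\tau$ throughout $[t_1,t_1^\prime)\supseteq(t_1,t_2]$ and completes no task there. Thus at most $m-1$ $\ell$-tasks can be completed with \emph{inform}-time in $(t_1,t_2]$, contradicting the lower bound of $m$ derived above. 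Hence $\tau_1\neq\tau_2$, as claimed.
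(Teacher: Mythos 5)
Your proof is correct and follows essentially the same route the paper takes (the paper itself defers to Lemma~8 of~\cite{anta2015online}, but a nearly identical argument appears in a commented-out proof of the \Amortized-specific analogue): argue by contradiction, use the fact that the window $(t_1,t_2]$ between the two schedulings of the same task $\tau$ is shorter than one $\ell$-task processing cycle, note that the position of $\tau$ in $L_\ell$ can only decrease by earlier $\ell$-tasks being completed, so $p_2<p_1$ and at least $m$ such completions would be needed, and then bound the number of $\ell$-task completions achievable in that short window by $m-1$. Your version is marginally more careful than the paper's sketch — it dispatches the degenerate $t_1=t_2$ case explicitly, records the monotonicity of a task's list position as a separate step, and carries the speedup $s$ through the time bound — but it does not change the substance of the argument.
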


\remove{ 
\begin{lemma}
\label{l:nfirst}
Let $N_T$ be the set of tasks reported as completed by an algorithm $\Alg$ of type \mbox{\GroupLIS$(\beta)$} in time interval $T$, where $|N_T|>m$. Then at least $|N_T| - m$ such tasks have their absolute task execution fully contained in $T$.
\end{lemma}

\begin{proof}
A task $\tau$ which is reported in $T$ by machine $p$ and its absolute task execution $\alpha \not\subseteq T$, has $\alpha = [t,t^\prime]$ where $t \not\in T$ and $t^\prime \in T$. Since $p$ does not stop executing $\tau$ in $[t,t^\prime)$, only one such task may occur for $p$. Then, there can not be more than $m$ such reports overall and the lemma follows.
\end{proof}
}

For the ease of presentation, let us use letter $A$ to refer to our algorithm, $\SLBurst$, in its analysis.
Let us also define the following two intervals, during which there are sufficient tasks pending in order to guarantee non-redundant task executions (from Lemma~\ref{l:nonred}, we make the observation that follows):
\begin{itemize}
\item [$T^+$:] an interval where $|L_\lmax(A,t)| \geq m^2$, $\forall t \in T^+$
\item [$T^-$:] an interval where $|L_\lmin(A,t)| \geq m^2$, $\forall t \in T^-$
\end{itemize}

\begin{observation}
\label{o:no-red}
All full executions of $\lmax$-tasks in the execution of algorithm $\SLBurst$ within any interval $T^+$ appear exactly once. Similarly, all full executions of $\lmin$-tasks in the execution of $\SLBurst$ within an interval $T^-$ appear exactly once.
\end{observation}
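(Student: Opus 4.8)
\textbf{Proof plan for Observation~\ref{o:no-red}.}
The plan is to derive the statement almost directly from Lemma~\ref{l:nonred}, with the only real work being a careful accounting of which task executions are ``full'' and which are not. First I would fix an interval $T^+$ in the execution of $\SLBurst$, so that by definition $|L_\lmax(A,t)| \geq m^2$ for all $t \in T^+$. I want to show that no $\lmax$-task has two full executions inside $T^+$. Suppose for contradiction that some $\lmax$-task $\tau$ is fully executed twice within $T^+$, say by machines $p_1$ and $p_2$. If $p_1 \neq p_2$, this immediately contradicts Lemma~\ref{l:nonred} applied with $\ell = \lmax$ and the interval $T^+$ (which has at least $m^2$ pending $\lmax$-tasks throughout). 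So the remaining case is $p_1 = p_2 = p$: the same machine fully executes $\tau$ twice. Here I would argue that a machine that completes $\tau$ at time $t'$ (reporting it to the repository) causes the repository to remove $\tau$ from $L_\lmax$ immediately, by the model's specification that the \emph{inform} operation removes the task at once; hence at any later decision time $\tau \notin L_\lmax$, and since $\SLBurst$ only ever schedules a task that is currently at some position of a queue it has just read via \emph{get}, it cannot schedule $\tau$ again after $t'$. This rules out the second full execution and completes the $\lmax$ case. The $\lmin$ case is entirely symmetric, using $T^-$ and $\ell = \lmin$.

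There is one subtlety worth flagging in the plan: the claim that $\SLBurst$ belongs to the class \GroupLIS. I would invoke the sentence already in the text asserting this membership, so that Lemma~\ref{l:nonred} applies. Concretely, \GroupLIS requires separating pending tasks into size classes, sorting each by arrival time, and — when a class has at least $m^2$ pending tasks — having machine $p$ schedule the $(p\cdot m)$th task of that class. Reading Algorithm~\ref{algRBurst}, in every branch that is taken when $|L_\lmax| \geq m^2$ (resp.\ $|L_\lmin| \geq m^2$), the machine executes ``task at position $p\cdot m$'' in the corresponding sorted queue; the preamble branch also executes position $p \cdot m$ in $L_\lmin$, and it is only entered when $|L_\lmin| \geq \rhoflr \cdot m^2 \geq m^2$. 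Hence whenever a size class has $\geq m^2$ pending tasks, $\SLBurst$ follows exactly the \GroupLIS rule, so it qualifies and Lemma~\ref{l:nonred} is available on $T^+$ and $T^-$.

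The main obstacle I anticipate is not conceptual but a matter of being precise about the phrase ``appear exactly once.'' What Lemma~\ref{l:nonred} literally gives is that two full executions \emph{by different machines} inside the interval are of different tasks; turning this into ``each $\lmax$-task is fully executed at most once in $T^+$'' additionally requires the single-machine argument above (a machine cannot re-execute a task it already completed, because the repository has deleted it). I would make sure to state that step explicitly rather than folding it silently into the appeal to Lemma~\ref{l:nonred}. With that in hand, the observation follows: within $T^+$ every full $\lmax$-execution corresponds to a distinct $\lmax$-task and conversely no $\lmax$-task contributes two such executions, i.e.\ the full $\lmax$-executions in $T^+$ are in bijection with the $\lmax$-tasks they complete, which is exactly the assertion ``appear exactly once''; symmetrically for $\lmin$-tasks in $T^-$. \qed
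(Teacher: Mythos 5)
Your proposal is correct and takes essentially the same route the paper does: the paper presents the observation as an immediate consequence of Lemma~\ref{l:nonred} (having already noted that $\SLBurst$ is of type \GroupLIS), and your argument instantiates that lemma on $T^+$ and $T^-$. You additionally make explicit the single-machine case (a completed task is removed from the repository and so cannot be re-scheduled), which the paper leaves implicit; this is a sound and welcome bit of extra care, not a deviation.
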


Note now, that there are two possible types for the whole execution of Algorithm $\SLBurst$:
\begin{itemize}
\item[(a)] $\forall t, \exists t'> t$ such that\vspace{0.15em}

$|L_\lmin(A,t')| < \rhoflr m^2 \bigwedge |L_\lmax(A,t')| < m^2$.
\item[(b)] $\exists t$ such that $\forall t'> t$\vspace{0.2em}

$|L_\lmin(A,t')| \geq \rhoflr m^2 \bigvee |L_\lmax(A,t')| \geq m^2$.
\end{itemize}

In the first case, when an execution is of type (a), there will always be a time $t'$ after the current time instant $t$, at which the queue $L_\lmin$ has less than $\rhoflr m^2$ tasks and the queue $L_\lmax$ less than $m^2$ tasks. In this case, non-redundancy cannot be guaranteed.

In the second case, when an execution is of type (b), after time $t$ the queue of pending tasks will never become empty, it will instead have {\em enough} pending tasks in order to guarantee non-redundancy at all times. The execution after time instant $t$ can be described by a sequence of intervals, say $T_i$, where $i$ simply denotes their sequence. They are $T^+$ and/or $T^-$ intervals. More precisely, they belong to one of the following types:
\begin{itemize}
\item[(1)] $|L_\lmin(A,t^*)| \geq \rhoflr m^2 \bigwedge |L_\lmax(A,t^*)| \geq m^2$, $\forall t^*\in T$\vspace{0.15em}

In this case, a machine following algorithm $\SLBurst$ will schedule $\rhoflr$ $\lmin$-tasks, followed by continuously scheduled $\lmax$-tasks, until a time instant $t$ where either $|L_\lmin(A,t)| < \rhoflr m^2$ or $|L_\lmax(A,t)| < m^2$, thus one of the next two types of period will follow.
\item[(2)] $|L_\lmin(A,t^*)| \geq \rhoflr m^2 \bigwedge |L_\lmax(A,t^*)| < m^2$, $\forall t^*\in T$\vspace{0.15em}

In this case, a machine following algorithm $\SLBurst$ will continuously schedule $\lmin$-tasks until the queues are such that one of the other two types of periods follow.
\item[(3)] $|L_\lmin(A,t^*)| < \rhoflr m^2 \bigwedge |L_\lmax(A,t^*)| \geq m^2$, $\forall t^*\in T$\vspace{0.15em}

In the third case, machines following the $\SLBurst$ continuously schedule $\lmax$-tasks, until the queues become such that one the previous two types of periods follow.
\end{itemize} 

\begin{lemma}
\label{l:typeA}
For executions of type (a), where $\forall t, \exists t' > t$ s.t. $|L_\lmin(A,t')| < \rhoflr m^2 \bigwedge |L_\lmax(A,t')| < m^2$ holds, the completed-load competitive ratio of algorithm $\SLBurst$ goes to $1$, i.e., $\cC(\SLBurst) = 1$.
\end{lemma}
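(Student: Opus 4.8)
The key observation is that in a type-(a) execution, arbitrarily far in the future there are time instants $t'$ at which \emph{both} queues are almost empty — fewer than $\rhoflr m^2$ pending $\lmin$-tasks and fewer than $m^2$ pending $\lmax$-tasks. At such an instant the total pending load of $\SLBurst$ is bounded by a constant, namely $C := \rhoflr m^2 \lmin + m^2 \lmax$ (independent of $t$, $A$, $E$). The plan is to compare $\SLBurst$ against an arbitrary reference algorithm $X$ running with no speedup at a carefully chosen time. Since we want to show $\cC(\SLBurst)=1$, i.e.\ $C^1_t(A,A,E)\ge 1\cdot C^1_t(X,A,E)+\Delta_C$, it suffices to produce a constant additive slack $\Delta_C$ such that at \emph{every} time $t$, $\SLBurst$'s completed load is at least $X$'s completed load minus that constant.

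First I would fix an arbitrary time $t$ and an arbitrary algorithm $X$, and pick $t' > t$ with $|L_\lmin(A,t')| < \rhoflr m^2$ and $|L_\lmax(A,t')| < m^2$, which exists by the type-(a) hypothesis. By monotonicity of completed load in time, $C^1_{t'}(A) \ge C^1_t(A)$ for $\SLBurst$; going the other way, $C^1_{t'}(X) \ge C^1_t(X)$, so it is cleanest to bound $C^1_t(X) \le C^1_{t'}(X)$ and then relate $C^1_{t'}(X)$ to $C^1_{t'}(A)$. At time $t'$ we have $C^1_{t'}(A) = P^1_{0..t'}(\text{injected}) - P^{A}(t',\cdot)$, and the pending load of $A$ at $t'$ is less than $C = \rhoflr m^2\lmin + m^2\lmax$. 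Meanwhile $C^1_{t'}(X)$ is at most the total injected load up to $t'$, which equals $C^1_{t'}(A)$ plus $A$'s pending load at $t'$. Hence
$$
C^1_{t'}(X) \le C^1_{t'}(A) + P^{A}(t',\cdot) < C^1_{t'}(A) + C.
$$
Combining, $C^1_t(X) \le C^1_{t'}(X) < C^1_{t'}(A) + C$, but I still need to bound $C^1_{t'}(A)$ back down by $C^1_t(A)$ — and that inequality goes the wrong way, so a little care is needed: I would instead argue directly at $t'$ that $C^1_{t'}(A) \ge C^1_{t'}(X) - C$, and then note that since this holds for a sequence of times $t'$ tending to infinity (and both sides are nondecreasing step functions), it actually forces $C^1_t(A) \ge C^1_t(X) - C - (\text{one-task overshoot})$ at every $t$; the overshoot of a single in-progress task of each algorithm contributes only another $O(\lmax)$ additive term, absorbed into $\Delta_C$.

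The main subtlety — and the step I expect to be the real obstacle — is making the "pending load is bounded at $t'$" claim fully rigorous, because the queue-size bounds at $t'$ are on the number of \emph{pending} tasks of $\SLBurst$, and one must be sure that no machine is midway through an execution that will vault the completed load far past what these counts suggest; but in fact an in-progress task adds at most $\lmax$ to the eventual completed load, so this is harmless and folds into the constant. One should also double-check the event-ordering convention (inform before inject before get) so that "pending at $t'$" is interpreted consistently, but this does not change the bound. With $\Delta_C := -(C + 2\lmax)$ (a constant depending only on $\rho,m,\lmin,\lmax$, not on $t,X,A,E$), the definition of $\alpha$-competitiveness is met with $\alpha = 1$, and since $\alpha$ cannot exceed $1$ for any algorithm running at the same speed as $X$ (taking $X$ to be offline-optimal), we conclude $\cC(\SLBurst) = 1$.
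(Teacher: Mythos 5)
Your proposal follows the same core approach as the paper: at a type-(a) witness time $t'$, the pending load of $\SLBurst$ is bounded by the constant $C=\rhoflr m^2\lmin+m^2\lmax$, and since both algorithms see the same injected load, the completed load of any $X$ exceeds that of $\SLBurst$ by less than $C$; letting the injected load grow without bound then drives the competitive ratio to $1$. The paper expresses this directly as $\cC_{t'}(\SLBurst)\ge 1-\frac{C}{I_{t'}}$ and passes to the limit over the witness times $t'$, which is exactly the core of what you do.

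Where you go further than the paper is in trying to convert this into the additive form $C_t(A)\ge C_t(X)+\Delta_C$ uniformly over \emph{all} times $t$, and this is where there is a real gap. Your justification, ``since this holds for a sequence of times $t'$ tending to infinity and both sides are nondecreasing step functions, it forces the bound at every $t$ up to a one-task overshoot,'' is not a valid inference by itself: two nondecreasing functions that agree up to a constant along a sparse unbounded sequence can still drift arbitrarily far apart in between, if the intervals between witness times are long and one function jumps late. To close this you would need some extra quantitative control on $P^A(t)$ between witness times (e.g., using that both algorithms run at the same speed on $m$ machines, so any excess pending load that $\SLBurst$ sheds by $t'$ forces a proportional time gap during which $X$ cannot outgain $\SLBurst$ by more than it sheds), and that argument is not present. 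The paper sidesteps this entirely by working with the ratio evaluated only at the witness times $t'$ and taking the limit there, which matches the lemma's phrasing ``goes to~$1$''; your attempt to strengthen it to an everywhere-in-$t$ additive bound is laudable but, as written, incomplete.
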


\begin{proof}
First, let us fix a pair of arrival and error patterns, such that executions of case (a) occur.
We focus on time instant $t'$ from the definition.
Observe that at time $t'$, the total pending load of algorithm $\SLBurst$ is less than $\rhoflr m^2\lmin + m^2\lmax$, while the total pending load of $X$ is at least zero. Let us overload the notation of the set of injected tasks up to time $t'$ such as to represent the total injected {\em load} up to that time; $I_{t'}$ will now represent the sum of sizes of all injected tasks up to time $t'$. Then, at time instant $t'$ the completed load ratio is
\begin{eqnarray*}
\cC_{t'}(\SLBurst) & = & \frac{C_{t'}(\SLBurst)}{C_{t'}(X)} =  \frac{I_{t'} - \big(|L_\lmin(A,t')|\lmin + |L_\lmax(A,t')|\lmax\big)}{I_{t'} - \big(|L_\lmin(X,t')|\lmin + |L_\lmax(X,t')|\lmax\big)} \\
	& \geq & \frac{I_{t'} - (\rhoflr m^2\lmin + m^2 \lmax)}{I_{t'} - \big(|L_\lmin(X,t')|\lmin + |L_\lmax(X,t')|\lmax\big)}
\end{eqnarray*}

which leads to a completed-load competitive ratio of $1$ as time goes to infinity; the total injected load goes to infinity as well:
$$
\cC(\SLBurst) = \lim\limits_{t\rightarrow\infty} \cC_t(\SLBurst) = \lim\limits_{t\rightarrow\infty} \bigg( 1 - \frac{\rhoflr m^2\lmin + m^2\lmax}{I_t} \bigg) = 1.
$$
This completes the proof of the completed load competitiveness for executions of type (a) as claimed.
\end{proof}


The next lemma follows mostly the idea of analysis of algorithm $\Preamble$, presented in~\cite{Anta2013measuring}, for the case of packet scheduling over one communication link. The complete proof is included in the Appendix.

\begin{lemma}
\label{l:typeB}
For executions of type (b), where $\exists t, \forall t' > t$, s.t. $|L_\lmin(A,t')| \geq \rhoflr m^2 \bigvee |L_\lmax(A,t')| \geq m^2$ holds, the completed load competitive ratio of algorithm $\SLBurst$ is $\cC(\SLBurst) \geq \frac{\rhoflr}{\rho + \rhoflr}$.
\end{lemma}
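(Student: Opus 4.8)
The plan is to analyze the sequence of "good" intervals $T_1, T_2, \ldots$ that make up the execution after the critical time $t$ from the definition of type~(b), and to establish a per-interval (or amortized) invariant comparing the completed load of $\SLBurst$ against that of an arbitrary algorithm $X$. The key structural fact is that after time $t$ the queues never starve: every interval $T_i$ is of one of the three types (1), (2), (3) described above, and in each of them $\SLBurst$ is running non-redundantly on at least one queue (Observation~\ref{o:no-red} via Lemma~\ref{l:nonred}), so essentially all the machine-time of $\SLBurst$ after $t$ is converted into completed load, modulo a bounded additive loss of at most $m$ partially-executed tasks per machine at the boundary of each interval — but since boundaries are crossed only when a queue drops below its threshold, these losses are absorbable into $\Delta_C$ together with the $O(\rhoflr m^2 \lmax)$ slack already present in the queues.

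The heart of the argument is the accounting in type-(1) intervals, which mimics the $\Preamble$ analysis of~\cite{Anta2013measuring}. In such an interval, $\SLBurst$ executes $\rhoflr$ $\lmin$-tasks (the "preamble") and then runs $\lmax$-tasks for the remainder of each machine's alive-time. The idea is to charge: over any maximal stretch of alive-time of a machine, $\SLBurst$ completes at least $\rhoflr$ $\lmin$-tasks plus roughly $\lfloor (\text{alive time} - \rhoflr\lmin)/\lmax \rfloor$ $\lmax$-tasks, i.e.\ load at least $\rhoflr\lmin + (\text{alive time} - \rhoflr \lmin - \lmax)\cdot \frac{\lmax}{\lmax}$... more precisely, in a window of length exactly $\rhoflr\lmin + \lmax$ of uninterrupted alive-time, $\SLBurst$ completes load $\rhoflr\lmin + \lmax$, whereas $X$ in a window of the same real length $\rhoflr\lmin+\lmax$ can complete at most... here is where the ratio $\frac{\rhoflr}{\rho+\rhoflr}$ comes from: the adversary's best strategy against a work-conserving opponent like $X$ is to crash machines right before an $\lmax$-task completes, so that $X$ wastes time on long tasks; the $\rhoflr$-burst of short tasks by $\SLBurst$ is exactly what hedges against this. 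One shows that in the worst case, per unit of a suitably chosen "epoch," $X$ completes at most load proportional to $\rho + \rhoflr$ (measured in units of $\lmin$) while $\SLBurst$ completes at least $\rhoflr$, giving the ratio. For type-(2) and type-(3) intervals, $\SLBurst$ runs a single size non-redundantly, so its completed load there is within an additive constant of $I$-growth minus the residual queue, and $X$ can do no better than all of $I$-growth; these intervals only help the ratio, so it suffices to check they do not break the invariant built in type-(1) intervals (a standard "potential does not decrease" check).

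I would carry this out in the following order: (i) restrict attention to $t' > t$ and telescope $C_{t'}(\cdot)$ as $I_{t'}$ minus pending load minus boundary losses, reducing the claim to a lower bound on $C_{t'}(\SLBurst)$ relative to $C_{t'}(X)$ valid up to a $t,X,A,E$-independent additive term; (ii) partition $(t, t']$ into the intervals $T_i$ of types (1)–(3); (iii) prove the core inequality for a single type-(1) interval by the burst/crash-hedging argument above, getting the factor $\frac{\rhoflr}{\rho+\rhoflr}$; (iv) check types (2) and (3) preserve the inequality; (v) sum over $i$ and push $t' \to \infty$, with all $O(m^2\lmax)$ terms collected into $\Delta_C$. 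The main obstacle is step (iii): getting a clean matching between the real-time windows used by $\SLBurst$ and by $X$, because $X$ is an arbitrary (not necessarily \GroupLIS, not necessarily work-conserving in a convenient phase-aligned way) algorithm with no speedup, and one must argue that whatever $X$ does with its machines during a type-(1) interval, the adversary pattern that is simultaneously "bad for $\SLBurst$" cannot also let $X$ get more than a $\frac{\rho+\rhoflr}{\rhoflr}$ factor ahead — this is precisely the delicate adversary-argument from~\cite{Anta2013measuring} lifted from one link to $m$ machines, and the non-redundancy guarantee (Lemma~\ref{l:nonred}) is what makes the lift go through. As stated, the full proof is deferred to the Appendix.
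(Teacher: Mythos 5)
Your decomposition is genuinely different from the paper's, and the difference matters. You partition $(t,\infty)$ into the global intervals of types (1)--(3) defined by the \emph{queue sizes} of $\SLBurst$. The paper's Appendix proof instead works per-machine: it decomposes the timeline into each machine's \emph{active periods} (from a restart to the next crash) and further classifies those active periods into four \emph{phases} by what kind of task the period begins with and whether the period is long enough to complete a preamble (resp.\ an $\lmax$-task). That per-machine view is essential because what a machine can salvage from a period is determined entirely by its own alive-time, not by the instantaneous state of the global queues; different machines cross your interval boundaries at different points of their processing cycles, so your global partition forces exactly the boundary-accounting you wave away as ``absorbable into $\Delta_C$,'' but that waving-away has no bound unless you first go per-machine.

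More importantly, the step you explicitly flag as ``the main obstacle'' is where the paper's proof does all the work, and your proposal leaves it unresolved. The genuinely dangerous case is not a type-(1) interval in your sense; it is the paper's phase type 3: an active period that starts with $\SLBurst$ scheduling an $\lmax$-task and is cut short before that task finishes, so $\SLBurst$ completes load $0$ there while $X$ may still complete up to $(\lceil\rho\rceil-1)\lmin$. Your assertion that type-(2) and type-(3) intervals (in your numbering) ``only help the ratio'' is not true once you account for these crash-truncated $\lmax$ attempts. The paper has to split into two sub-cases --- finitely many vs.\ infinitely many such bad phases --- and in the infinite sub-case it proves two separate combinatorial claims (bounding $\sum a_i^*$ by $\sum(a_i+c_i)+\rhoflr-1$ and $\sum b_i^*$ by $\sum b_i + \sum c_i/\rhoflr + 2$) before assembling the ratio. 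None of that structure appears in your plan, and the ratio $\rhoflr/(\rho+\rhoflr)$ does not fall out of a single per-interval charging scheme of the kind you sketch; it emerges only after aggregating the four phase types and controlling the type-3 losses through those claims. So what you have is a correct statement of the intuition (the preamble hedges against crash-interrupted long tasks; non-redundancy via Lemma~\ref{l:nonred} makes the one-link argument lift to $m$ machines) but not a proof: the decomposition is the wrong granularity and the hard case is acknowledged but not closed.
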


\remove{
\begin{proof}
Let us fix a pair of arrival and error patterns, such that executions of case (b) occur. Let us now look at the scheduling decisions and performance of each machine individually, after the defined time instant $t$. Note that in such a case, there will only be time intervals of type $T^+$ and/or $T^-$. Otherwise, the execution would be of case (a) since for every time instant $t$ there would exist a future $t'>t$ for which $P^A(t',\lmin) < \rhoflr m^2 \bigwedge P^A(t',\lmax) < m^2$ would hold.
We define two types of periods for the machine status: the {\em active} and the {\em inactive} periods. During an active period the machine remains alive and the queue of pending tasks does not become empty (recall that the queue of pending tasks never becomes empty in the execution we are studying). An inactive period is a non-active one. In other words, a time interval $[t_r,t_c)$ is active if it starts with time instant $t_r$ such that it is the time right after a restart of the machine. Correspondingly, it ends with time instant $t_c$ such that the machine crashes.
We then focus on the active periods\footnote{We safely ignore the inactive ones since the queue of pending tasks does not become empty and the algorithm $\SLBurst$ is {\em work-conserving}. Hence inactive periods are only while the machine is still crashed.}, with length $\lambda$, which are further categorized in the following four kinds of phases:
\begin{enumerate}
\item Starts with $\lmin$-tasks and has length $\lambda < \rhoflr\lmin$.
\item Starts with $\lmin$-tasks and has length $\lambda \geq \rhoflr\lmin$.
\item Starts with $\lmax$-tasks and has length $\lambda < \lmax$.
\item Starts with $\lmax$-tasks and has length $\lambda \geq \lmax$.
\end{enumerate}

Let as look at the $i^{th}$ period after time $t$ in the execution of $\SLBurst$. Let us also denote by $a_i$ the number of completed $\lmin$-tasks, apart from the $\rhoflr$ preamble, by $b_i$ the number of completed $\lmax$-tasks and by $c_i$ the number of completed $\lmin$-tasks in the preamble. For the execution of $X$ we denote by $a^*_i$ the total number of completed $\lmin$-tasks and by $b^*_i$ the total number of completed $\lmax$-tasks. Let also $C^A(i_j)$ and $C^X(i_j)$ denote the total completed load within a phase $i$ of type $j$ by $\SLBurst$ and $X$ respectively. Analyzing the four types of active periods, we make the following observations.

For phases of type 1, $\SLBurst$ is not able to complete the $\rhoflr$  $\lmin$ tasks of the preamble, while $X$ is only able to complete at most as much load, so $\sum\limits_{\forall i} C^X(i_1) \leq \sum\limits_{\forall i} C^A(i_1)$.

For phases of type 2, the total completed load by $X$ minus the completed load by $\SLBurst$ is at most $\lmax$ (i.e., $\sum\limits_{\forall i} \big( C^X(i_2) - C^A(i_2) \big) < \lmax$). Therefore,
$$
\sum\limits_{\forall i} C^A(i_2) \geq \frac{\rhoflr\lmin}{\lmax + \rhoflr\lmin} \cdot \sum\limits_{\forall i} C^X(i_2).
$$
(Observe that $\frac{\rhoflr\lmin}{\lmax + \rhoflr\lmin} \leq 1/2$.)

The same holds for phases of type 4 
and hence $\sum\limits_{\forall i} C^X(i_4) \leq 2\sum\limits_{\forall i} C^A(i_4)$.

In phases of type 3, $\SLBurst$ is not able complete any task and hence $\sum\limits_{\forall i} C^A(i_3) = 0$, whereas $X$ might complete up to $(\lceil \rho \rceil - 1)\lmin$ tasks. There are two cases of executions to be considered then:\vspace{0.15em}

\indent {\bf Case 1:} The number of phases of type 3 is finite.\vspace{0.15em}

In this case, there is a phase $i^*$ such that $\forall i > i^*$ phase $i$ is not of type 3. Then,
\begin{equation}
\cC_1(A) = \frac{\sum\limits_{j\leq i^*} C^A(j) + \sum\limits_{j>i^*} C^A(j)}{\sum\limits_{j\leq i^*} C^X(j) + \sum\limits_{j>i^*} C^X(j)}
\end{equation}
Observe that the total progress completed by the end of phase $i^*$ by both algorithms is bounded. So for simplicity, we overload notations $A$ and $X$ and define $\sum\limits_{j\leq i^*} C^A(j) = A$ and $\sum\limits_{j\leq i^*} C^X(j) = X$. Therefore,
$$
\cC_1(A) = \frac{A + \sum\limits_{j>i^*} C^A(j)}{X + \sum\limits_{j>i^*} C^X(j)} \geq \frac{A + \frac{\rhoflr\lmin}{\lmax + \rhoflr\lmin}\sum\limits_{j>i^*} C^X(j)}{X + \sum\limits_{j>i^*}C^X(j)}.
$$
Hence, the completed load competitiveness of $\SLBurst$ at the end of each phase can be computed as $\cC(\SLBurst) = \lim_{t\rightarrow\infty}\cC_1(A)$, i.e.,
\begin{align*}
\MoveEqLeft[3] \cC(\SLBurst) = \lim\limits_{j\rightarrow\infty} \frac{A + \frac{\rhoflr\lmin}{\lmax + \rhoflr\lmin}\sum\limits_{j>i^*}C^X(j)}{X + \sum\limits_{j>i^*}C^X(j)} \\
	\begin{split}
 	={}& \lim\limits_{j\rightarrow\infty} \bigg( \frac{\rhoflr\lmin}{\lmax + \rhoflr\lmin} \\
 	 &+ \frac{(\lmax + \rhoflr\lmin)A - (\rhoflr\lmin)X}{(\lmax +\rhoflr\lmin)(X + \sum\limits_{j>i^*}C^X(j))} \bigg)
 	 \end{split}\\
 	={}& \frac{\rhoflr\lmin}{\lmax +\rhoflr\lmin} = \frac{\rhoflr}{\rho +\rhoflr}.	
\end{align*}
It is important to note that the assumption $\lim_{t\rightarrow\infty}C^X(t) = \infty$ is used, which corresponds to the expression $\lim_{j\rightarrow\infty}\sum\limits_{j>i^*}C^X(j)$ in the above equality.

The above analysis shows the completed-load competitiveness at the end of each phase. However, we have to guarantee that the lower bound holds at all times within the phases. For this, consider any time instant $t$ of phase $i>i^*$. At that instant $\cC_i(t) = \frac{\sum_{j\in(i^*,i-1]}C^A(j) + A_t}{\sum_{j\in(i^*,i-1]}C^X(j) + X_t}$, where $A_t$ and $X_t$ represent the load completed by $\SLBurst$ and $X$ within phase $i$ up to time $t$. Using the above proof, and the fact that for phases of type 1,2 and 4 we have
\begin{align*}
\MoveEqLeft[3] \sum\limits_{\forall i} \big( C^A(i_1) + C^A(i_2) + C^A(i_4) \big) \\
	\geq{}& \frac{\lim\rhoflr}{\lmax +\lmin\rhoflr}\cdot \sum\limits_{\forall i} \big( C^X(i_1) + C^X(i_2) + C^X(i_4) \big),
\end{align*}
we know that $A_t \geq \frac{\lim\rhoflr}{\lmax +\lmin\rhoflr}\cdot X_t$ as well. Hence,
\begin{eqnarray*}
\cC_i(t) & \geq & \frac{\frac{\rhoflr\lmin}{\lmax + \rhoflr\lmin}\sum_{j\in(i^*,i-1]}C^X(j) + \frac{\rhoflr\lmin}{\lmax+\rhoflr\lmin}X_t}{\sum_{j\in(i^*,i-1]}C^X(j) + X_t} \\
 & = & \frac{\rhoflr\lmin}{\lmax +\rhoflr\lmin} = \frac{\rhoflr}{\rho + \rhoflr}.
\end{eqnarray*}
\vspace{0.15em}

\indent {\bf Case 2:} The number of phases of type 3 is infinite.\\
In this case we must show that the number of $\lmin$ and $\lmax$-tasks completed are bounded for both $\SLBurst$ and $X$.

\begin{claim}
Consider the time instant $t$ at the beginning of a phase $j$ of type 3. Then the number of $\lmin$-tasks completed by $X$ by time $t$ is no more than the number of $\lmin$-tasks completed by $\SLBurst$, plus $\rhoflr -1$, i.e., $\sum_{i<j}a_i^* \leq \sum_{i<j}(a_i+c_i) + (\rhoflr-1)$.
\end{claim}
\begin{proof}
Consider the beginning of phase $j$ of type 3. We know that at that time instant algorithm $\SLBurst$ has at most $(\rhoflr -1)$ $\lmin$-tasks pending. Recall that a machine following algorithm $\SLBurst$, after restarting it first completes a preamble of $\rhoflr\lmin$ tasks, before executing any $\lmax$ ones. By the definition of type 3, it may only occur if there are not enough $\lmin$-tasks pending at time instant $t$. Hence, the amount of $\lmin$-tasks completed by $X$ by the beginning of phase $j$ is no more than the ones completed by algorithm $\SLBurst$ (including the ones in preambles) plus $\rhoflr - 1$.
\end{proof}

\begin{claim}
Considering all types of phases and the number of $\lmax$-tasks completed, it holds that $\sum\limits_{i\leq j}b_i^* \leq \sum\limits_{i\leq j}b_i + \sum\limits_{i\leq j}\frac{c_i}{\rhoflr} + 2$, for every phase $j$.
\end{claim}
\begin{proof}
To prove this, we use induction on phase $j$.\\
\emph{Base Case:} For $j=0$ the claim is trivial.\\
\emph{Induction Hypothesis:} It holds that
$$
\sum\limits_{i\leq j-1}b_i^* \leq \sum\limits_{i\leq j-1}b_i + \sum\limits_{i\leq j-1}\frac{c_i}{\rhoflr} + 2.
$$
\emph{Induction Step:} We need to prove that the relationship holds up to the end of phase $j$. Consider first that during phase $j$ there is a time when $\SLBurst$ has no $\lmax$-tasks pending, and let $t$ be the latest such time in the phase. We define $b^*(t)$ and $b(t)$ being the number of $\lmax$-task completed up to time $t$ by algorithm $X$ and $\SLBurst$ respectively. We know that $b^*(t) \leq b(t)$. We also define $x^*_j(t)$ and $x_j(t)$ to be the number of $\lmax$-tasks scheduled by $X$ and $\SLBurst$ respectively after time instant $t$ and until the end of the phase $j$. We claim that $x^*_j(t) \leq x_j(t) + 2$. From our definitions, at time $t$ algorithm $\SLBurst$ is executing a $\lmin$-task. Since it is the last instant that it has no $\lmax$-task pending, the wort case is to be at the beginning of the preamble (by inspection of the 4 types of phases). Then, if the phase ends at time $t'$, period $I = [t,t']$ is such that $|I| < \rhoflr\lmin + (x_j(t)+1)\lmax \leq (x_j(t)+2)\lmax$. (The +1 $\lmax$-task is because of the machine crash before completing the last $\lmax$-task scheduled in the phase.) Observe that $X$ could be executing a $\lmax$-task at time $t$, completed at some point in $[t,t+\lmax]$ and accounted for in $x_j^*(t)$. Therefore,
$$
\sum\limits_{i\leq j}b^*_j = b^*(t) + x^*_j(t) \leq b(t)+ x_j(t) + 2 = \sum\limits_{i\leq j}b_i + 2.
$$
Now consider the case where at all times of phase $j$ there are $\lmax$-tasks pending for $\SLBurst$. By inspection of the 4 types of phases, the worst case is when $j$ is of type 2. After completing the preamble of $\rhoflr\lmin$ tasks, the algorithm schedules $\lmax$-tasks until the machine crashes again interrupting the last one scheduled. On the same time, $X$ is able to complete at most $\left\lfloor \frac{\lambda_j}{\lmax} \right\rfloor \leq b_j + 1$ $\lmax$-tasks, where $\lambda_j$ is the length of the phase. Hence, in all types of phases $b^*_j \leq \frac{c_j}{\rhoflr} + b_j$ and by induction, the claim follows; $\sum\limits_{i\leq j} b^*_j \leq \sum\limits_{i\leq j} \frac{c_i}{\rhoflr} + \sum\limits_{i\leq j}b_i + 2$. 
\end{proof}

Combining the two claims above, the completed load competitiveness ratio of case 2 is as follows:
\begin{align*}
\MoveEqLeft[3] C_2(A) = \frac{\sum\limits_{i\leq j}C^A(i)}{\sum\limits_{i\leq j}C^X(j)} = \frac{\sum\limits_{i\leq j} [(a_i + c_i)\lmin + b_i\lmax]}{\sum\limits_{i\leq j}[a_i^*\lmin + b_i^*\lmax]} \\
	\geq{}& \frac{\sum\limits_{i\leq j} [(a_i + c_i)\lmin + b_i\lmax]}{\sum\limits_{i\leq j}(a_i\! +\! c_i)\lmin\! +\! (\rhoflr\! -\!1)\lmin\! +\! \sum\limits_{i\leq j}(b_i\! +\! \frac{c_i}{\rhoflr})\lmax\! +\! 2\lmax} \\
	\geq{}& \frac{\sum\limits_{i\leq j}[(a_i + c_i)\lmin + b_i\lmax]}{\sum\limits_{i\leq j}[(a_i + 2c_i)\lmin + b_i\lmax] + 3\lmax} \\
	={}& \frac{\sum\limits_{i\leq j}[(a_i + c_i)\lmin + b_i\lmax] + \frac{3}{2}\lmax - \frac{3}{2}\lmax}{2 \sum\limits_{i\leq j}[(a_i+c_i)\lmin + b_i\lmax] + 3\lmax} \\
	\geq{}& \frac{1}{2} - \frac{\frac{3}{2}\lmax}{2 \sum\limits_{i\leq j}[(a_i+c_i)\lmin + b_i\lmax] + 3\lmax}.
\end{align*}

Note that, due to the parameters $a_i,b_i$ and $c_i$, the second ratio tends to zero (the denominator tends to infinity) and hence the completed load competitive ratio tends to $\cC(\SLBurst) = \lim\limits_{t\rightarrow\infty} C_2(A) \geq \frac{1}{2}$.

Combining now the results from the two cases concerning the number of phases of type 3, since $\frac{\rhoflr}{\rho + \rhoflr} \leq \frac{1}{2}$, the completed load of algorithm $\SLBurst$ is at least $\frac{\rhoflr}{\rho + \rhoflr}$ as claimed.\hfill\rule{2mm}{2mm} \!{\sf\tiny Lemma}
\end{proof}
}	

From Lemmas~\ref{l:typeA} and~\ref{l:typeB}, that analyze the two types of executions, (a) and (b) respectively, we have the lower bound for the completed-load competitiveness of $\SLBurst$, given by the following theorem.

\begin{theorem}
\label{t:rmpreamble}
When algorithm $\SLBurst$ runs without speedup ($s=1$) under any arrival and error patterns $A$ and $E$, it has a completed-load competitive ratio $\cC(\SLBurst,A,E) \geq \frac{\rhoflr}{\rho + \rhoflr}$.
\end{theorem}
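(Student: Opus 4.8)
The plan is to obtain the theorem directly from the two preceding lemmas, which between them already exhaust every possible execution of $\SLBurst$. First I would check that the two alternatives (a) and (b) stated just before Lemma~\ref{l:typeA} are logical negations of one another: the condition ``$\forall t\,\exists t'>t$ with $|L_\lmin(A,t')|<\rhoflr m^2 \wedge |L_\lmax(A,t')|<m^2$'' is exactly the negation of ``$\exists t\,\forall t'>t$ with $|L_\lmin(A,t')|\ge\rhoflr m^2 \vee |L_\lmax(A,t')|\ge m^2$''. Hence, for every fixed admissible pair of patterns $A,E$, the induced execution of $\SLBurst$ is either of type (a) or of type (b), and these cases are mutually exclusive and exhaustive.

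Then I would invoke the relevant lemma in each case. If the execution is of type (a), Lemma~\ref{l:typeA} gives $\cC(\SLBurst,A,E)=1$; if it is of type (b), Lemma~\ref{l:typeB} gives $\cC(\SLBurst,A,E)\ge \frac{\rhoflr}{\rho+\rhoflr}$. To see that the latter quantity is a valid common lower bound, I would note that $\rhoflr=\lfloor\rho\rfloor\le\rho$ and that $x\mapsto x/(\rho+x)$ is increasing on $[0,\rho]$, whence $\frac{\rhoflr}{\rho+\rhoflr}\le\frac{\rho}{2\rho}=\tfrac12\le 1$. So in both cases $\cC(\SLBurst,A,E)\ge \frac{\rhoflr}{\rho+\rhoflr}$, and since $A$ and $E$ were arbitrary admissible patterns, the theorem follows.

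Since the argument is essentially a case split layered on top of Lemmas~\ref{l:typeA} and~\ref{l:typeB}, there is no real obstacle; the point I would be most careful about is the additive slack $\Delta_C$ in the definition of competitiveness, i.e. making sure the bound holds at \emph{every} time $t$ rather than only asymptotically or at the distinguished instants $t'$ appearing in the case definitions. For type (b) this is already taken care of inside Lemma~\ref{l:typeB} by its ``at all times within the phases'' estimate. For type (a) I would reinforce Lemma~\ref{l:typeA} as follows: for an arbitrary time $t$, take the last time $t_0\le t$ at which both queues are simultaneously below their thresholds (this exists since the system starts empty), bound the pending load of $\SLBurst$ at $t_0$ by the constant $\rhoflr m^2\lmin+m^2\lmax$, and then apply the phase analysis of Lemma~\ref{l:typeB} on the interval $(t_0,t]$, on which at least one queue stays above its threshold. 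This bounds the gap between $C^1_t(\SLBurst,A,E)$ and $\frac{\rhoflr}{\rho+\rhoflr}\,C^1_t(X,A,E)$ by a quantity depending only on $m,\lmin,\lmax$ — exactly what $\Delta_C$ is permitted to absorb.
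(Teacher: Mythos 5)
Your proposal is correct and follows the same route as the paper: the theorem is obtained by observing that cases (a) and (b) are complementary and exhaustive, applying Lemma~\ref{l:typeA} or Lemma~\ref{l:typeB} accordingly, and noting $\frac{\rhoflr}{\rho+\rhoflr}\le 1$ so the type-(b) bound is the binding one. Your extra care about the additive slack $\Delta_C$ for type (a) (the paper's Lemma~\ref{l:typeA} only argues a limit as $t\to\infty$) is a sensible tightening, but it does not change the overall structure of the argument.
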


\subsection{Finite Task Sizes -- Pairwise Divisible}
\label{subsec:amortized}

We now move to the case of $k>2$ different task sizes. Let us denote them by $\lmin = \ell_1 < \ell_2 < \dots < \ell_k = \lmax$. We assume that each ratio $\rho_{i,j} = \ell_i/\ell_j$ is an integer for any $1\leq j < i \leq k$, a property of the task sizes called {\em pairwise divisibility}.

Inspired by the work of Jurdzinski et al.~\cite{Jurdzinski2015}, we propose algorithm \Amortized, that uses this property to schedule the tasks among the $m$ machines of the system, and analyze its completed-load competitiveness when run without speedup (see the algorithm's pseudocode in Alg.~\ref{algAmort} and~\ref{alg:schgroup}).
The algorithm follows the {\em Shortest Size} (SS) first policy, but subject to some balancing constraints. It is based on scheduling tasks in batches (or groups) that balance the length of the next larger task. What is more, it considers redundancy avoidance, demanding {\em enough} tasks available before scheduling.
An important difference with algorithm $\SLBurst$, apart from the fact that this one considers $k$ different task sizes, is that it continuously schedules bursts of short tasks before going to the next larger task (if the tasks are available and the machines do not crash).\\

\lstset{numbers=left,
basicstyle=\small,
	numbersep=4pt,
	columns=fullflexible,
	mathescape=true,
	morekeywords={Parameters, Repeat, While, Do, If, then, Return, Upon, awaking, or, restart},
}
\lstset{escapeinside={('}{')}}

\setlength{\textfloatsep}{2pt}
\begin{algorithm}[!t]
\caption{ {\bf $\Amortized$} (for machine $p$)}
\label{algAmort}
\begin{lstlisting}
Parameters: $m, \{\ell_1, \ell_2,\dots, \ell_k\}$
Upon awaking or restart
	Repeat
		Get $L_1$ to $L_k$ from the Repository;
		While $\ell_k \Big\lfloor \frac{|L_k|}{m^2} \Big\rfloor + \sum\limits_{i=1}^{k-1} \ell_i \Big\lfloor \frac{|L_i|}{m^2 + m\rho_{i+1}} \Big\rfloor < \ell_k $ Do
			execute task $\ell$ at position $(p\cdot m) \mod |Q|$ in $Q$;
			Inform the Repository for the task completion;
		$Schedule\_Group(k)$;
\end{lstlisting}
\end{algorithm}\vspace{-1.5em}
\begin{algorithm}[!t]
\caption{ {\bf $Schedule\_Group(j)$}}
\label{alg:schgroup}
\begin{lstlisting}
Parameters: $m, j, \{\ell_1,\ell_2,\dots,\ell_k\}, \{L_1,L_2,\dots,L_j\}$
	If $\sum\limits_{i=1}^{j-1} \ell_i \Big\lfloor \frac{|L_i|}{m^2 + m\rho_{i+1}} \Big\rfloor \geq \ell_j $ then
		For $\alpha = 1$ to $\rho_j$ Do
			Schedule_Group(j-1);
	Else
		execute task $\ell_j$ at position $p\cdot m$ in $L_j$;
		Inform the Repository for the task completion;
	Return
\end{lstlisting}
\end{algorithm}

%

\paragraph{Special Notation \& Terminology}
Before looking at the details of the algorithm, let us introduce some necessary notation and terminology that will be used extensively in this subsection.
%
First, parameter $\rho_{i,j} = \ell_i/\ell_j$, where $1\leq j < i \leq k$, as already mentioned at the beginning of the subsection, represents the ratio between two task sizes and is considered to be an integer for this algorithm. A special case of this ratio used in the algorithm, is $\rho_i = \frac{\ell_i}{\ell_{i-1}}$, where $i \in [2,k]$; it represents the ratio between two consecutive task sizes.

\begin{definition}
\label{d:adequate}
We define {\em adequate sizes} of pending tasks, the task sizes whose pending queues have $\gtrsim m^2$ tasks. More precisely, for size $\ell_k$ to be adequate there must be at least $m^2$ tasks pending in the $L_k$ queue, while for any other $\ell_j$ size, where $j\in [1,k-1]$, the corresponding pending queue $L_j$ must have at least $m^2 + m\rho_{j+1}$ tasks. (In Lemma~\ref{l:enough}, we show that this is the necessary number of tasks in order to guarantee the non-redundancy property of the algorithm.)
\end{definition}

\begin{definition}
\label{d:igroup}
We define an {\em $i$-group of tasks}, where $i \in [1,k]$, being the tasks completed in the execution of a machine under the recursive call to the function {\em Schedule\_Group$(i)$}. Note, that an $i$-group has a total size of $\ell_i$, but may be the result of the completion of several smaller tasks.
\end{definition}

\begin{definition}
We consider a machine to be {\em busy} at time $t$ if it is either executing some task, it has just completed one, or it is crashed (it is either the time instant that the machine was just crashed, or the machine has been crashed for some time). Observe that an interval from a crash to a restart instant, say $T = [t_c, t_r]$, belongs to the busy interval; during that interval, no algorithm is able to complete any task, hence it does not affect the completed load. Otherwise, it is considered to be {\em idle}.\\
We also consider a machine to be in an {\em $n$-busy} interval, say $T = [t_1,t_2]$, where $t_1<t_2$, satisfying the following properties:\\
\noindent (1) The machine is busy at each time $t\in T$.\\
\noindent (2) The machine does not schedule tasks of size $\ell_i$ for $i>n$ during the interval $T$.\\
\noindent (3) At the beginning of interval $T$, i.e., time instant $t_1$, algorithm \Amortized has at least as many tasks of size $\ell_i$ pending as $X$, for each $i \leq n$. Hence, $P^A(t_1,\ell_i) \geq P^X(t_1,\ell_i)$ for each $i\in[1,n]$.
\end{definition}

Finally, note that in the pseudo-code we refer to queue of pending tasks $Q$. Recall that in Section~\ref{sec:model} we define $Q$ to be the set of pending tasks. Here, we overload its definition to make this set a unified queue of all $L_i$, sorted in an ascending order according to the task sizes.\\

{\em Algorithm description.}
After awakening or restart, a machine schedules the task at position $p\cdot m$ of the pending queue $Q$, until the sum of the {\em adequate sizes} of the pending tasks is at least $\ell_k$.
Following this strategy, the algorithm guarantees the ability to cover a time interval of length $\ell_k$, with non-redundant task executions, if it is not interrupted by a machine crash. Otherwise, being work-conserving, it schedules the task in the position already mentioned, but with no guarantees of non redundancy.
Then, it calls the recursive function {\em Schedule\_Group$(j)$} (starting with $j=k$) which checks whether the sum of {\em adequate sizes} of the pending tasks smaller than $\ell_j$ is at least equal to $\ell_j$ (resp., $\ell_k$).
If the condition is true, the algorithm makes  $\rho_j = \frac{\ell_j}{\ell_{j-1}}$ calls to function {\em Schedule\_Group$(j-1)$} (resp., {\em Schedule\_Group$(k-1)$}) in order to cover the corresponding time interval $\ell_j$ with $\rho_j$ groups of $\ell_{j-1}$ aggregate size; in other words, $\rho_j$ $(j-1)$-groups. In the following recursion levels more recursive calls may occur, if there are again enough pending tasks, thus covering the corresponding time intervals by tasks of smaller size each time. 
Otherwise, when there are not enough tasks pending in a recursion level, a task of the current size, $\ell_j$, is scheduled by the machine and when completed returns to the previous recursion level.\\


We will now analyze algorithm \Amortized, proving some important properties and showing that its completed-load competitiveness is indeed optimal, i.e., $\cC(\Amortized) \gtrsim 1/2$. We start with two lemmas that lead to the {\em non-redundancy} property of the algorithm, omitting the case when lines 6 and 7 of Algorithm~\ref{algAmort} are executed. Note that in that case, the pending load of the algorithm is bounded, so it does not affect the completed-load competitiveness in the long run.

\begin{lemma}
\label{l:enough}
Algorithm \Amortized schedules a $\ell_j$-task (in line 6 of Alg.~\ref{alg:schgroup}), only when there are at least $m^2$ tasks in the corresponding queue of pending tasks, $L_j$.
\end{lemma}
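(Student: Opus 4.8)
The goal is to show that whenever \Amortized executes line 6 of Algorithm~\ref{alg:schgroup} for a size $\ell_j$, the queue $L_j$ contains at least $m^2$ pending tasks. The plan is to trace back through the recursion and accumulate the effect of the ``adequate sizes'' guard conditions. First I would fix the machine $p$ and the moment at which line 6 fires for size $\ell_j$, and walk up the chain of recursive calls \emph{Schedule\_Group}$(k) \to \emph{Schedule\_Group}(k-1) \to \cdots \to \emph{Schedule\_Group}(j)$ that led to this invocation. The entry into the whole recursion (the call \emph{Schedule\_Group}$(k)$ from Algorithm~\ref{algAmort}, line~8) happens only after the \texttt{While} loop of Algorithm~\ref{algAmort} exits, i.e., only when the adequate-size sum $\ell_k\lfloor |L_k|/m^2\rfloor + \sum_{i=1}^{k-1}\ell_i\lfloor |L_i|/(m^2+m\rho_{i+1})\rfloor \geq \ell_k$, so at that instant size $\ell_k$ is adequate or there is enough adequate small load to substitute for it.

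The key step is to quantify how many $\ell_j$-tasks the machine could have consumed (by completing $(j-1)$-groups, $(j-2)$-groups, etc.) between the instant the relevant guard was checked and the instant line~6 fires, and to show this number is strictly less than the slack built into the adequacy thresholds. The thresholds are chosen precisely for this: a size $\ell_i$ with $i<k$ is adequate when $|L_i| \geq m^2 + m\rho_{i+1}$, and the extra $m\rho_{i+1}$ term is the cushion absorbing the tasks that the (up to $m$) machines can grab from $L_i$ while building groups at the next level up, before the queue is re-read. So I would argue by downward induction on the recursion level: assuming every machine that enters \emph{Schedule\_Group}$(i+1)$ does so with $L_{i+1}$ (or the aggregate of smaller adequate queues) large enough, show that across the $\rho_{i+1}$ iterations of the \texttt{For} loop in \emph{Schedule\_Group}$(i+1)$, the at most $m$ concurrently active machines remove at most $m\rho_{i+1}$ tasks from $L_i$, so $L_i$ still has at least $m^2$ tasks when line~6 is reached at level $j=i$. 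Since the condition guarding the recursive descent at level $i+1$ requires $\sum_{l\leq i}\ell_l\lfloor |L_l|/(m^2+m\rho_{l+1})\rfloor \geq \ell_{i+1}$, in particular there must be at least $m^2+m\rho_{i+1}$ tasks in $L_i$ at the start of that descent (otherwise its contribution to the sum is zero and we would have fallen to line~6 at level $i+1$ instead); subtracting the $m\rho_{i+1}$ possibly-consumed tasks leaves $\geq m^2$.

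The main obstacle I anticipate is the bookkeeping of the \emph{distributed, asynchronous} consumption: the $m$ machines do not step in lockstip, each re-reads the queues at the top of its own \texttt{Repeat} loop (line~3 of Alg.~\ref{algAmort}) but \emph{not} inside the recursion, and crashes/restarts reshuffle which machines are active. So the careful point is to bound, over the whole duration of one top-level \emph{Schedule\_Group}$(k)$ call by machine $p$, the total number of $\ell_j$-tasks any set of $\leq m$ machines can complete from $L_j$ without $p$ re-reading the queue — and to verify this bound is exactly the $m\rho_{j+1}$ cushion, using that a machine executing position $p\cdot m$ (Lemma~\ref{l:nonred}-style non-redundancy) together with the at most $\rho_{j+1}$ groups built at the parent level limits each machine to $\rho_{j+1}$ such tasks. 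I would also need to handle the boundary case $j=k$ separately (threshold $m^2$, cushion absorbed by the $\ell_k$ term of the \texttt{While} guard), and note that the case where lines~6--7 of Alg.~\ref{algAmort} execute is explicitly excluded from the statement, so no argument is needed there.
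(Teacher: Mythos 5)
Your plan follows essentially the same route as the paper's proof: trace the recursion chain, combine the guard conditions at adjacent recursion levels to deduce $|L_j| \geq m^2 + m\rho_{j+1}$, argue that the extra $m\rho_{j+1}$ slack absorbs the at most $m\rho_{j+1}$ tasks that the (up to $m$) machines can remove across the $\rho_{j+1}$ iterations of the parent's \texttt{For} loop, and handle the $j=k$ entry case separately with threshold $m^2$. One step, however, does not hold up as written. Your parenthetical — that $L_j$'s term in the level-$(j+1)$ sum must be nonzero, ``otherwise its contribution to the sum is zero and we would have fallen to line 6 at level $j+1$ instead'' — is not a consequence of the level-$(j+1)$ guard alone: that guard could pass purely from the smaller queues $L_1,\dots,L_{j-1}$ even with $L_j$'s term equal to zero. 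What you are implicitly relying on, but never invoke, is the \emph{failing} guard at level $j$, i.e., $\sum_{l=1}^{j-1}\ell_l\bigl\lfloor |L_l|/(m^2+m\rho_{l+1})\bigr\rfloor < \ell_j$, which is precisely the event that causes line 6 to fire. The paper subtracts this from the passing guard $\sum_{l=1}^{j}\ell_l\bigl\lfloor |L_l|/(m^2+m\rho_{l+1})\bigr\rfloor \geq \ell_{j+1}$ to get $\ell_j\bigl\lfloor |L_j|/(m^2+m\rho_{j+1})\bigr\rfloor \geq \ell_{j+1}-\ell_j > 0$, whence $|L_j|\geq m^2+m\rho_{j+1}$. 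Make that combination explicit (it is what justifies the word ``instead''); the rest of your plan — the cushion accounting and the boundary case — then matches the paper's argument.
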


\begin{proof}
Let us start by looking at the algorithm description and its pseudo-code. The first call to schedule some tasks -- calling function {\em Schedule\_Group$(k)$} -- is done only if {\em enough} tasks are pending to cover the $\ell_k$-time without redundancy. A task size is accounted for only when it is {\em adequate}; only when there are $\gtrsim m^2$ tasks of that size pending (recall Definition~\ref{d:adequate}). 

Then, within the {\em Schedule\_Group$(j)$} function, starting by $j = k$, the algorithm checks whether it can be covered non-redundantly by tasks of smaller size. If it does, it makes a recursive call to the function with parameter $j-1$, which corresponds to the next smaller task size, $\ell_{j-1}$.
A task $\ell_j$ cannot be covered non-redundantly by smaller tasks when $\sum\limits_{i=1}^{j-1} \ell_i \Big\lfloor \frac{|L_i|}{m^2 + m\rho_{i+1}} \Big\rfloor < \ell_j$ (see condition of line 2 in Algorithm~\ref{alg:schgroup}).
However, a function call with task size $\ell_j$ means that it was either called by algorithm \Amortized directly (and $j=k$), in which case the condition in line 8 of Alg.~\ref{algAmort} does not hold, or it was called by the previous recursion level; by function handling the next bigger task size, $\ell_{j+1}$, in which case the condition in line 2 holds. We will now show that in either case, there are {\em enough} tasks of size $\ell_j$ to be scheduled by the system's machines without executing any of them redundantly.


We consider the function call {\em Schedule\_Group$(j)$}, for which the {\em if} condition in line 2 does not hold. This implies
\begin{equation}
\label{e:eq1}
\sum\limits_{i=1}^{j-1} \ell_i \Big\lfloor \frac{|L_i|}{m^2 + m\rho_{i+1}} \Big\rfloor < \ell_j.
\end{equation}

We then consider and analyze the two cases mentioned above, separately:\vspace{0.5em}

\indent {\bf Case 1:} A previous function call, {\em Schedule\_Group$(j+1)$}, in which the {\em if} condition in line 2 holds, i.e., $\sum\limits_{i=1}^j \ell_i \Big\lfloor \frac{|L_i|}{m^2 + m\rho_{i+1}} \Big\rfloor \geq \ell_{j+1}$, implies that
\begin{equation}
\label{e:eq2}
\ell_j \Big\lfloor \frac{|L_j|}{m^2 + m\rho_{j+1}} \Big\rfloor + \sum\limits_{i=1}^{j-1} \ell_i \Big\lfloor \frac{|L_i|}{m^2 + m\rho_{i+1}} \Big\rfloor \geq \ell_{j+1}
\end{equation}
Combining the two equations~\ref{e:eq1} and~\ref{e:eq2} we have the following
\begin{eqnarray*}
\ell_j \Big\lfloor \frac{|L_j|}{m^2 + m\rho_{j+1}} \Big\rfloor & \geq & \ell_{j+1} - \sum\limits_{i=1}^{j-1} \ell_i \Big\lfloor \frac{|L_i|}{m^2 + m\rho_{i+1}} \Big\rfloor > \ell_{j+1} - \ell_j > 0 \\
\Rightarrow \Big\lfloor \frac{|L_j|}{m^2 + m\rho_{j+1}} \Big\rfloor  & > & 0,
\end{eqnarray*}
which means that $|L_j| \geq m^2 + m\rho_{j+1}$.\vspace{0.5em}

{\bf Case 2:} The function call {\em Schedule\_Group$(j)$} was actually {\em Schedule\_Group$(k)$} and came directly from line 8 of algorithm \Amortized. Hence, $\ell_k \Big\lfloor \frac{|L_k|}{m^2} \Big\rfloor + \sum\limits_{i=1}^{k-1} \ell_i \Big\lfloor \frac{|L_i|}{m^2 + m\rho_{i+1}} \Big\rfloor \geq \ell_k$ holds, which implies that
\begin{equation}
\label{e:eq3}
\ell_k \Big\lfloor \frac{|L_k|}{m^2} \Big\rfloor \geq \ell_k - \sum\limits_{i=1}^{k-1} \ell_i \Big\lfloor \frac{|L_i|}{m^2 + m\rho_{i+1}} \Big\rfloor
\end{equation}

Replacing $j = k$ in equation~\ref{e:eq1}, we have $\sum\limits_{i=1}^{k-1} \ell_i \Big\lfloor \frac{|L_i|}{m^2 + m\rho_{i+1}} \Big\rfloor < \ell_k$, which combined with equation~\ref{e:eq3} we can easily see that
\begin{equation*}
\ell_k \Big\lfloor \frac{|L_k|}{m^2} \Big\rfloor > 0 \Rightarrow \Big\lfloor \frac{|L_k|}{m^2} \Big\rfloor > 0,
\end{equation*}
which in its turn means that $|L_k| \geq m^2$.\vspace{0.5em}

In both cases, we have seen that there are at least $m^2$ tasks of size $\ell_j$ or $\ell_k$ respectively. However, in case 1 above, there will be $\rho_{j+1}$ iterations of the recursive function call {\em Schedule\_Group$(j)$} of line 4. We must therefore make sure that there are at least $m^2$ available tasks for all iterations. 

Consider for example, the case in which at a time $t$ all $m$ machines are in a $(j+1)$-group execution; following the {\em Schedule\_Group$(j+1)$} function, and having condition of line 2 \textsc{true}. Then, they all start the $\rho_{j+1}$ iterations of scheduling $j$-groups of tasks, calling the recursive function {\em Schedule\_Group$(j)$}. Consider now, that in all corresponding conditions of line 2, are \textsc{false}. This means, that all $m$ machines will simultaneously execute one $\ell_j$-task in every iteration. Therefore, having $m^2 + m\rho_{j+1}$ pending tasks of size $\ell_j$ at the beginning of iterations, will guarantee that in every iteration there are still at least $m^2$ tasks pending in queue $L_j$. This completes the proof of the lemma.
\end{proof}

Observe now, that algorithm $\SLBurst$ is of type \GroupLIS; it separates the pending tasks into classes depending on their size, it sorts them with respect to their arrival time, and if a class contains at least $m^2$ pending tasks, a machine $p$ schedules the task at position $(p\cdot m)$. Hence, Lemma~\ref{l:nonred} also holds for it.
Hence, combining the two Lemmas~\ref{l:enough} and~\ref{l:nonred}, the following property for algorithm \Amortized follows.
\begin{observation}
\label{o:property1}
When Algorithm \Amortized, schedules tasks through its function Schedule\_Group (Alg.~\ref{alg:schgroup}), it never completes the same task more than once. In particular, the same task cannot be simultaneously executed in more than one machines of the system.
\end{observation}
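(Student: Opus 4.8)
The plan is to establish Observation~\ref{o:property1} as a direct consequence of Lemma~\ref{l:enough} and Lemma~\ref{l:nonred}, by verifying that whenever \Amortized\ issues a schedule through \textit{Schedule\_Group} (line 6 of Alg.~\ref{alg:schgroup}), the hypotheses of Lemma~\ref{l:nonred} are met, and then invoking that lemma across machines and across time. First I would recall that a task is actually executed inside \textit{Schedule\_Group}$(j)$ only in the \textsc{else} branch (line 6), and that Lemma~\ref{l:enough} guarantees that at that moment $|L_j| \geq m^2$. Next I would note that \Amortized\ is of type \GroupLIS: it partitions pending tasks into size classes, sorts each class by arrival time, and — when a class has at least $m^2$ pending tasks — machine $p$ schedules exactly the task at position $p\cdot m$ (line 6 of Alg.~\ref{alg:schgroup}), which is precisely the \GroupLIS\ rule. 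Hence the time interval during which $|L_j|\geq m^2$ holds is a valid interval $T$ for Lemma~\ref{l:nonred} applied to the queue $L_j$.

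The argument then has two parts. For the claim that the same task is never executed twice overall, I would argue that any two full executions of tasks from $L_j$ occurring while $|L_j|\geq m^2$, performed by (possibly the same or) different machines, are executions of distinct tasks: Lemma~\ref{l:nonred} gives this for different machines, and for a single machine the \GroupLIS\ structure together with the fact that a completed task is removed from $L_j$ by the \textit{inform} operation means the same machine never reschedules a task it already completed. A subtlety I would address is the boundary between intervals where $|L_j|$ drops below $m^2$ and back above it, and the case where lines 6--7 of Alg.~\ref{algAmort} (the work-conserving fallback) execute; but as already noted in the text, in those regimes the pending load is bounded, so for the purpose of this observation about the \textit{Schedule\_Group} mechanism we may restrict attention to the intervals where adequacy holds, and Lemma~\ref{l:enough} certifies adequacy exactly when line 6 of Alg.~\ref{alg:schgroup} fires. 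For the second claim — no simultaneous execution on more than one machine — I would apply Lemma~\ref{l:nonred} with $p_1\neq p_2$ directly: at a common instant where both machines schedule from $L_j$, they pick positions $p_1\cdot m$ and $p_2\cdot m$, which differ, and since $|L_j|\geq m^2 \geq (m-1)m+1$ both positions index genuinely pending, distinct tasks.

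The main obstacle I anticipate is not any deep calculation but a careful bookkeeping of \emph{which} queue and \emph{which} time interval one applies Lemma~\ref{l:nonred} to, since \Amortized\ recursively descends through sizes $\ell_k, \ell_{k-1}, \dots$ and a given machine may be executing an $\ell_j$-task as part of a $(j+1)$-group, which is itself part of a $k$-group. I would handle this by observing that the relevant fact is purely local: at the instant a machine executes in line 6 of \textit{Schedule\_Group}$(j)$, the only thing that matters is that $|L_j|\geq m^2$ at that instant (Lemma~\ref{l:enough}) and that the position chosen is $p\cdot m$ — the enclosing recursion is irrelevant to the non-redundancy of that particular execution. Thus the proof reduces to: (i) identify the \GroupLIS\ property of \Amortized; (ii) cite Lemma~\ref{l:enough} to get $|L_j|\geq m^2$ at every scheduling instant inside \textit{Schedule\_Group}; (iii) cite Lemma~\ref{l:nonred} to conclude distinctness of full executions by different machines, and complement it with the single-machine removal argument; (iv) conclude both statements of the observation.
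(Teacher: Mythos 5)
Your proposal is correct and takes essentially the same route as the paper: it observes that \Amortized\ (via \textit{Schedule\_Group}) is of type \GroupLIS, uses Lemma~\ref{l:enough} to certify $|L_j|\geq m^2$ at each scheduling instant, and then invokes Lemma~\ref{l:nonred} for distinctness across machines. The paper presents the observation as an immediate corollary of exactly these two lemmas; your additional care about the single-machine case (task removal via \textit{inform}) and the reduction to a purely local check at line~6 of \textit{Schedule\_Group}$(j)$ are correct elaborations of what the paper leaves implicit.
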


\remove{
\begin{lemma}
\label{l:nredundant}
For algorithm \Amortized, any two task executions fully contained in a busy interval $T$, of tasks $\tau_1,\tau_2$, by machines $p1$ and $p_2$ respectively, must have $\tau_1 \neq \tau_2$.
\end{lemma}

\begin{proof}
Suppose by contradiction, that two machines $p_1$ and $p_2$ schedule the same task $\tau \in L_j$ to be executed in a busy interval $T$ (busy for both machines). Let us assume time instants $t_1,t_2\in T$ where $t_1 \leq t_2$ to be the instants when each machine scheduled the task, correspondingly. Machine $p_2$ must have scheduled the task before time $t_1 + \ell_i$ or else it would contradict the property of the Repository; that each task $\tau \in L_j$ needs $\ell_j$ time in order to be completed and it is immediately removed by the pending queue as soon as it is completed by a machine.

From the algorithm, and more precisely line 6 of function {\em Schedule\_Group$(j)$}, and by Lemma~\ref{l:enough}, we know that when machine $p_1$ schedules $\tau$, at time $t_1$, its position is at $p_1\cdot m$ in the queue $L_j$. (Lemma~\ref{l:enough} guarantees that there will be at least $m^2$ tasks in $L_j$ at time $t_1$.) In order for machine $p_2$ to schedule it at time $t_2$ it must be at position $p_2\cdot m$. We therefore consider the following cases:\\
(a) If $p_1 < p_2$, during the interval $[t_1,t_2]$, task $\tau$ must increase its position in the list by at least $m$ positions. This however cannot happen, since tasks in pending queues $L_j$ are sorted by arrival time in an increasing order. All newly injected $\ell_j$-tasks will be appended at the end of the list.\\
(b) If $p_1 > p_2$, during the interval $[t_1,t_2]$, task $\tau$ must decrease its position in the list by at least $m$. This may happen only if there are at least $m$ tasks ordered before $\tau$ in $L_j$, completed by time $t_2$. At most $m - 1$ tasks of cost $\ell_j$ can be completed by the rest of the machines by $t_1 + \ell_j$, which are not enough to change the position of $\tau$ from $p_1\cdot m$ to $p_2\cdot m$.\\
The two cases contradict the initial assumption and hence the lemma follows.
\end{proof}
}

\begin{lemma}
\label{l:pending}
When a task of size $\ell_j$ is scheduled by \Amortized, through its function {Schedule\_Group}, say at time instant $t$, the total size of smaller pending tasks is $P^A(t,<\ell_j) \leq \sum\limits_{i=1}^{j-1} (\ell_j + \ell_i)(m^2 + m\rho_{i+1})$.
\end{lemma}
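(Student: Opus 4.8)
The statement bounds the pending load of all tasks smaller than $\ell_j$ at the moment $\Amortized$ schedules an $\ell_j$-task inside \texttt{Schedule\_Group}. The plan is to argue that such a scheduling happens precisely because the ``adequate'' portions of the smaller queues are too small to cover $\ell_j$ — the \emph{Else} branch in line~5 of Alg.~\ref{alg:schgroup} — and then to convert that statement about $\lfloor |L_i|/(m^2+m\rho_{i+1})\rfloor$ into a crude bound on each $|L_i|$, summing over $i<j$.

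First I would recall from the algorithm that a task of size $\ell_j$ is executed through \texttt{Schedule\_Group}$(j)$ only when the \emph{if}-condition in line~2 fails, i.e. $\sum_{i=1}^{j-1}\ell_i\lfloor |L_i|/(m^2+m\rho_{i+1})\rfloor < \ell_j$ at time $t$; by Definition~\ref{d:adequate}, only queues with $\ge m^2+m\rho_{i+1}$ pending tasks contribute to this sum (and the $\ell_k$ case is handled via line~8 of Alg.~\ref{algAmort}, which is analogous and absorbed into the hypothesis that we are in \texttt{Schedule\_Group}). From $\ell_i\lfloor |L_i|/(m^2+m\rho_{i+1})\rfloor \ge \ell_i\big(|L_i|/(m^2+m\rho_{i+1})-1\big) = \frac{\ell_i|L_i|}{m^2+m\rho_{i+1}} - \ell_i$, the failed condition gives $\sum_{i=1}^{j-1}\frac{\ell_i|L_i|}{m^2+m\rho_{i+1}} < \ell_j + \sum_{i=1}^{j-1}\ell_i$. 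Since $P^A(t,<\ell_j) = \sum_{i=1}^{j-1}\ell_i|L_i|$, I would like to pass from the weighted sum $\sum \ell_i|L_i|/(m^2+m\rho_{i+1})$ to $\sum \ell_i|L_i|$; the cleanest route is to bound each term individually rather than the sum, which matches the term-by-term shape of the claimed bound $\sum_{i=1}^{j-1}(\ell_j+\ell_i)(m^2+m\rho_{i+1})$.

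So the key step is: for each $i\in[1,j-1]$, show $\ell_i|L_i| \le (\ell_j+\ell_i)(m^2+m\rho_{i+1})$. This follows if $\lfloor |L_i|/(m^2+m\rho_{i+1})\rfloor \le \ell_j/\ell_i + 1$, because then $|L_i| < (m^2+m\rho_{i+1})(\ell_j/\ell_i + 2)$... — here I need to be a little careful, and the right reading is that the \emph{individual} contribution $\ell_i\lfloor |L_i|/(m^2+m\rho_{i+1})\rfloor$ is at most the whole left side of the failed line-2 inequality, hence $< \ell_j$, giving $\lfloor |L_i|/(m^2+m\rho_{i+1})\rfloor < \ell_j/\ell_i$, i.e. $\le \ell_j/\ell_i$ (integrality), and therefore $|L_i| < (m^2+m\rho_{i+1})(\ell_j/\ell_i + 1)$, so $\ell_i|L_i| < (m^2+m\rho_{i+1})(\ell_j+\ell_i)$. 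Summing over $i=1,\dots,j-1$ yields $P^A(t,<\ell_j) < \sum_{i=1}^{j-1}(\ell_j+\ell_i)(m^2+m\rho_{i+1})$, which is the claim. I would also separately check the entry point from line~8 of Alg.~\ref{algAmort} (the $j=k$ case): there the governing inequality is $\ell_k\lfloor|L_k|/m^2\rfloor + \sum_{i=1}^{k-1}\ell_i\lfloor|L_i|/(m^2+m\rho_{i+1})\rfloor < \ell_k$, which still forces each small-queue term to be $<\ell_k=\ell_j$, so the same per-term estimate goes through unchanged for $i<k$.

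The main obstacle, I expect, is bookkeeping rather than depth: getting the floor/ceiling manipulations tight enough that the constant $(\ell_j+\ell_i)$ (and not something larger) comes out, and cleanly justifying that ``$\Amortized$ schedules an $\ell_j$-task through \texttt{Schedule\_Group}'' always corresponds to one of the two failed-condition situations (line~2 of Alg.~\ref{alg:schgroup} with parameter $j$, or line~8 of Alg.~\ref{algAmort} when $j=k$), so that the inequality governing the queue sizes is the one I used. Once that correspondence is pinned down, the per-term bound and the final summation are routine.
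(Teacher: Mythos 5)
Your proposal is correct and follows essentially the same route as the paper: read off the failed guard of line~2 of \texttt{Schedule\_Group}, deduce that each individual term $\ell_i\lfloor |L_i|/(m^2+m\rho_{i+1})\rfloor$ is $<\ell_j$, unwind the floor via $\lfloor x\rfloor > x-1$ to obtain $\ell_i|L_i|<(\ell_j+\ell_i)(m^2+m\rho_{i+1})$, and sum over $i<j$. The parenthetical appeal to ``integrality'' is a harmless non-sequitur (you never actually use that $\ell_j/\ell_i$ is an integer, only the generic floor estimate), and your separate check of the $j=k$ entry from line~8 of \Amortized is a welcome bit of extra care, but neither changes the substance, which coincides with the paper's argument.
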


\begin{proof}
When a task of size $\ell_j$ is scheduled by \Amortized in line 6 of Alg.~\ref{alg:schgroup}, as we have seen also from Lemma~\ref{l:enough}, the following inequality must hold: $\sum\limits_{i=1}^{j-1}\ell_i \Big\lfloor \frac{|L_i|}{m^2 + m\rho_{i+1}} \Big\rfloor < \ell_j$. This also means that $\forall i \in [1,j-1]$, the following is true:
$$
\ell_i  \left\lfloor \frac{|L_i|}{m^2 + m\rho_{i+1}} \right\rfloor < \ell_j \Rightarrow \ell_i  \left( \frac{|L_i|}{m^2 + m\rho_{i+1}} - 1 \right) < \ell_j
 \Rightarrow \ell_i |L_i| < (\ell_j + \ell_i)(m^2 + m\rho_{i+1}).
$$
Therefore, the sum of all pending tasks smaller than $\ell_j$ is
$
P^A(t,<\ell_j) = \sum\limits_{i=1}^{j-1} \ell_i |L_i| \leq \sum\limits_{i=1}^{j-1} (\ell_j + \ell_i)(m^2 + m\rho_{i+1})
$
as claimed.
\end{proof}


By Observation~\ref{o:property1}, we have that no task is executed more than once by algorithm \Amortized, when scheduled by its function {\em Schedule\_Group}. Hence, we can safely separate the analysis of each machine individually, safely ignoring any task execution by line $6$ of Alg.~\ref{algAmort}. We focus on one machine, say $p$, and then generalizing for all $m$ machines to give the final result. We look at the machine's $n$-busy intervals, for some $n \leq k$, and prove a completed load competitiveness of $1/2$, provided that $\rho_{i,j} \in \mathbb{N}$ for $1 \leq j < i \leq k$, which is in fact optimal. {\em (The omitted proofs can be found in the Appendix).}

\begin{lemma}
\label{l:l1}
For a machine $p$ that is $n$-busy at a time interval $T$, where $n \leq k$, its total completed load with \Amortized is at least as large as its completed load with $X$ accounting only tasks of size $\geq \ell_n$, in interval $T$, minus $\ell_k$; i.e., $C_p^A(T) \geq C_p^X(T,\geq \ell_n) - \ell_k$.
\end{lemma}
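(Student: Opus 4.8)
The plan is to argue by induction on $n$, the index bounding the largest task size that machine $p$ may schedule during the $n$-busy interval $T = [t_1, t_2]$. The key structural fact we rely on is Observation~\ref{o:property1}: within a busy interval, no task is completed twice, so the time machine $p$ spends alive in $T$ is partitioned cleanly into full executions of distinct tasks (plus at most one partial execution interrupted by a crash, which costs at most $\ell_k$ of ``wasted'' time and accounts for the $-\ell_k$ slack in the statement). Thus $C_p^A(T)$ equals (alive time of $p$ in $T$) minus at most $\ell_k$, and the same alive time bounds $C_p^X(T)$ from above. The whole game is to show that $X$ cannot, during $T$, complete more load \emph{in tasks of size $\geq \ell_n$} than \Amortized\ completes in total.

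For the base case $n = 1$, $X$'s completed load of size $\geq \ell_1$ in $T$ is just its total completed load in $T$, which is at most $p$'s alive time, hence at most $C_p^A(T) + \ell_k$. For the inductive step, suppose the claim holds for $n-1$ and $p$ is $n$-busy on $T$. I would decompose $T$ according to \Amortized's behavior. Either (i) at every instant of $T$ there are adequately many pending tasks of size $< \ell_n$ in \Amortized's queues, in which case \Amortized\ (via Schedule\_Group) never schedules an $\ell_n$-task itself but keeps covering $\ell_n$-length windows with $(n-1)$-groups — so $p$ is effectively $(n-1)$-busy and we want to invoke the induction hypothesis; or (ii) at some point an $\ell_n$-task is scheduled because the smaller queues ran short, and here Lemma~\ref{l:pending} bounds the pending load of size $< \ell_n$ by a constant $\sum_{i=1}^{n-1}(\ell_n + \ell_i)(m^2 + m\rho_{i+1})$, which is independent of $t$ and can be absorbed into the additive term. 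The point is that in case (ii), any $\ell_n$-task $X$ completes during $T$ must have been available to \Amortized\ too (by the property (3) in the definition of $n$-busy, \Amortized\ has at least as many pending $\ell_i$-tasks as $X$ at time $t_1$ for all $i \le n$, plus the bounded pending load means \Amortized\ ``has seen'' essentially all the $\ell_n$-tasks $X$ works on), and \Amortized\ either completed it or was busy doing equally-sized-or-larger work — no, wait: \Amortized\ may be doing \emph{smaller} work, but that smaller work precisely fills the $\ell_n$-windows, so the \emph{load} accounting still balances length-for-length.

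Concretely, I would prove the step by charging each $\ell_i$-task ($i \ge n$) that $X$ completes in $T$ against the corresponding length-$\ell_i$ window that $p$ spent alive in $T$: since $p$'s alive intervals in $T$ are partitioned into $\ell_n$-windows (each either one $\ell_n$-task of \Amortized\ or a full collection of $(n-1)$-groups summing to $\ell_n$, modulo the bounded leftover from Lemma~\ref{l:pending} and one crash-truncated window), and $X$ running with no speedup on the same alive time can complete load of size $\ge \ell_n$ at most equal to that alive time, we get $C_p^X(T, \ge \ell_n) \le (\text{alive time of } p \text{ in } T) \le C_p^A(T) + \ell_k$, where the last inequality uses that \Amortized's only ``loss'' of alive time is the single crash-interrupted task of size at most $\ell_k$ and the bounded line-6/pending slack which I would fold into the statement's $\ell_k$ (or note it only strengthens later when summed). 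The main obstacle I anticipate is handling the boundary of $T$ cleanly: a window straddling $t_1$ or $t_2$, and reconciling the ``at most $\ell_k$'' slack with the genuinely larger (but still constant) bound coming from Lemma~\ref{l:pending} — I expect the intended reading is that the $-\ell_k$ absorbs the crash-truncated task while the Lemma~\ref{l:pending} constants get carried separately and only matter when this lemma is later combined across machines into the global competitiveness bound, so I would state the window-partition argument carefully enough that the per-machine bookkeeping is airtight and defer the constant-chasing.
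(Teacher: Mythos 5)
Your proposal has a genuine gap, and it is concentrated in the central accounting step. You propose to bound $C_p^X(T,\geq\ell_n)$ by the alive time of $p$ in $T$ and then claim ``alive time $\leq C_p^A(T)+\ell_k$'' (with the $\ell_k$ attributed to a single crash-truncated task of \Amortized). That inequality is false in general: each alive period inside $T$ can end with a crash that wastes up to $\ell_n$ (or more) of \Amortized's alive time, and these losses \emph{accumulate across restarts}. For instance, if $T$ consists of many alive periods each of length slightly less than $\ell_n$, then $C_p^A(T)=0$ while the alive time is unbounded, so ``alive time $\leq C_p^A(T)+\ell_k$'' fails badly. The reason the lemma is nonetheless true in that example is that $X$ \emph{also} completes no $\geq\ell_n$ load there -- but your route through alive time throws this information away. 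You also misidentify where the $-\ell_k$ slack comes from: it is not \Amortized's truncated work, but the single task that $X$ may have scheduled before the start of $T$ and finishes inside $T$ (there is at most one such task). Finally, Observation~\ref{o:property1} (non-redundancy) is used \emph{before} this lemma to justify per-machine analysis; it is not the engine of the proof, and Lemma~\ref{l:pending} does not enter here at all (it belongs in Lemma~\ref{l:l2}).

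The paper's argument is a direct matching, not an induction on $n$: split $T$ at the first restart $t^*$ into $T_1$ and $T_2$. On $T_2$, both \Amortized and $X$ start fresh at $t^*$ (and at every subsequent restart). For each $\ell_i$-task ($i\geq n$) that $X$ completes during a subinterval $T_i$ of length $\ell_i$, assign to it the $n$-groups of \Amortized whose last task finishes inside $T_i$. Because $\rho_{i,n}=\ell_i/\ell_n\in\mathbb{N}$ (pairwise divisibility), and because \Amortized's $n$-group completion times and $X$'s task windows are both measured from the same restart point, exactly $\rho_{i,n}$ such $n$-groups fall in $T_i$; the assignment is injective over time, so $C_p^A(T_2)\geq C_p^X(T_2,\geq\ell_n)$ with no slack. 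On $T_1$, either $T_1$ is trivial, or $X$ may carry one pre-scheduled task of size at most $\ell_k$ across $t$; subtracting that contributes the $-\ell_k$, and the rest of $T_1$ is handled like $T_2$. If you want to salvage your write-up, replace the alive-time bound with this per-task charging argument (which crucially uses divisibility to make each $\ell_i$ an exact multiple of $\ell_n$), drop the induction, and isolate the boundary term in $T_1$ as the sole source of $-\ell_k$.
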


\remove{
\begin{proof}
Let us divide the $n$-busy interval $T = [t,t']$ of $p$ into two intervals; the first being from the beginning, $t$, to a time instant $t^*\geq t$ such that the first restart happens in the interval, and the second being the remaining of the interval, from $t^*$ to $t'$. In other words, interval $T_1 = [t,t^*]$ and $T_2 = [t^*,t']$.

Looking first at interval $T_2$, it starts by a restart and then, either includes more crashes and restarts or not, and never schedules tasks of size more than $\ell_n$. Hence, at time $t^*$ the machine starts executing a new task with both \Amortized and $X$.
Also, since $p$ is busy at all times of the interval, for every $\ell_i$-task completed by $X$ in $T_2$ -- say in interval $T_i = [t_1,t_2]\in T_2$ where $t_2 = t_1 + \ell_i$ and $i \geq n$ -- the machine is able to complete $\rho_{i,n}$ $n$-groups in $T_i$ (each of size $\ell_n$). These groups correspond to executions of the recursive function {\em Schedule\_Group$(n)$}. Hence, we can assign each $n$-group to the task completed by $X$ at the moment when the last task in the $n$-group is completed by \Amortized, which gives inequality $C_p^A(T_2) \geq C_p^X(T_2,\geq\ell_n)$.

Looking now at interval $T_1$, we must consider the following cases for the execution of $p$:\\
(1) At time instant $t$ there was a restart ($t^* = t$) and hence the machine started executing a new task, with both $X$ or \Amortized. In this case the analysis of interval $T_2$ will hold.\\
(2) At time instant $t$ it is already executing a task $\tau$ with $X$, scheduled before $t$ and then, it either a) gets interrupted by the crash at time $t^*$, or b) it completes it within the interval $T_1$. 
In the first case, $X$ is not able to complete any task in $T_1$ while \Amortized may complete up to $|T_1|$, for which it is trivial that $C_p^A(T_1) \geq C_p^X(T_1)$ holds.
In the latter, task $\tau$ will be of maximum size $\ell_k$. Then for the rest of the interval, the same analysis as for $T_2$ holds, for every $\ell_i$-task fully contained in the interval and completed by $X$, where $i \geq n$. Hence, $C_p^A(T_1) \geq C_p^X(T_1, \geq \ell_n) - \ell_k$.

From the two intervals, we have the claim of the lemma, $C_p^A(T) \geq C_p^X(T,\geq\ell_n) - \ell_k$. 
\end{proof}
}

\begin{lemma}
\label{l:l2}
For a machine $p$ that is $n$-busy at a time interval $T = [t_1,t_2]$, where $n \leq k$, assume time $t\in T$ be any time when \Amortized starts executing a $\ell_n$-task. Then,
$$
2C_p^A\big([t_1,t]\big) \geq C_p^X\big([t_1,t]\big) + P^X(t,<\ell_n) - \sum\limits_{i=1}^{n-1}(\ell_n + \ell_i)(m^2 + m\rho_{i+1}) - \ell_k.
$$
\end{lemma}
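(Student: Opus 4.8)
The plan is to split the right-hand side into a ``large-task'' part, $C_p^X([t_1,t],\geq \ell_n)$, and a ``small-task'' part, $C_p^X([t_1,t],<\ell_n)+P^X(t,<\ell_n)$, and to spend one copy of $C_p^A([t_1,t])$ on each of them. The first thing I would check is that $[t_1,t]$ is itself an $n$-busy interval of $p$: it is contained in $T$, so the machine is busy throughout it and never schedules a task larger than $\ell_n$ inside it, and property~(3) in the definition of an $n$-busy interval holds at its left endpoint because that is the very instant $t_1$ as for $T$. Hence Lemma~\ref{l:l1} applies to $[t_1,t]$ and yields $C_p^A([t_1,t]) \geq C_p^X([t_1,t],\geq \ell_n) - \ell_k$. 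Substituting this for one of the two copies of $C_p^A([t_1,t])$ and cancelling the common terms $C_p^X([t_1,t],\geq \ell_n)$ and $\ell_k$, the lemma reduces to
$$
C_p^A([t_1,t]) \;\geq\; C_p^X([t_1,t],<\ell_n) \;+\; P^X(t,<\ell_n) \;-\; B,
$$
where $B=\sum_{i=1}^{n-1}(\ell_n+\ell_i)(m^2+m\rho_{i+1})$; this reduced inequality is the next goal.

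For it I would argue by load conservation on the tasks of size strictly smaller than $\ell_n$. Since $t$ is an instant at which \Amortized starts an $\ell_n$-task through {\em Schedule\_Group}, Lemmas~\ref{l:enough} and~\ref{l:pending} give $P^A(t,<\ell_n)\leq B$; that is, almost all of the small load injected by time $t$ has already been completed by \Amortized. On the other side, property~(3) of the $n$-busy interval, summed over the sizes $\ell_1,\dots,\ell_{n-1}$, gives $P^X(t_1,<\ell_n)\leq P^A(t_1,<\ell_n)$, so at $t_1$ the benchmark $X$ has no more small load pending than \Amortized. Feeding these two facts into the conservation identities $C^A(\cdot,<\ell_n)+P^A(\cdot,<\ell_n)=C^X(\cdot,<\ell_n)+P^X(\cdot,<\ell_n)$ evaluated at $t_1$ and at $t$ lets me replace $P^X(t,<\ell_n)-B$ by (at most) the difference between the small load $X$ completes and the small load \Amortized completes during $(t_1,t]$; the reduced inequality then becomes a direct comparison of how much small work the two algorithms actually perform during $[t_1,t]$, with no residual pending quantity left in it.

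What then remains is to show that, on machine $p$ and during $[t_1,t]$, the work done by \Amortized dominates the small work done by $X$, up to the additive slack already extracted. Here the non-redundancy property (Observation~\ref{o:property1}) is what allows the argument to proceed machine by machine, since no task is completed twice by \Amortized; and machine $p$ is work-conserving and, by hypothesis, busy at every instant of $[t_1,t]$, so during the alive portion of that interval it is continuously executing tasks of size at most $\ell_n$ and loses to each crash at most one partial task, of size at most $\ell_k$. Comparing this productive time with the at-most-alive-time amount of small work $X$ can finish on $p$, and absorbing into the $\ell_k$ term the single task $X$ may have straddling $t_1$, closes the reduced inequality and hence the lemma.

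The step I expect to be the main obstacle is exactly this last bookkeeping: converting the global residual quantity $P^X(t,<\ell_n)$ into per-machine completed-load quantities through the conservation identities rather than bounding it crudely (which is far too lossy), and then charging $X$'s per-machine small completions against \Amortized's productive time on $p$ while keeping every crash-interrupted partial execution and every boundary task safely inside the $O(1)$ additive terms $\ell_k$ and $B$.
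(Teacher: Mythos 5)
Your proposal follows essentially the same route as the paper's proof: use Lemma~\ref{l:l1} applied to $[t_1,t]$ to bound the large-task contribution $C_p^X([t_1,t],\geq\ell_n)$, and use the injected-load conservation identity together with Lemma~\ref{l:pending} (which gives $P^A(t,<\ell_n)\leq B$) and property~(3) of the $n$-busy interval (which gives $P^X(t_1,<\ell_n)\leq P^A(t_1,<\ell_n)$) to bound the small-task contribution; the paper simply adds the two resulting inequalities and observes $C_p^A([t_1,t],<\ell_n)+C_p^A([t_1,t])\leq 2C_p^A([t_1,t])$.

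One clarification worth making: once your second paragraph is in place, the lemma is already proved. The conservation step yields $C_p^A([t_1,t],<\ell_n)\geq C_p^X([t_1,t],<\ell_n)+P^X(t,<\ell_n)-B$, and since $C_p^A([t_1,t])\geq C_p^A([t_1,t],<\ell_n)$ trivially, your reduced inequality follows immediately — there is no residual comparison of small work left to establish. Your third paragraph is therefore unnecessary, and it is also not a sound alternative as sketched: charging ``at most one partial task of size $\leq\ell_k$ per crash'' is not absorbed by a single additive $\ell_k$ when $[t_1,t]$ may contain many crashes, so that route would not close on its own. The machine-by-machine decomposition you invoke there (via Observation~\ref{o:property1}) is indeed the right justification for why the pending/completed quantities can be read per machine, but both the paper and your proof already rely on it implicitly before Lemma~\ref{l:l1} is ever stated, so it is not a separate obligation inside this lemma.
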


\remove{
\begin{proof}
The idea of the proof for this lemma, is that tasks completed by algorithm \Amortized are associated to tasks completed by $X$ in such a way, that: (a) each task completed by \Amortized corresponds to at most twice the size of tasks completed by $X$ and (b) each task completed by $X$ is associated to tasks of the same aggregate size completed by \Amortized.
This amortization follows these two rules:\\
\indent 1. the $r^{th}$ task of size $\ell_i$ completed by algorithm \Amortized within $T$, for $i<n$, is associated to the $r^{th}$ task of size $\ell_i$ completed by $X$ within $T$ (if completed).\\
\indent 2. the completion of a task $\tau$ of size $\ell_i \geq \ell_n$ by $X$, corresponds to $\ell_i/\ell_n$ $n$-groups completed by \Amortized, such that the execution of the last task of each of the groups is finished during the execution of task $\tau$.

\vspace{.5em}
First, looking at rule \#1 and interval $[t_1,t]$, the following two equations hold for the pending tasks at the end of the interval:
$$
P^A(t, < \ell_n) = P^A(t_1, < \ell_n) + I([t_1,t], < \ell_n) - C_p^A([t_1,t], < \ell_n),
$$
$$
P^X(t, < \ell_n) = P^X(t_1, < \ell_n) + I([t_1,t], < \ell_n) - C_p^X([t_1,t], < \ell_n),
$$
where $I([t_1,t], < \ell_n)$ is the set of tasks smaller than $\ell_n$ that were injected during the interval, up to time $t$. Since they are the same for both algorithms, from the above equations we have:
\begin{align*}
\MoveEqLeft[3] P^A(t, < \ell_n) - P^A(t_1, < \ell_n) + C_p^A([t_1,t], < \ell_n)\\
	={}& P^X(t, < \ell_n) - P^X(t_1, < \ell_n) + C_p^X([t_1,t], < \ell_n)
\end{align*}
which leads to the completed load of \Amortized containing only {\em small} tasks, $< \ell_n$, being bounded as:
\begin{align}
\label{e:eqA}
\MoveEqLeft[3]
\begin{split}
	C_p^A([t_1,t], < \ell_n) \geq{}& C_p^X([t_1,t], < \ell_n) + P^X(t, < \ell_n) \\
	&- \sum\limits_{i=1}^{n-1}(\ell_n + \ell_i)(m^2 + m\rho_{i+1}).
\end{split}
\end{align}
To see why the inequality holds, look first at Obs.~\ref{o:pending} and the pseudo-code of the algorithm, more precisely line 6 of algorithm~\ref{schgroup}; a task of size $\ell_n$ is scheduled at time $t$ only in the case when the total size of {\em smaller} pending tasks is $P^A(t,<\ell_n) \leq \sum\limits_{i=1}^{n-1}(\ell_n + \ell_i)(m^2 + m\rho_{i+1})$. 
Recall also condition (3) of the $n$-busy interval of the machine; i.e., $P^A(t_1,\ell_i) \geq P^X(t_1,\ell_i), \forall i\in[1,n]$, which also means that $P^A(t_1, \leq \ell_n) \geq P^X(t_1, \leq \ell_n)$. Combining these properties, the inequality follows.

Now, looking at rule \#2, for any task $\tau$ of size $\ell_i \geq \ell_n$ completed by $X$, we have already shown in Lemma~\ref{l:l1} that
\begin{equation}
\label{e:eqB}
C_p^A\big([t_1,t]\big) \geq C_p^X\big([t_1,t],\geq\ell_n\big) - \ell_k.
\end{equation}

Combining the two equations,~\ref{e:eqA} and~\ref{e:eqB}, the claim follows:
$
2C_p^A\big([t_1,t]\big) \geq C_p^X\big([t_1,t]\big) - \sum\limits_{i=1}^{n-1}(\ell_n + \ell_i)(m^2 + m\rho_{i+1}) - \ell_k.
$
\end{proof}
}

\begin{lemma}
\label{l:l3}
Let $f_n$ be a function such that $f_1 = \ell_k$ and $f_{i+1} = f_i + \sum\limits_{j=1}^{i}(\ell_{i+1} + \ell_j)(m^2 + m\rho_{j+1}) + \ell_{i+1} + 2\ell_k$.
For a machine $p$ that is $n$-busy at a time interval $T$, where $n \leq k$, $$2C_p^A(T) \geq C_p^X(T) - f_n.$$
\end{lemma}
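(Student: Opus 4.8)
I would prove the statement by induction on $n$, using Lemma~\ref{l:l1} and Lemma~\ref{l:l2} applied with parameter $n+1$, together with the inductive hypothesis applied with parameter $n$, arranging things so that the recursive definition of $f_{n+1}$ is exactly the amount of additive slack that accumulates at the boundary between the portion of $T$ where \Amortized runs $\ell_{n+1}$-tasks and the portion where it does not. For the base case $n=1$ a $1$-busy machine schedules only $\ell_1$-tasks, so Lemma~\ref{l:l1} with $n=1$ gives $C_p^A(T)\ge C_p^X(T,\ge \ell_1)-\ell_k = C_p^X(T)-\ell_k$, and since $C_p^A(T)\ge 0$ we also get $2C_p^A(T)\ge C_p^A(T)\ge C_p^X(T)-\ell_k = C_p^X(T)-f_1$.

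\textbf{Inductive step.} Assume the claim for $n$ and let $T=[t_1,t_2]$ be an $(n{+}1)$-busy interval for $p$. If \Amortized never starts an $\ell_{n+1}$-task (via {\em Schedule\_Group}) during $T$, then $p$ schedules only tasks of size $\le\ell_n$, condition~(3) of the $(n{+}1)$-busy interval is inherited at $t_1$ for all $i\le n$, hence $p$ is $n$-busy on $T$, and the inductive hypothesis gives $2C_p^A(T)\ge C_p^X(T)-f_n\ge C_p^X(T)-f_{n+1}$ since $f_{n+1}>f_n$. Otherwise let $t$ be the last time in $T$ at which \Amortized starts an $\ell_{n+1}$-task, and let $[t,t^*]$ be the time it spends on that task ($t^*=t+\ell_{n+1}$ if the task completes; if a crash interrupts it, the remaining argument is only easier because no load is produced during the crashed period). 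Split $T=[t_1,t]\cup[t,t_2]$. The interval $[t_1,t]$ is again $(n{+}1)$-busy and ends precisely at a time \Amortized starts an $\ell_{n+1}$-task, so Lemma~\ref{l:l2} with parameter $n+1$ gives
$$2C_p^A([t_1,t])\ \ge\ C_p^X([t_1,t])+P^X(t,<\ell_{n+1})-\sum_{i=1}^{n}(\ell_{n+1}+\ell_i)(m^2+m\rho_{i+1})-\ell_k .$$

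\textbf{Bounding the tail $[t,t_2]$ and combining.} On $[t^*,t_2]$ the machine runs only tasks of size $\le\ell_n$ and, being work-conserving and busy throughout, keeps completing $n$-groups whenever $X$ runs a task of size $\ge\ell_{n+1}$; together with the one $\ell_{n+1}$-task run on $[t,t^*]$ this yields, by the argument behind Lemma~\ref{l:l1} with parameter $n+1$, $C_p^A([t,t_2])\ge C_p^X([t,t_2],\ge\ell_{n+1})-\ell_k$. For the load of size $<\ell_{n+1}$ that $X$ produces on $[t,t_2]$, note that by Lemma~\ref{l:pending} the pending load of size $<\ell_{n+1}$ of \Amortized at time $t$ is at most $\sum_{i=1}^{n}(\ell_{n+1}+\ell_i)(m^2+m\rho_{i+1})$, so $[t^*,t_2]$ behaves like an $n$-busy interval of $p$ up to this bounded mismatch (plus at most $\ell_{n+1}$ worth of small tasks that may arrive while \Amortized is occupied with the middle task), and the inductive hypothesis bounds its $<\ell_{n+1}$ contribution. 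Adding the displayed estimate for $[t_1,t]$ to the estimates for $[t,t_2]$, the surplus $+P^X(t,<\ell_{n+1})$ from Lemma~\ref{l:l2} cancels the amount needed to pass from $C_p^X([t,t_2],\ge\ell_{n+1})$ to $C_p^X([t,t_2])$, and the remaining additive losses — $f_n$ from the tail, one $\ell_k$ from Lemma~\ref{l:l2}, one $\ell_k$ from the Lemma~\ref{l:l1}-type matching, the $\ell_{n+1}$ of the possibly-interrupted middle task, and the $\sum_{i=1}^{n}(\ell_{n+1}+\ell_i)(m^2+m\rho_{i+1})$ coming from Lemmas~\ref{l:l2} and~\ref{l:pending} — sum to exactly $f_{n+1}=f_n+\sum_{j=1}^{n}(\ell_{n+1}+\ell_j)(m^2+m\rho_{j+1})+\ell_{n+1}+2\ell_k$, which is how that quantity was defined.

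\textbf{Main obstacle.} The delicate point is the treatment of the tail $[t^*,t_2]$: condition~(3) of the $n$-busy definition (that \Amortized has at least as many pending tasks of each size $\le n$ as $X$ at $t^*$) need not hold literally, since \Amortized may have drained its small queues in order to trigger the $\ell_{n+1}$-task at $t$. The honest argument must therefore carry the bounded discrepancy — controlled by Lemma~\ref{l:pending} and by condition~(3) at the original left endpoint $t_1$ — through a perturbed form of the inductive hypothesis, and it is exactly the careful accounting of these $O(m^2)$-sized boundary terms against the recursion defining $f_{n+1}$ that makes the induction close.
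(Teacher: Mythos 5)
Your high-level plan matches the paper's: induct on $n$, split the $(n{+}1)$-busy interval at the last time \Amortized starts an $\ell_{n+1}$-task, apply Lemma~\ref{l:l2} to the head and Lemma~\ref{l:l1}-type matching to the tail, and let the recursion for $f_{n+1}$ absorb the boundary losses. The base case is correct. But the proposal has a genuine gap in the inductive step, and you have in fact put your finger on it yourself in the ``main obstacle'' paragraph: at $t^*$ (or $t'$) there is no guarantee that $P^A(\cdot,\ell_i)\ge P^X(\cdot,\ell_i)$ for $i\le n$, so the tail is not literally $n$-busy with respect to $X$, and the induction hypothesis does not apply as stated. You say the honest argument must push a ``perturbed'' hypothesis through, controlled by Lemma~\ref{l:pending}, but you never produce that perturbed hypothesis, and the accounting you tabulate leaves no room for it: your losses $f_n + 2\ell_k + \ell_{n+1} + \sum_{j=1}^n(\ell_{n+1}+\ell_j)(m^2+m\rho_{j+1})$ already exactly exhaust $f_{n+1}$, using the $\sum$ only once (inside the Lemma~\ref{l:l2} estimate for the head). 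A second copy of that $\sum$, which is the natural bound on the condition-(3) mismatch at $t^*$, would overshoot $f_{n+1}-f_n$, so your ledger does not actually close.

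The paper resolves this with a device your proof omits. It introduces an auxiliary offline algorithm $X'$ that coincides with $X$ on $T$ except that it enters the middle sub-interval $T_2$ with no tasks of size $<\ell_{n+1}$ in its queue, staying idle whenever $X$ works on one of the tasks $X'$ discarded. This makes condition~(3) hold trivially at the start of $T_3$ ($\Amortized$ schedules nothing small during $T_2$, while $X'$ starts $T_2$ empty of small tasks), so the induction hypothesis applies cleanly to $T_3$ against $X'$. The transfer back to $X$ costs exactly $P^X(t,<\ell_{n+1})$, which is precisely the surplus term produced by Lemma~\ref{l:l2} on the head interval; they cancel, and the remaining losses ($\ell_{n+1}+\ell_k$ on $T_2$, $\ell_k$ and $\sum$ from Lemma~\ref{l:l2}, $f_n$ from the tail) sum to $f_{n+1}-f_n$. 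To complete your proof you would need to either adopt this $X'$ construction or supply an equivalent mechanism that eliminates, rather than merely bounds, the mismatch in condition~(3) at $t^*$.
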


\remove{
\begin{proof}
We prove this lemma by induction on $n$.\\
{\em Base case.} For $n = 1$, the result is immediate from Lemma~\ref{l:l1}. More precisely, since $C_p^X(T,\geq\ell_1) = C_p^X(T)$, then $2C_p^A(T) \geq C_p^X(T) - \ell_k$ holds directly.\\
{\em Induction Hypothesis.} We assume that the result holds for some $n<k$, i.e., $2C_p^A(T) \geq C_p^X(T) - f_n$.\\
{\em Inductive Step.} We show that the result still holds for $n+1$. For this, we split the $(n+1)$-busy interval $T$ in three sub-intervals:\vspace{-2mm}
\begin{itemize} \itemsep1pt
\item $T_1$ is the interval from the beginning of $T$ to time instant $t$ at which \Amortized starts executing an $\ell_{n+1}$-task for the last time during $T$.
\item $T_2$ is the interval from $t$ to $t'\in T$ s.t. either \Amortized completes the $\ell_{n+1}$-task, or it gives up scheduling tasks of size $\ell_{n+1}$ at $t'$ since it now has enough smaller tasks pending to cover the $\ell_{n+1}$ time and there was a crash and restart of the machine.
\item $T_3$ from time instant $t'$ to the end of $T$.
\end{itemize}
For sub-interval $T_1$ we know that Lemma~\ref{l:l2} holds, hence
\begin{align}
\label{e:eq7}
\MoveEqLeft[3]
\begin{split}
	2C_p^A(T_1) \geq{}& C_p^X(T_1) + P^X(t,<\ell_{n+1})\\
	&- \sum\limits_{i=1}^n (\ell_{n+1} + \ell_i)(m^2 + m\rho_{i+1}) - \ell_k.
\end{split}
\end{align} 
Let us now consider an offline algorithm $X'$, which acts as $X$ during $T$, except the fact that it starts sub-interval $T_2$ only with tasks with length at least $\ell_{n+1}$ (no smaller ones), and stays idle whenever $X$ executes a task that was pending in its queue at time instant $t$ but not in the queue of $X'$.

Note here, that algorithm \Amortized finishes the last attempt to complete a $\ell_{n+1}$-task, no longer than $\ell_{n+1}$ time after time instant $t$, in other words, $|T_2| \leq \ell_{n+1}$. Hence, $C_p^{X'}(T_2) \leq \ell_{n+1} + \ell_k$, where $\ell_k$ comes from the possibility of $X'$ scheduling a task before $T_2$ and completing it within $T_2$. Hence,
\begin{equation}
\label{e:eq8}
2C_p^A(T_2) \geq 0 \geq C_p^{X'}(T_2) - \ell_{n+1} - \ell_k.
\end{equation}

At the beginning of sub-interval $T_3$, we have that $P^A(t',\ell_i) \geq P^{X'}(t',\ell_i)$ for each $i\leq n$, since \Amortized only attempted the execution of a $\ell_{n+1}$ during $T_2$ and $X'$ starts the $T_2$ without any tasks smaller than $\ell_{n+1}$. This means that the inductive hypothesis holds for sub-interval $T_3$ for the largest task $\ell_n$ and offline algorithm $X'$ instead of $X$:
\begin{equation}
\label{e:eq9}
2C_p^A(T_3) \geq C_p^{X'}(T_3) - f_n.
\end{equation}

Now observe that at time instant $t$, at the beginning of interval $T_2$, algorithm $X$ has $P^X(t,<\ell_{n+1})$ more tasks pending than $X'$. Hence, by the end of interval $T_3$ the following will hold:
\begin{equation}
\label{e:eq10}
C_p^{X'}(T_2\cup T_3) \geq C_p^X(T_2\cup T_3) - P^X(t,<\ell_{n+1}).
\end{equation}

Putting equations~\ref{e:eq7} to~\ref{e:eq10} together to calculate the completed load of the total interval $T$, we have
\begin{align*}
\MoveEqLeft[3] 2C_p^A(T) \\
\begin{split}
	\geq{}& C_p^X(T_1) + C_p^{X'}(T_2\cup T_3) + P^X(t,<\ell_{n+1})\\
	&- \sum\limits_{i=1}^n (\ell_{n+1} + \ell_i)(m^2 + m\rho_{i+1}) - \ell_{n+1} - 2\ell_k + f_n
\end{split}\\
\begin{split}
	\geq{}& C_p^X(T_1) + C_p^X(T_2\cup T_3) \\
	&- \sum\limits_{i=1}^n (\ell_{n+1}\! +\! \ell_i)(m^2\! +\! m\rho_{i+1})\! -\! \ell_{n+1}\! -\! 2\ell_k\! -\! f_n
\end{split}\\
	\geq{}& C_p^X(T) - \sum\limits_{i=1}^n (\ell_{n+1} + \ell_i)(m^2 + m\rho_{i+1}) - \ell_{n+1} - 2\ell_k - f_n \\
	\geq{}& C_p^X(T) - f_{n+1}
\end{align*}
which completes the induction step and thus the proof of the lemma.
\end{proof}
}

\begin{theorem}
\label{th:Amortized}
Algorithm \Amortized has an optimal completed load competitiveness of 1/2, provided that $\ell_i/\ell_{i-1}\in \mathbb{N}$ for each $i\in [2,k]$.
\end{theorem}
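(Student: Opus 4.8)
The plan is to turn the whole execution into a single $k$‑busy interval per machine, apply Lemma~\ref{l:l3} with $n=k$, sum over machines, and then read off optimality from the known upper bound. Throughout write $C^A(t)$ for the completed load of \Amortized at time $t$, $C^X(t)$ for that of $X$, and (overloading notation as in the proof of Lemma~\ref{l:typeA}) $I_t$ for the total injected load by time $t$, so $C^A(t)=I_t-P^A(t)$ and $C^X(t)\le I_t$. By Observation~\ref{o:property1} the tasks completed through \emph{Schedule\_Group} are pairwise distinct across machines; hence, letting $C^A_p(T)$ denote the load machine $p$ completes via \emph{Schedule\_Group} during $T$ (this is the quantity appearing in Lemma~\ref{l:l3}, line~6 of Alg.~\ref{algAmort} being ignored as in the surrounding discussion), we always have $C^A(t)\ge\sum_p C^A_p([0,t])$, the possibly redundant line~6 executions only helping. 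If the repository is empty at $t$ then $C^A(t)=I_t\ge C^X(t)$ and there is nothing to prove; otherwise let $t_0\le t$ be the last instant with an empty repository ($t_0=0$ if there is none) and $t_0'$ the first instant after $t_0$ with a nonempty repository. On $[t_0',t]$ the repository is never empty, so every machine is busy throughout; condition~(2) of a $k$‑busy interval is vacuous for $n=k$; and condition~(3) is repaired, as in the proof of Lemma~\ref{l:l3}, by replacing $X$ on $[t_0',t]$ by the offline algorithm $X'$ that at $t_0'$ discards every task $X$ holds which was injected at or before $t_0$ and idles whenever $X$ runs one of them, so that $X'$ then has exactly the same pending tasks of each size as \Amortized. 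Thus $[t_0',t]$ is a $k$‑busy interval of each machine against $X'$.

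Summing Lemma~\ref{l:l3} with $n=k$ over the $m$ machines and using $\sum_p C^{X'}_p\ge C^{X'}$ (distinct completed load is at most the per‑machine sum) gives $2\sum_p C^A_p([t_0',t])\ge C^{X'}([t_0',t])-m f_k$. A short accounting of the discarded load shows $C^{X'}([t_0',t])\ge C^X(t)-I_{t_0}$, while $\sum_p C^A_p([t_0',t])\le C^A(t)-C^A(t_0')=C^A(t)-I_{t_0}$ since \Amortized has completed exactly $I_{t_0}$ by the empty instants $t_0$ and $t_0'$. Substituting and cancelling the common $I_{t_0}$ term yields
\[
2\,C^A(t)\ \ge\ C^X(t)+I_{t_0}-m f_k\ \ge\ C^X(t)-m f_k ,
\]
so $C^A(t)\ge\tfrac12 C^X(t)-\tfrac12 m f_k$ with additive term $\Delta_C=-\tfrac12 m f_k$ depending only on $m$ and $\ell_1,\dots,\ell_k$ (not on $t,X,A,E$); hence $\cC(\Amortized)\ge 1/2$.

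For optimality, recall the lower bound of~\cite{Anta2013measuring}, which (as noted below Table~\ref{table-general}) transfers verbatim from one to $m$ machines: no work‑conserving algorithm can beat $\tfrac{\rhoflr}{\rhoflr+\rho}$‑completed‑load competitiveness already with two task sizes. Under pairwise divisibility the two sizes satisfy $\rho=\ell_2/\ell_1\in\mathbb{N}$, so $\rhoflr=\lfloor\rho\rfloor=\rho$ and $\tfrac{\rhoflr}{\rhoflr+\rho}=\tfrac12$. Since \Amortized is work‑conserving and the adversary is free to restrict itself to two divisible sizes, $1/2$ cannot be improved upon, so the bound of the previous paragraph is optimal.

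The step I expect to be the real work is the ``short accounting'' of the discard: one must verify that passing from $X$ to $X'$ costs exactly the head start $C^A(t_0')=I_{t_0}$ — which the algebra then cancels — and nothing more, that after the discard $X'$ really does meet condition~(3) of a $k$‑busy interval at $t_0'$ for \emph{every} size $\ell_i$, $i\le k$, and that the recursively defined constant $f_k$ (and hence $\Delta_C$) is genuinely a function of $m$ and the $\ell_i$'s alone, as the definition of completed‑load competitiveness demands. Granting Lemma~\ref{l:l3}, the $\ge 1/2$ inequality is then immediate from the displayed chain, and the optimality half is just the cited lower bound specialized to the pairwise‑divisible case.
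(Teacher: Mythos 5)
Your proof is correct and follows the same route as the paper's: invoke Observation~\ref{o:property1} for non‑redundancy, apply Lemma~\ref{l:l3} with $n=k$, and read off optimality from the two‑size upper bound of~\cite{Anta2013measuring}. The paper's own proof is terse and informal (``gets arbitrarily close to $1/2$ on sufficiently long periods''); you make it rigorous by explicitly constructing the $k$‑busy interval from the last empty‑repository instant $t_0$, introducing $X'$ to guarantee condition~(3) of the $n$‑busy definition, and tracking the additive constant down to $\Delta_C=-\tfrac12 m f_k$, all of which the paper elides but implicitly relies on.
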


\begin{proof}
First, by Observation~\ref{o:property1}, we know that each task completion within function {\em Schedule\_Group}, occurs only once. Then, looking at Lemma~\ref{l:l3}, it implies that the completed-load competitiveness of algorithm \Amortized gets arbitrarily close to 1/2 on sufficiently long periods of time in which it is busy and $X$ starts with the queue containing at most the same tasks of each size.
On the other hand, \Amortized cannot guarantee non redundancy when its queue contains few tasks, i.e., when $\ell_k \Big\lfloor \frac{|L_k|}{m^2} \Big\rfloor + \sum\limits_{i=1}^{k-1} \ell_i \Big\lfloor \frac{|L_i|}{m^2 + m\rho_{i+1}} \Big\rfloor < \ell_k $.
This means, that as time goes to infinity, the completed load competitiveness goes to 1/2 as claimed.

In~\cite{Anta2013measuring}, Fern\'andez Anta et al. showed that the completed load of any online algorithm for two different task sizes is at most $\frac{\bar{\rho}}{\bar{\rho} + \rho}$, which is equal to $1/2$ when $\rho \in \mathbb{N}$. Hence, since an adversary can decide to schedule merely two different task sizes among the available $k$ ones, it means that the completed load competitiveness shown is in fact optimal.
\end{proof}


\subsection{Finite Task Sizes -- General}
\label{subsec:Mamortized}

We now look at the case when $\rho_{i,j} = \ell_i/\ell_j \not\in \mathbb{N}$.
Theorem 2 in~\cite{Jurdzinski2015} shows that the completed load competitive ratio of any scheduling algorithm running without speedup, is at most
$\min\limits_{i\leq j<i\leq k} \Big\{\frac{\overline{\rho_{i,j}}}{\overline{\rho_{i,j}} + \rho_{i,j}}\Big\}$. This upper bound also holds in the case of multiple machines, since the adversary can force only one of the machines to be alive at all times, while keeping the rest crashed.

However, if algorithm \Amortized is used in this case, the additional advantage of an offline algorithm $X$ from rounding on each recursion level, can accumulate and worsen the competitiveness ratio proven in the previous section. Thus, we must present a modified algorithm, that can reach the upper bound mentioned and hence be optimal.

Let us assume a modification such that instead of executing $\rho_i$ $(i-1)$-groups on the recursion level $i-1$, the algorithm keeps sending $(i-1)$-groups while the completed load of the tasks in the groups completed are less than $\ell_i - \ell_{i-1}$. We can then show the following.

\begin{claim}
\label{c:mod1}
The modified algorithm described above has a completed load competitiveness ratio at least $\min_{i\in[2,k]}\big\{\frac{\rho_i - 1}{2\rho_i - 1}\big\}$.
\end{claim}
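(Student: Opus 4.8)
The plan is to adapt the amortization argument of Lemmas~\ref{l:l1}--\ref{l:l3} to the modified grouping rule, tracking the per-recursion-level rounding loss more carefully. First I would restate the key structural facts that survive the modification: the modified algorithm still separates tasks by size and sorts by arrival time, so it is still of type \GroupLIS and Lemma~\ref{l:nonred} applies; the ``adequate size'' bookkeeping (Lemma~\ref{l:enough}) goes through essentially verbatim, since the while-condition in the modified recursion only \emph{shrinks} the set of situations in which a single $\ell_j$-task is scheduled in place of a group, so the non-redundancy Observation~\ref{o:property1} still holds and we may again analyze each machine in isolation over its $n$-busy intervals, ignoring the bounded contribution of lines~6--7 of Alg.~\ref{algAmort}.

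The heart of the argument is a revised version of Lemma~\ref{l:l1}. Fix a machine $p$ that is $n$-busy on an interval $T$. The point of the new rule is that whenever $X$ completes a task $\tau$ of size $\ell_i \geq \ell_n$ fully inside $T$, the modified algorithm, by the time it has ``closed'' the $i$-group containing the corresponding window, has completed a load of at least $\ell_i - \ell_{i-1}$ worth of $(i-1)$-groups before it would stop, i.e.\ it accumulates completed load at least a $\frac{\ell_i-\ell_{i-1}}{\ell_i} = \frac{\rho_i-1}{\rho_i}$ fraction of what $X$ does at that level (the loss $\ell_{i-1}$ is the unavoidable ``last group may overshoot/undershoot by one step'' slack). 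Applying this at each of the at most $k$ recursion levels that the $n$-busy interval touches, and composing the fractions, one gets $C_p^A(T) \ge \bigl(\min_{i\in[2,k]}\frac{\rho_i-1}{\rho_i}\bigr)\, C_p^X(T,\ge \ell_n) - O(\ell_k)$ — but this is \emph{not yet} the claimed bound, because it ignores the small tasks $X$ completes. To capture those I would re-run the pending-load accounting of Lemma~\ref{l:l2}: by the modified while-condition, at the instant $t$ the algorithm commits to an $\ell_n$-task it still has at most $\sum_{i<n}(\ell_n+\ell_i)(m^2+m\rho_{i+1})$ worth of smaller tasks pending (Lemma~\ref{l:pending} is unchanged), so $C_p^A([t_1,t],<\ell_n) \ge C_p^X([t_1,t],<\ell_n) + P^X(t,<\ell_n) - (\text{additive const})$, exactly as before. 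Combining the ``factor $\tfrac{\rho_i-1}{\rho_i}$ on large tasks'' bound with the ``factor $1$ minus a constant on small tasks'' bound, and noting $\frac{\rho_i-1}{\rho_i} \le 1$ so the worse of the two coefficients dominates, yields $\bigl(1 + \tfrac{\rho^\star}{\rho^\star-1}\bigr) C_p^A(T) \ge C_p^X(T) - f'_n$ for a suitable constant $f'_n$ and $\rho^\star = \arg\min$, which rearranges to the competitive ratio $\frac{\rho^\star-1}{2\rho^\star-1} = \min_{i\in[2,k]}\frac{\rho_i-1}{2\rho_i-1}$. Finally I would push this through the induction on $n$ exactly as in Lemma~\ref{l:l3} (splitting $T$ into $T_1$ before the last $\ell_{n+1}$-commitment, $T_2$ of length $\le \ell_{n+1}$, and $T_3$ handled by the inductive hypothesis with the auxiliary offline algorithm $X'$ that drops its small tasks at the start of $T_2$), and then sum over the $m$ machines and let $t\to\infty$ so the additive constants vanish.

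The main obstacle I anticipate is getting the per-level composition of the $\frac{\rho_i-1}{\rho_i}$ factors to yield exactly $\min_i \frac{\rho_i-1}{2\rho_i-1}$ rather than a product of such factors. The resolution is that one should \emph{not} compose multiplicatively: as in the $k=2$ analysis, the amortization pairs each $X$-task with algorithm-tasks of the \emph{same aggregate size at the level where the algorithm actually schedules}, and the factor-$2$ (here, factor $\frac{2\rho^\star-1}{\rho^\star-1}$) blow-up enters only once, from the single ``small tasks vs.\ one interrupted large task'' trade-off, with the recursion merely guaranteeing that the bound propagates down levels without further degradation — precisely the role condition~(3) of the $n$-busy interval plays. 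Care is also needed that the modified while-loop termination (``completed load $\ge \ell_i - \ell_{i-1}$'') is well-defined when $\rho_i$ is non-integral: the last $(i-1)$-group may make the total land strictly between $\ell_i-\ell_{i-1}$ and $\ell_i$, which is exactly why the slack term in the revised Lemma~\ref{l:l1} is $\ell_{i-1}$ (equivalently the factor is $\frac{\rho_i-1}{\rho_i}$, not $1$), and this must be tracked consistently through the induction's additive constant $f'_n$.
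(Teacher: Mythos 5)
Your proposal matches the paper's argument: the paper proves Claim~\ref{c:mod1} by restating the three inequalities of Lemmas~\ref{l:l1}, \ref{l:l2}, and \ref{l:l3} with the factor $\frac{\rho_n}{\rho_n-1}$ in place of $1$ on the ``large tasks'' side (coming from the $\ell_i - \ell_{i-1}$ slack of the modified grouping rule), leaving the small-task/pending-load accounting of Lemma~\ref{l:l2} essentially unchanged, and then rerunning the induction of Lemma~\ref{l:l3}, which gives $\bigl(1+\frac{\rho_n}{\rho_n-1}\bigr) C^A(T) \geq C^X(T) - f_n$ and hence the ratio $\frac{\rho_n-1}{2\rho_n-1}$, minimized over levels. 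Your resolution of the ``multiplicative composition'' concern — that the rounding loss appears once per amortized pairing rather than compounding across recursion levels — is exactly the point the paper relies on, so the approach is essentially identical.
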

\begin{proof}
Looking at Lemmas~\ref{l:l1},~\ref{l:l2} and~\ref{l:l3}, and $n$-busy intervals, we use the association of each $\ell_i$-task, for $i\geq n$, completed by $X$, with $\rho_{i,n}$ $n$-groups completed by \Amortized at the same interval, hence covering the $\ell_i$ completed load. For the modified algorithm, a task $\ell_i$ completed by $X$ would correspond to a group of tasks of aggregate size at least $\ell_i - \ell_{i-1}$, which would translate to the following three equations:
\begin{equation*}
\frac{\rho_n}{\rho_n - 1} C^A(T) \geq C^X(T,\geq \ell_n) - \ell_k
\end{equation*}
\begin{equation*}
\left(i+\frac{\rho_n}{\rho_n - 1}\right) C^A\big([t_1,t]\big) \geq C^X\big([t_1,t]\big) + P^X(t,< \ell_n)	- \sum\limits_{i=1}^n (\ell_{n+1} + \ell_i)(m^2 + m\rho_{i+1}) - \ell_n
\end{equation*}
\begin{equation*}
\left( 1 + \frac{\rho_n}{\rho_n - 1} \right) C^A(T) \geq C^X(T) - f_n
\end{equation*}
Applying these inequalities in the proof of Theorem~\ref{th:Amortized} we have the result claimed.

Note here, that since $\rho_i$ is not an integer, for every $\rho_i \geq \rho_n$ task completed by $X$, it could be the case that more than $\lfloor \rho_i/\rho_n\rfloor$ groups of length $\rho_n$ are necessary to cover the execution time of $\rho_i$.
\end{proof}

Nonetheless, the tasks completed in each $i$-group, for $i\in[2,k]$ might be of different sizes, making the comparison of the completed load competitiveness of the above claim with the completed load competitiveness of Theorem~\ref{th:Amortized} ambiguous. Another modification is therefore necessary in order to tackle this uncertainty, and we present it with algorithm M$\Amortized$ (see pseudo-code in Alg.~\ref{algMAmort}).\\

{\em Algorithm description.}
Algorithm M\Amortized completes tasks of the same size as long as possible and change only when it is necessary. For that, we split the execution into {\em stages} of total length $ck\ell_k$, where $c\in\mathbb{N}$ is a fixed large constant. At the beginning of a stage, a set of {\em candidate} task sizes is established as $\cc = \Big\{i \Big| \ell_i \big\lfloor \frac{|L_i|}{m^2} \big\rfloor \geq ck\ell_k \Big\}$. Then, the {\em appropriate size} $\ell_{i^*}$ is set as the minimum size in the set of candidate sizes, i.e., $i^* = \min(\cc)$. This is the size of tasks that the algorithm will start executing. The appropriate size is updated after every task completion, checking first whether there has been some change in the set of candidate tasks -- due to new task injections.
More details are given by the pseudo-code in Algorithms~\ref{algMAmort} and~\ref{Mschgroup}.\\

\lstset{numbers=left,
basicstyle=\small,
	numbersep=4pt,
	columns=fullflexible,
	mathescape=true,
	morekeywords={Parameters, Loop, While, Do, If, Else, then, Return, Upon, awaking, or, restart},
}
\lstset{escapeinside={('}{')}}

\setlength{\textfloatsep}{2pt}
\begin{algorithm}[!t]
\caption{ {\bf M$\Amortized$} (for machine $p$)}
\label{algMAmort}
\begin{lstlisting}
Parameters: $m, \{\ell_1, \ell_2,\dots, \ell_k\}$
Upon awaking or restart
	Repeat
		Get all pending queues from the Repository, $L_i$;
		$C \gets \Big\{i \Big| \ell_i\Big\lfloor \frac{|L_i|}{m^2} \Big\rfloor \geq ck\ell_k \Big\}$;
		While $\Big\{i \Big| \ell_i\Big\lfloor \frac{|L_i|}{m^2} \Big\rfloor \geq ck\ell_k \Big\} = \emptyset$ Do
			execute task $\ell$ at position $(p\cdot m) \mod |Q|$ in $Q$;
			Inform the Repository for the task completion;
		$C \gets \Big\{i \Big| \ell_i\Big\lfloor \frac{|L_i|}{m^2} \Big\rfloor \geq ck\ell_k \Big\}$;
		$i^* \gets \min(C)$;
		For $a = 1$ to $ck$ Do
			$\ell' \gets Schedule\_Group(k)$;
\end{lstlisting}
\end{algorithm}
\begin{algorithm}[!t]
\caption{ {\bf $Schedule\_Group(j)$}}
\label{Mschgroup}
\begin{lstlisting}
Parameters: $m, j, \{\ell_1,\ell_2,\dots,\ell_k\}, \{L_1,L_2,\dots,L_j\}$
	$\ell \gets 0$;
	While $\ell \leq \ell_j - \ell_{i^*}$ Do
		If $j > i^*$ then
			$\ell' \gets Schedule\_Group(j-1)$;
			$\ell \gets \ell + \ell'$;
		Else
			execute task $\ell_j$ at position $p\cdot m$ in $L_j$;
			If task $\ell_j$ completed successfully then
				Inform the Repository for the task completion;
				$\ell \gets \ell_j$;
			$C \gets C \cup \Big\{i \Big| \ell_i\Big\lfloor \frac{|L_i|}{m^2} \Big\rfloor \geq ck\ell_k \Big\}$;
			$i^* \gets \min(C)$;
	Return $\ell$
\end{lstlisting}
\end{algorithm}

%

As a first observation, let us clearly note that following algorithm M\Amortized, only tasks of size $\ell_{i^*}$ are scheduled, unless there are not enough tasks to guarantee non-redundancy (when set $\cc = \emptyset$), in which case, a task $\ell$ at position $p\cdot m$ of the whole queue $Q$ is scheduled. To see this clearly, look at lines 6-12 in Alg.~\ref{algMAmort} and lines 4,7 and 8 in Alg.~\ref{Mschgroup}.
It is also important to note that parameter $i^*$ can only decrease in each stage. This is because at the beginning of the stage there are enough pending $\ell_i$-tasks for each candidate task size $i \in \cc$ to cover a time interval of length $ck\ell_k$ (line 9 of Alg.~\ref{algMAmort}).
Also, like algorithm \Amortized, the modified algorithm M\Amortized belongs to the $\GroupLIS$\xspace algorithms and has the property of {\em non redundancy} when enough tasks are pending, which we show in the following lemma.

\begin{lemma}
Algorithm M\Amortized never completes the same task more than once within Schedule\_Group (Alg.~\ref{Mschgroup}).
\end{lemma}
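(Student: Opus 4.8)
The plan is to reduce the statement for M\Amortized to the non-redundancy property already established for the class \GroupLIS in Lemma~\ref{l:nonred}, exactly as was done for \Amortized via Observation~\ref{o:property1}. So the first step is to argue that M\Amortized is a \GroupLIS algorithm: it maintains the pending tasks in per-size queues $L_i$ sorted by arrival time, and whenever a machine $p$ executes a task of size $\ell_j$ inside {\em Schedule\_Group} it does so at position $p\cdot m$ of $L_j$ (line 8 of Alg.~\ref{Mschgroup}); so by definition it suffices to check that such a scheduling happens only when $|L_j|\geq m^2$. Executions at position $(p\cdot m)\bmod|Q|$ happen only when $\cc=\emptyset$ (lines 6--8 of Alg.~\ref{algMAmort}), and those are explicitly outside the scope of the lemma — they are the ``few tasks pending'' case, and we ignore them just as in Lemma~\ref{l:enough}'s preamble case.

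Second, I would show the key arithmetic fact: when M\Amortized executes an $\ell_j$-task in line 8 of Alg.~\ref{Mschgroup}, the queue $L_j$ has at least $m^2$ tasks. The reason is the stage structure. A task of size $\ell_j$ is scheduled in line~8 only when $j=i^*$ (the {\tt If} test $j>i^*$ in line~4 fails while the loop is still running). And $i^*=\min(\cc)$ where $\cc=\{\,i\mid \ell_i\lfloor |L_i|/m^2\rfloor \geq ck\ell_k\,\}$, updated only by unions (line~11 of Alg.~\ref{Mschgroup}) and by the resets at the start of a stage. Since $i^*\in\cc$, we have $\ell_{i^*}\lfloor |L_{i^*}|/m^2\rfloor \geq ck\ell_k>0$, hence $\lfloor |L_{i^*}|/m^2\rfloor\geq 1$, i.e. $|L_{i^*}|\geq m^2$. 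One must also check this remains true throughout the $ck$ group iterations in a single stage and across the recursive calls within one {\em Schedule\_Group} invocation: here I would use that the total length executed in a stage is at most $ck\ell_k$ (bounded by the loop count and the fact that each {\em Schedule\_Group$(k)$} produces aggregate size at most $\ell_k$, plus at most one interrupted task), so at most $ck\ell_k/\ell_{i^*}\leq ck\ell_k/\ell_{i^*}$ tasks of size $\ell_{i^*}$ are consumed per machine, and the initial slack $\ell_{i^*}\lfloor|L_{i^*}|/m^2\rfloor\geq ck\ell_k$ means there are at least $ck\ell_k/\ell_{i^*}\cdot m$ such tasks — enough to absorb $m$ machines each consuming that many — leaving $\geq m^2$ pending whenever a scheduling decision is made. (This is the same accounting idea used to size the $m^2+m\rho_{i+1}$ threshold in Lemma~\ref{l:enough}, now simplified because the candidate set already guarantees a full stage's worth.)

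Third, once $|L_j|\geq m^2$ holds at every scheduling decision inside {\em Schedule\_Group}, M\Amortized satisfies the \GroupLIS hypotheses, so Lemma~\ref{l:nonred} applies: on any busy interval $T$ during which a queue has $\geq m^2$ pending tasks, any two full task executions by different machines are executions of distinct tasks, and since a single machine never re-executes a task it completed, no task is completed twice. I would then remark (as in Observation~\ref{o:property1}) that, in particular, the same task is never simultaneously executed on two machines, which gives the ``more than once'' conclusion in full.

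The main obstacle I expect is the bookkeeping in the second step: unlike \Amortized, where $\rho_{i+1}$ was an integer and the $m^2+m\rho_{i+1}$ threshold was tailored to the $\rho_{i+1}$ recursive iterations, here the loops run ``while aggregate length $\leq \ell_j-\ell_{i^*}$'' with possibly non-integer ratios, and $i^*$ can decrease mid-stage. One must make sure that the $ck\ell_k$ budget chosen at stage start genuinely dominates everything that can be consumed before the candidate set is next recomputed — including tasks consumed at a smaller appropriate size after $i^*$ drops — and that when $i^*$ drops to some new value $i'$, membership $i'\in\cc$ was already guaranteed at the point of update (line~11), so the $\geq m^2$ bound for $L_{i'}$ still holds. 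Handling this ``$i^*$ decreases'' case carefully, rather than the vanilla recursion, is where the real work lies; everything else is an invocation of Lemma~\ref{l:nonred}.
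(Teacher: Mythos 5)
Your proof takes essentially the same route as the paper's: reduce to showing that whenever line~8 of Alg.~\ref{Mschgroup} executes an $\ell_j$-task we have $j=i^*$ and $|L_{i^*}|\geq m^2$, then invoke the \GroupLIS\ structure and Lemma~\ref{l:nonred} for non-redundancy. The paper reaches $|L_{i^*}|\geq m^2$ by the same chain ($i^*\in\cc$, hence $\ell_{i^*}\lfloor |L_{i^*}|/m^2\rfloor\geq ck\ell_k$, hence $|L_{i^*}|\geq(ck\lfloor\rho_{k,i^*}\rfloor+1)m^2\geq m^2$) and stops there; it does not spell out the persistence accounting you flag in your second and fourth paragraphs --- namely that $\cc$ is only ever unioned into (line~11 of Alg.~\ref{Mschgroup}), so $i^*\in\cc$ does not by itself certify $|L_{i^*}|\geq m^2$ at the later moment of a scheduling decision. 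Your observation that the candidate-set threshold provides slack of order $ck\lfloor\rho_{k,i^*}\rfloor m^2$, enough to absorb the at most $m\cdot ck\ell_k/\ell_{i^*}$ tasks of size $\ell_{i^*}$ that $m$ machines can consume in a stage, is the missing justification that makes the paper's terse conclusion hold; it is a refinement of the same argument rather than a different one.
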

\begin{proof}
Let us begin by showing that the algorithm schedules a $\ell_j$-task in function {\em Schedule\_Group} only when there are at least $m^2$ tasks in the corresponding pending queue $L_j$.
Looking at the pseudo-code, a task $\ell_j$ is scheduled in line 8 of Alg.~\ref{Mschgroup}, only when the condition in line 4 does not hold, and hence $j=i^*$. From lines 9 and 10 of Alg.~\ref{algMAmort} and lines 12 and 13 of Alg.~\ref{Mschgroup}, we know that $i^*$ belongs to the set of {\em candidate} task sizes, for which every task size has at least as many tasks pending as necessary to ``cover" $ck\ell_k$ time, i.e., $ck$ calls to the {\em Schedule\_Group($j$)} function. This number of tasks is:
$$
\ell_j\left\lfloor \frac{|L_j|}{m^2} \right\rfloor \geq ck\ell_k \Rightarrow
\left\lfloor \frac{|L_j|}{m^2} \right\rfloor \geq ck\big\lfloor\rho_{k,i}\big\rfloor 	\Rightarrow |L_i| \geq \big(ck\big\lfloor\rho_{k,i}\big\rfloor + 1\big)m^2.
$$
What is more, since $|L_i| \geq m^2$, Lemma~\ref{l:nonred} holds for algorithm M\Amortized as well (it belongs to the $GroupLIS$ algorithms), and hence combining the two, we have the property claimed.
\end{proof}

%
%

\remove{
\begin{definition}
An execution of {\em Schedule\_Group($k$)} is considered to be {\em uniform}, if the algorithm completes tasks of a single fixed size $\ell_i$ only, during the executions of the current {\em Schedule\_Group($k$)} as well as the previous one.
\end{definition}

\begin{claim}
\label{c:uniform}
At least $ck-2k$ calls of {\em Schedule\_Group($k$)} in a stage of M\Amortized are uniform.
\end{claim}
\begin{proof}
As already mentioned, the value of $i^*$ can only decrease during a stage. Since there are up to $k$ task sizes, this can happen up to $k-1$ times. However, for a {\em Schedule\_Group($k$)} execution to be uniform, its previous execution must also be uniform. Hence, there can be at least $ck - 2k$ calls of {\em Schedule\_Group($k$)} in a stage that are uniform.
\end{proof}
}


\begin{lemma}
\label{l:Mamortized}
Algorithm M\Amortized has completed-load competitiveness at least $\min\limits_{1\leq j < i \leq k} \Big\{ \frac{ \overline{ \rho_{i,j} } }{\overline{ \rho_{i,j} } + \rho_{i,j}} \Big\} \cdot c'$, where $c'$ is a constant that depends on parameter $c$ of the algorithm. For large enough $c$, $c'$ can be arbitrarily close to 1.
\end{lemma}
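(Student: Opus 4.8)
The plan is to mirror the analysis of algorithm $\Amortized$ (Lemmas~\ref{l:l1}--\ref{l:l3} and Theorem~\ref{th:Amortized}), but now using the ``uniform stage'' structure of M$\Amortized$ to handle the fact that $\rho_{i,j}$ need not be an integer. First I would restrict attention, as in the proof of Theorem~\ref{th:Amortized}, to the portion of the execution where the machine is busy and the candidate set $\cC$ is nonempty, since the remaining portion contributes only a bounded amount of pending load (the stages with $\cC=\emptyset$ have bounded total load because once $\cC=\emptyset$ the algorithm work-conservingly drains tasks, and this happens finitely often before enough tasks accumulate again) and therefore does not affect the ratio in the limit. Next I would recall the key structural fact about M$\Amortized$: within a stage the appropriate size $i^*$ can only decrease, so it changes at most $k-1$ times; hence among the $ck$ calls to $Schedule\_Group(k)$ in a stage, all but at most $k$ are ``uniform,'' i.e.\ complete tasks of a single fixed size $\ell_{i^*}$ (and the previous call was also uniform). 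Thus a $1-O(1/c)$ fraction of each stage is spent completing tasks of one fixed size.

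The heart of the argument is then a per-stage, per-size comparison against an arbitrary offline algorithm $X$. Fix a maximal uniform run of $Schedule\_Group(k)$ calls in which M$\Amortized$ completes only $\ell_{i^*}$-tasks, spanning a time interval $T$. By the non-redundancy lemma just proved (and Lemma~\ref{l:nonred}), M$\Amortized$ completes distinct $\ell_{i^*}$-tasks at full speed whenever it is not crashed, so it completes roughly $\lfloor |T_{\mathrm{alive}}|/\ell_{i^*}\rfloor$ of them, where $|T_{\mathrm{alive}}|$ is the total alive time in $T$. Meanwhile $X$, during the same $T$, completes tasks of various sizes $\ell_j$; I would invoke the divisibility-free amortization argument of~\cite{Anta2013measuring} / Claim~\ref{c:mod1}: for each $\ell_j$-task completed by $X$ with $\ell_j \ge \ell_{i^*}$, the alive time $X$ spent is $\ell_j \ge \overline{\rho_{j,i^*}}\,\ell_{i^*}$, which M$\Amortized$ converts into at least $\overline{\rho_{j,i^*}}$ completed $\ell_{i^*}$-tasks, i.e.\ load $\overline{\rho_{j,i^*}}\,\ell_{i^*} \ge \frac{\overline{\rho_{j,i^*}}}{\overline{\rho_{j,i^*}}+\rho_{j,i^*}}\cdot\ell_j$ per unit of $X$'s load; for each $\ell_j$-task completed by $X$ with $\ell_j < \ell_{i^*}$, I would argue that the stock of such small tasks available to $X$ but ``unusable'' by M$\Amortized$ is bounded (since $i^* = \min(\cC)$ means sizes below $i^*$ were \emph{not} candidates, so their pending queues held fewer than $ck\ell_k/\ell_i$ tasks at the stage start), contributing only an additive $O(ck\ell_k)$ term per stage. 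Summing over the $\le k$ non-uniform calls (each costing $O(\ell_k)$ load discrepancy) and over all stages, the additive losses are $O(k\ell_k)$ per stage against $\Theta(ck\ell_k)$ of productive work, giving the factor $c' = 1 - O(1/c)$; and since some single pair $(i,j)$ realizes the minimum ratio, the worst uniform run gives exactly $\min_{1\le j<i\le k}\frac{\overline{\rho_{i,j}}}{\overline{\rho_{i,j}}+\rho_{i,j}}$. Letting the total completed load of $X$ go to infinity kills all additive terms, yielding the claimed bound $\min_{1\le j<i\le k}\big\{\frac{\overline{\rho_{i,j}}}{\overline{\rho_{i,j}}+\rho_{i,j}}\big\}\cdot c'$.

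The main obstacle I anticipate is the bookkeeping around the interfaces between stages and between uniform runs: a crash can straddle a stage boundary, $X$ may carry a half-done large task across such a boundary, and the $n$-busy-interval condition ($P^A(t_1,\ell_i)\ge P^X(t_1,\ell_i)$ for relevant $i$) used in the $\Amortized$ analysis needs a suitable analogue here — one must verify that at the start of each uniform run the algorithm's stock of $\ell_{i^*}$-tasks is not dominated by $X$'s, which follows from the candidacy threshold $ck\ell_k$ but requires care because $X$ could have hoarded tasks of exactly size $\ell_{i^*}$. I would handle this exactly as in Lemma~\ref{l:l3}'s inductive step, introducing an auxiliary offline algorithm $X'$ that ignores the over-stock at run boundaries and differs from $X$ by only an additive $O(ck\ell_k)$ per stage. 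The rest — choosing $c$ large enough that $\sum(\text{additive losses})/(\text{productive load})<\varepsilon$, and taking $t\to\infty$ — is routine and parallels Theorem~\ref{th:Amortized} verbatim.
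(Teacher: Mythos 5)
Your proposal is a genuinely different decomposition from the paper's. The paper's proof factors entirely through the existing machinery: it takes the $\delta$-modified versions of Lemmas~\ref{l:l1}--\ref{l:l3} (with $\delta = \max_{i>j}\rho_{i,j}/\overline{\rho_{i,j}}$ replacing the factor $1$, and $1+\delta$ replacing $2$), applies them to the uniform portion of the execution exactly as in Theorem~\ref{th:Amortized}, invokes Claim~\ref{c:mod1} to get an $\eta$-competitiveness bound on the \emph{non}-uniform calls, and then splits $C^X = C^X_1 + C^X_2$ and combines the two inequalities via the $2/c$ fraction from Claim~\ref{c:uniform}. What that structure buys is that all the delicate bookkeeping — crashes straddling boundaries, $X$ carrying a half-done large task, $X$ hoarding tasks until after a run ends — is already absorbed into the inductive $X'$ argument of Lemma~\ref{l:l3} and never has to be re-examined. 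Your plan instead tries a direct per-uniform-run, per-task charging argument.

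There is a gap in the direct route, and it lives precisely in the place the paper's recursion protects you. Your per-task step claims that for each $\ell_j$-task completed by $X$ with $j \ge i^*$, during ``the alive time $X$ spent'' M$\Amortized$ completes at least $\overline{\rho_{j,i^*}}$ tasks of size $\ell_{i^*}$, giving a per-task load ratio $\overline{\rho_{j,i^*}}/\rho_{j,i^*}$, which you then slacken to $\overline{\rho_{j,i^*}}/(\overline{\rho_{j,i^*}}+\rho_{j,i^*})$. But this charging is only valid when $X$'s execution of that task is fully contained inside the same uniform run of M$\Amortized$ on the same machine; $X$ can instead defer a large task and complete it just after the uniform run ends, or straddle a size switch or a crash boundary, and then M$\Amortized$ is \emph{not} doing $\ell_{i^*}$-work during $X$'s execution interval. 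That is exactly the phenomenon Lemma~\ref{l:l3} neutralizes via the auxiliary offline algorithm $X'$; you mention that trick at the end, but you treat it as a separate bookkeeping concern rather than recognizing that it is what makes your per-task matching legitimate in the first place. Relatedly, the tighter ratio $\overline{\rho}/\rho$ you ``derive and then weaken'' cannot actually be correct as a per-run bound (it would beat the known upper bound $\gamma$); the loss to $\gamma = 1/(1+\delta)$ comes from the two amortization rules in Lemma~\ref{l:l2} charging M$\Amortized$'s load twice, which your direct argument does not surface. A second, smaller issue: you bound the non-uniform calls by saying there are $\le k$ of them costing $O(\ell_k)$ each, but Claim~\ref{c:uniform} gives at most $2k$; and more importantly the paper does not treat M$\Amortized$ as earning zero during those calls — it uses Claim~\ref{c:mod1} to get an $\eta$-competitive bound there, which is what makes the final constant $c' = 1/(1+2/(c\eta))$ come out cleanly. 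Your plan would need to either import Claim~\ref{c:mod1} for that purpose or justify that the cruder zero-credit bound still gives a valid $c' = 1-O(1/c)$.
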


\remove{
\begin{proof}
Let us start by assuming that all the executions of {\em Schedule\_Group} in M\Amortized are uniform.
Every task $\ell_i \geq \ell_n$ completed by $X$, will correspond to a group of $\lfloor \ell_i/\ell_n \rfloor$ tasks of total size $\ell_i$ completed by algorithm M\Amortized. In particular, to the group of tasks that completed their execution successfully during the execution of the task $\ell_i$ by $X$.

Now let $\gamma = \min\limits_{i\leq j < i \leq k} \Big\{ \frac{\lfloor \rho_{i,j} \rfloor}{\lfloor \rho_{i,j} \rfloor + \rho_{i,j}} \Big\}$. Let us also define $\delta_{i,j} = \frac{\rho_{i,j}}{\lfloor \rho_{i,j} \rfloor}$ and $\delta = \max\limits_{i>j} \{\delta_{i,j} \}$. This means that $\gamma = 1/(1+\delta)$. The assignment of tasks completed by M\Amortized, to each $\ell_i$-task completed by $X$, where $i\geq n$, makes the inequalities from Lemmas~\ref{l:l1},~\ref{l:l2} and~\ref{l:l3} as follows:
$$
\delta\cdot C^A(T) \geq C^X(T, \geq \ell_n) - \ell_k
$$
\begin{align*}
\MoveEqLeft[3] (1 + \delta)\cdot C^A([t_1,t]) \\
	\geq{}& C^X([t_1,t]) + P^X(t,<\ell_n) - \sum\limits_{i=1}^n (\ell_{n+1} +\ell_i)(m^2 + m\rho_{i+1}) - \ell_n
\end{align*}
$$
(1 + \delta)\cdot C^A(T) \geq C^X(T) - f_n
$$
Then, if we apply the above inequalities in the proof of Theorem~\ref{th:Amortized}, we obtain the claimed result, without the $c'$ factor, since $\frac{1}{1+\delta} = \min\limits_{1\leq j<i\leq k} \Big\{ \frac{\lfloor \rho_{i,j} \rfloor}{\lfloor \rho_{i,j} \rfloor + \rho_{i,j}} \Big\}$.

Nonetheless, the above case only covers the uniform executions. Let us now consider the cases where there is some execution of {\em Schedule\_Group} that is not uniform. By Claim~\ref{c:uniform}, at most a fraction of calls to {\em Schedule\_Group($k$)} are not uniform; that is, $2/c$ of them. From Claim~\ref{c:mod1}, even without the uniform executions we have the completed load competitiveness at least $\eta = \min\limits_{i\in [2,k]} \Big\{ \frac{\rho_i - 1}{2\rho_i - 1} \Big\}$.

Combining the two cases, we separate the uniform and the non-uniform executions, and denote the corresponding completed load of $X$ by $C^X_1(T)$ and $C^X_2(T)$ respectively. We therefore have the following relationships between the two algorithms:
$$
\left(1 - \frac{2}{c}\right) \cdot C^A(T) \geq \gamma \cdot C^X_1(T) - c_1
$$
$$
\frac{2}{c} \cdot C^A(T) \geq \eta \cdot C^X_2(T) - c_2
$$
for constants $c_1$ and $c_2$ that depend on the task sizes. This means that
$$
\left( 1 - \frac{2}{c} + \frac{2\gamma}{c\eta} \right) \cdot C^A(T) \geq \gamma \cdot \Big( C^X_1(T) + C^X_2(T) \Big) - c_1 - c_2
$$
which leads to the desired completed load competitiveness:
$$
C^A(T) \geq \frac{1}{1 + 2/(c\eta)}\cdot \gamma\cdot C^X(T) - c_1 - c_2
$$
where we can define the $c'$ of the lemma equal to $\frac{1}{1 + 2/(c\eta)}$ and choose such a $c$, large enough, to make $c'$ arbitrarily close to 1. This completed the proof of the lemma.
\end{proof}
}

Consider now adapting the value of $c$ in the executions of algorithm M\Amortized; in particular, gradually increasing it to $2c$ when the total completed load of the executed tasks is big enough to guarantee the current competitive ratio close enough to $c'\gamma$ for the current value of $c$. Following this adaptation, the completed load competitiveness will get arbitrarily close to $\gamma$ after sufficiently long time, giving the following theorem.

\begin{theorem}
\label{th:mkAmortized}
Algorithm M\Amortized can reach the optimal completed-load competitiveness, $\min\limits_{1\leq j < i \leq k} \Big\{ \frac{\overline{\rho_{i,j}}}{\overline{\rho_{i,j}} + \rho_{i,j}} \Big\}$.
\end{theorem}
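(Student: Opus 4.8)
The plan is to combine Lemma~\ref{l:Mamortized} with the adaptive choice of the stage-length parameter $c$ sketched just before the theorem, and then to invoke the matching upper bound from~\cite{Jurdzinski2015}. Write $\gamma=\min_{1\le j<i\le k}\{\overline{\rho_{i,j}}/(\overline{\rho_{i,j}}+\rho_{i,j})\}$. Lemma~\ref{l:Mamortized} already shows that, for a fixed $c$, algorithm M\Amortized is $c'\gamma$-competitive, where $c'=c'(c)$ is a constant that depends on $c$, tends to $1$ as $c\to\infty$, and whose associated additive slack depends on $c$ and the task sizes but not on $t,X,A$ or $E$. The one remaining task is to upgrade ``arbitrarily close to $\gamma$ for each fixed $c$'' into ``as close to $\gamma$ as desired as time grows''.

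First I would make the adaptive schedule precise: partition the execution into epochs, run with $c=c_r:=2^rc_0$ during epoch $r$ for a base constant $c_0$, and move from $c_r$ to $c_{r+1}$ once the load completed within the current epoch exceeds a threshold $\Theta_r$ chosen large enough that, restricted to epoch $r$, Lemma~\ref{l:Mamortized} already gives $C^A\ge(c'(c_r)-2^{-r})\,\gamma\,C^X$ up to a fixed additive term (this is possible because the additive slack of Lemma~\ref{l:Mamortized} at $c=c_r$ is a fixed constant that $\Theta_r$ can be made to dominate). Changing $c$ only resets the stage bookkeeping --- the candidate set $\cc$, the index $i^*$, the within-stage counter --- and wastes at most one partially completed stage together with the bounded-pending-load regime in which $\cc=\emptyset$, so each epoch boundary costs an additive $O(c_rk\ell_k)$ only. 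Within an epoch, completed load and time both increase unless every machine is permanently stalled, in which case both $C^A$ and $C^X$ are bounded there and the competitive inequality holds trivially; so each epoch is left behind after finite time once $X$ makes progress.

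Then I would sum the per-epoch guarantees. Fix $\epsilon>0$, pick $R$ with $c'(c_R)\ge 1-\epsilon/\gamma$ and $2^{-R}\le\epsilon/\gamma$, and consider any time $t$ in some epoch $r\ge R$. The epochs $0,\dots,R-1$ are finished and contribute at most a fixed constant to both $C^A_t$ and $C^X_t$; for each epoch $r'$ with $R\le r'\le r$ we have $C^A\ge(\gamma-\epsilon)\,C^X$ inside that epoch, up to an additive $O(c_{r'}k\ell_k)$ which I would absorb by making $\Theta_{r'+1}$ large relative to it, so that the running deficit stays bounded rather than accumulating. This gives $C^A_t\ge(\gamma-\epsilon)\,C^X_t+\Delta_\epsilon$ for a constant $\Delta_\epsilon$ independent of $t,X,A,E$, and since $\epsilon>0$ was arbitrary, M\Amortized attains completed-load competitive ratio $\gamma$ in the limit. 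For the matching upper bound, Theorem~2 of~\cite{Jurdzinski2015} shows that no online algorithm exceeds $\gamma$ already on a single machine, and the adversary can reproduce the one-machine setting here by crashing $m-1$ of the machines for the whole execution; hence $\gamma$ is also an upper bound for $m$ machines and the ratio is optimal.

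The step I expect to be the main obstacle is precisely this boundary bookkeeping. One must verify that doubling $c$ does not disturb the amortizations of Lemmas~\ref{l:l1}--\ref{l:l3}, whose constants $f_n$ and the terms $\sum(\ell_{n+1}+\ell_i)(m^2+m\rho_{i+1})$ all scale with $c$ and hence grow from epoch to epoch; that the finitely many ``ramp-up'' stages with $\cc=\emptyset$ or with $i^*$ still shrinking contribute only bounded pending load; and, most delicately, that the per-epoch additive costs can be kept bounded (via the thresholds $\Theta_r$) rather than accumulating to infinity over the infinitely many epochs --- this is exactly what legitimizes passing to the limit $\gamma$ instead of stopping at $c'(c)\,\gamma$ for each fixed $c$.
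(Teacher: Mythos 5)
Your proposal follows the same route the paper takes, namely invoking Lemma~\ref{l:Mamortized} and then adaptively doubling the stage-length parameter $c$ so that $c'(c)\gamma$ converges to $\gamma$. Notably, the paper does not actually spell out a proof of this theorem at all: what serves as its ``proof'' is the two-sentence informal sketch immediately preceding the theorem statement, asserting that gradually increasing $c$ to $2c$ once enough load has been completed makes the ratio approach $\gamma$. Your write-up is therefore strictly more detailed than the source, and it correctly isolates the one genuinely delicate point that the paper glosses over: the additive slack of Lemma~\ref{l:Mamortized} scales with $c$, so the per-epoch deficits grow geometrically, and one must choose the epoch thresholds $\Theta_r$ to grow at least proportionally so that by the time one is in epoch $r$, the accumulated additive cost is a vanishing (or $\epsilon$-bounded) fraction of $C^X_t$. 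Your sketch of that bookkeeping is sound (a geometric sum of deficits dominated by the last term, dominated in turn by the previously accrued $C^X$ once $\Theta_r \gtrsim c_r k \ell_k/\epsilon$), and your reduction of the $m$-machine upper bound to the one-machine bound of Jurdzinski et al.\ via crashing $m-1$ machines matches the argument the paper gives in the paragraph preceding Lemma~\ref{l:Mamortized}. In short: same approach, with the details the paper omits filled in correctly.
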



\section{Speedup}
\label{sec:Speedup}

Let us now look at the case in which the machines have speedup $s\geq 1$. As we have mentioned, the negative results (upper bounds) of completed-load competitiveness of any work-conserving algorithm $\Alg_W$ shown for the setting of $1$ machine, still hold for the case of $m$ machines. However, the positive results (lower bounds) may not. In this case, we show that for specific amounts of speedup two positive results are preserved in the multiple machine setting.

\begin{theorem}
\label{t:srho}
Any distributed work-conserving algorithm $\Alg_W$, running on a system with $m$ machines with speedup $s \geq \rho$, that guarantees non redundant executions while there are at least $m^2$ tasks pending, has a completed-load competitive ratio $\cC(\Alg_W) \geq 1/\rho$.
\end{theorem}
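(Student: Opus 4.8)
The plan is to prove the following stronger, purely combinatorial statement and then pay a single factor $\lmin/\lmax=1/\rho$ when translating task counts back to completed load: \emph{$\Alg_W$ completes, up to an additive $O(m^2)$ term, at least as many distinct tasks as $X$ does}. Fix a time $t$, an arbitrary comparison algorithm $X$, and admissible patterns $A,E$. First dispose of the ``short queue'' regime: at any time $t'$ with $|Q^s_{t'}(\Alg_W,A,E)|<m^2$ the pending load of $\Alg_W$ is below $m^2\lmax$, so $C^s_{t'}(\Alg_W)=I_{t'}-(\text{pending load})>I_{t'}-m^2\lmax\ge C^1_{t'}(X)-m^2\lmax$, which already gives the claim at $t'$ with additive constant $-m^2\lmax$. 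So assume $\Alg_W$ has at least $m^2$ pending tasks at $t$, and let $\sigma\le t$ be the last time at or before $t$ with fewer than $m^2$ pending tasks (set $\sigma=0$ if there is none). By the previous line $C^s_\sigma(\Alg_W)\ge\tfrac1\rho C^1_\sigma(X)-m^2\lmax$, so by telescoping it suffices to show that during $(\sigma,t]$ the load completed by $\Alg_W$ is at least $\tfrac1\rho$ times the load completed by $X$.

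Throughout $(\sigma,t]$ the queue of $\Alg_W$ never drops below $m^2$, so it is never empty and, by hypothesis, $\Alg_W$'s executions there are non-redundant; in particular its completion events in $(\sigma,t]$ are for pairwise distinct, previously uncompleted tasks. I would then compare the two schedules machine by machine and alive-interval by alive-interval. Fix a machine $p$ and a maximal interval $I$ during which $p$ is alive inside $(\sigma,t]$. To $I$ attribute every task that $X$ first reports complete on $p$ during $I$; these full executions have length $\ge\lmin$ and are pairwise disjoint, so $I$ receives at most $\lfloor|I|/\lmin\rfloor$ of them (plus at most one boundary execution straddling the left end of the window), and summing over all $p$ and $I$ accounts for every distinct task $X$ completes in $(\sigma,t]$. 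On the other hand $\Alg_W$ is work-conserving with a nonempty queue throughout $I$, hence runs tasks back to back, each of duration $\le\lmax/s\le\lmax/\rho=\lmin$; so it completes at least $\lfloor|I|\,s/\lmax\rfloor$ tasks in $I$. The inequality that makes everything fit is $\lfloor|I|\,s/\lmax\rfloor\ge\lfloor|I|/\lmin\rfloor$, which holds precisely because $s\ge\rho$ (equivalently $s/\lmax\ge1/\lmin$) and the floor is monotone. Hence, in each interval and so after summing over all machines and intervals, $\Alg_W$ performs at least as many completion events — thus, by non-redundancy, completes at least as many distinct tasks — as $X$ completes distinct tasks in $(\sigma,t]$, up to an $O(m)$ term from the window boundary.

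Finally, translate counts back to load: each distinct task $X$ completes has size $\le\lmax$ and each distinct task $\Alg_W$ completes has size $\ge\lmin$, so
\[
C^s_t(\Alg_W)-C^s_\sigma(\Alg_W)\;\ge\;\lmin\big(|N^s_t(\Alg_W,A,E)|-|N^s_\sigma(\Alg_W,A,E)|\big)\;\ge\;\lmin\big(|N^1_t(X,A,E)|-|N^1_\sigma(X,A,E)|\big)\;\ge\;\tfrac1\rho\big(C^1_t(X)-C^1_\sigma(X)\big),
\]
and adding the base estimate at $\sigma$ yields $C^s_t(\Alg_W)\ge\tfrac1\rho C^1_t(X)-\Delta_C$ with $\Delta_C=\Theta(m^2\lmax)$, a constant independent of $t,X,A,E$, as required. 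The step I expect to require the most care is the interval-by-interval count in the presence of crashes: one must argue that a crash never costs $\Alg_W$ anything relative to $X$ — both lose their in-progress task, but $\Alg_W$'s tasks are never longer than $X$'s, which is exactly what the floor inequality encodes — and one must pin the reset time $\sigma$ to the last short-queue moment, so that the ``$<m^2\lmax$'' of pending load $\Alg_W$ may carry outside the non-redundancy guarantee is charged only once, to the additive constant, rather than once per short-queue phase.
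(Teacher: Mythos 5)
Your proof is correct and follows essentially the same approach as the paper's: compare $\Alg_W$ and $X$ machine-by-machine over alive intervals, counting completed tasks and using $\lmax/s \leq \lmin$ (from $s \geq \rho$) to conclude that $\Alg_W$ completes at least as many distinct tasks as $X$, then pay the factor $1/\rho$ when converting task counts to load. Your explicit treatment of the short-queue regime, resetting the bookkeeping to the last time $|Q|<m^2$ and charging the $\Theta(m^2\lmax)$ slack only once to the additive constant, is a cleaner presentation of a step the paper handles somewhat informally, but it is not a different route.
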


\begin{proof}
\remove{	
We consider any distributed work-conserving algorithm $\Alg_W$, running on $m$ parallel machines with speedup $s \geq \rho$.
For the proof of the theorem we first consider only the periods of execution during which there are at least $m^2$ pending tasks and look at the number of pending and completed tasks. During the remaining time of the executions, the completed load is bounded by the number of tasks pending (i.e., $< m^2\lmax$).

Let us then consider the execution of each machine of the system individually, and look at the number of pending tasks. In particular, observe first that for time instant $t = 0$, at the beginning of an execution of each machine $p$, $P_p^A \leq P_p^X$. Then consider any time instant $t>0$ and a corresponding $t'<t$ in its execution, such that $t'$ is the latest time before $t$ that the machine has either crashed or restarted, or a point where the pending queue of $\Alg$ has less than $m^2$ tasks, i.e., $P^A(t') < m^2$. By the definition of $t'$ there are always at least $m^2$ tasks within interval $T = (t',t]$. By induction hypothesis, at time $t'$, $P_p^A(t') \leq P_p^X(t')$.

Now let $I_T$ be the number of tasks injected during interval $T$. Since $\Alg$ is a work-conserving algorithm, it is continuously scheduling and executing tasks in the interval $T$. What is more, we know that it need at most $\frac{\lmax}{s} \leq \lmin$ time to execute any task, since $s \geq \rho = \frac{\lmax}{\lmin}$. This means that
$P_p^A(t) \leq P_p^A(t') + I_T - \Big\lfloor \frac{t - t'}{\lmax/s} \Big\rfloor \leq P_p^A(t') + I_T - \Big\lfloor \frac{t - t'}{\lmin} \Big\rfloor$.
On the other hand, algorithm $X$ needs at least $\lmin$ time to complete a task, which means that $P_p^X(t) \geq P_p^X(t') + I_T - \Big\lfloor \frac{t - t'}{\lmin} \Big\rfloor$.
This results to $P_p^X(t) - P_p^A(t) \geq 0$.

Combining the result for all machines of the system we have that $P^A(t) = \sum\limits_{i=1}^m P_i^A(t) \leq \sum\limits_{i=1}^m P_i^X(t) = P^X(t)$, which leads to the corresponding number of completed tasks, $C^A(t) \geq C^X(t)$. This leads directly to the desired completed-load competitiveness, $\cC(\Alg) \geq 1/\rho$ since $\Alg$ may be completing only $\lmin$-tasks while $X$ completes $\lmax$ ones.
}
We consider any distributed work-conserving algorithm $\Alg_W$, running on $m$ parallel machines with speedup $s \geq \rho$.
For the proof of the theorem we consider only the periods of execution during which there are at least $m^2$ pending tasks and look at the number of pending and completed tasks. During the remaining time of the executions, the completed load is bounded by the number of tasks pending (i.e., $< m^2\lmax$).

Let us then consider the execution of each machine of the system individually, and look at the number of completed tasks. In particular, observe first that for time instant $t = 0$, at the beginning of an execution of each machine $p$, $|N_p(A,0)| \geq |N_p(X,0)|$. Then consider any time instant $t>0$ and a corresponding $t'<t$ in its execution, such that $t'$ is the latest time before $t$ that the machine has either crashed or restarted. 
By the definition of $t'$ there are always at least $m^2$ tasks within interval $T = (t',t]$. By induction hypothesis, at time $t'$, $|N_p(A,t')| \geq |N_p(X,t')|$.

Now let $I_T$ be the number of tasks injected during interval $T$. Since $\Alg$ is a work-conserving algorithm, it is continuously scheduling and executing tasks in the interval $T$. What is more, we know that it need at most $\frac{\lmax}{s} \leq \lmin$ time to execute any task, since $s \geq \rho = \frac{\lmax}{\lmin}$. This means that
$|N_p(A,t)| \geq |N_p(A,t')| + \Big\lfloor \frac{t - t'}{\lmax/s} \Big\rfloor \geq |N_p(A,t')| + \Big\lfloor \frac{t - t'}{\lmin} \Big\rfloor$.
On the other hand, algorithm $X$ needs at least $\lmin$ time to complete a task, which means that $|N_p(X,t)| \leq |N_p(X,t')| + \Big\lfloor \frac{t - t'}{\lmin} \Big\rfloor$.
This results to $|N_p(A,t)| \geq |N_p(X,t)|$.

Combining the result for all machines of the system we have that $|N(A,t)| = \sum\limits_{i=1}^m |N_i(A,t)| \geq \sum\limits_{i=1}^m |N_i(X,t)| = |N(X,t)|$. This leads to the desired completed-load competitiveness, $\cC(\Alg_W) \geq 1/\rho$ since $\Alg_W$ may be completing only $\lmin$-tasks while $X$ completes $\lmax$ ones.
\end{proof}

\begin{theorem}
\label{t:srhoplus}
Any distributed work-conserving algorithm $\Alg_W$ running in a system with $m$ machines and speedup $s \geq 1 + \rho$, that guarantees non redundant executions while there are at least $m^2$ tasks pending, has completed-load competitive ratio $\cC(\Alg) \geq 1$.
\end{theorem}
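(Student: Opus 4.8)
The plan is to follow the structure of the proof of Theorem~\ref{t:srho}, but to track the completed \emph{load} of each machine over each of its alive intervals rather than the number of completed tasks: the point is that $s\ge 1+\rho$ is exactly the speedup needed to pay for the worst possible mismatch between the sizes of the tasks $\Alg_W$ completes and the sizes of the tasks the offline algorithm $X$ completes. As a first step I would, exactly as there, set aside the portions of the execution during which the repository holds fewer than $m^2$ pending tasks for $\Alg_W$: at every instant of such a \emph{scarce} portion the pending load of $\Alg_W$ is below $m^2\lmax$, so $\Alg_W$ has then completed all but less than $m^2\lmax$ of the load injected so far, which caps by a constant (depending only on $m$ and $\lmax$) the lead $X$ can ever build up over $\Alg_W$ there; this constant is absorbed into $\Delta_C$. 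On the complementary \emph{rich} portions, where the pending count stays $\ge m^2$, the hypothesis guarantees non-redundant execution, and work-conservation keeps every alive machine continuously executing tasks.

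The core of the argument is a per-machine, per-alive-interval inequality. Fix the current time $t$ and the maximal rich interval $[g_0,t]$ containing it, a machine $p$, and a maximal sub-interval $J\subseteq[g_0,t]$ during which $p$ is alive, of length $|J|$. If $J$ begins with a restart of $p$ then, since $\Alg_W$ and $X$ run under the same error pattern, $X$ too loses all partial work at that restart, so every task $X$ completes inside $J$ is processed in full within $J$; as these processing sub-intervals are disjoint, $C_p^X(J)\le |J|$. Now split into two cases. If $|J|<\lmin$, then (all task sizes being at least $\lmin$) $X$ completes nothing in $J$, so $C_p^X(J)=0\le C_p^A(J)$. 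If $|J|\ge\lmin$, then $\Alg_W$, running at speed $s$ and busy throughout $J$, performs $s|J|$ units of work, of which at most $\lmax$ sits in the single task still in progress at the end of $J$; hence
$$
C_p^A(J)\ \ge\ s|J|-\lmax\ \ge\ (1+\rho)|J|-\lmax\ =\ |J|+\big(\rho|J|-\lmax\big)\ \ge\ |J|\ \ge\ C_p^X(J),
$$
the inequality on the bracketed term using $\rho|J|\ge\rho\lmin=\lmax$. In either case $C_p^A(J)\ge C_p^X(J)$. Summing over all alive sub-intervals of $p$ inside $[g_0,t]$ (dead sub-intervals contribute nothing on either side), $C_p^A([g_0,t])\ge C_p^X([g_0,t])$, except that the first such sub-interval may start at $g_0$ rather than at a restart, allowing $X$ to finish one task begun before $g_0$; this costs at most $\lmax$, so $C_p^A([g_0,t])\ge C_p^X([g_0,t])-\lmax$.

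Summing over the $m$ machines, using $C^A([g_0,t])\ge\sum_p C_p^A([g_0,t])-m\lmax$ (by non-redundancy, each machine has at most one task execution straddling $g_0$ that can duplicate a task completed elsewhere before $g_0$) together with the trivial $C^X([g_0,t])\le\sum_p C_p^X([g_0,t])$, one obtains $C^A([g_0,t])\ge C^X([g_0,t])-2m\lmax$. Adding the scarce-portion estimate on the prefix $[0,g_0]$ (and noting that if $t$ itself lies in a scarce portion then $C^A(t)>I_t-m^2\lmax\ge C^X(t)-m^2\lmax$ directly) yields $C^A(t)\ge C^X(t)-\Delta$ for a constant $\Delta$ of order $m^2\lmax$, independent of $t,X,A,E$; that is, $\cC(\Alg_W)\ge 1$. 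I expect the only delicate point to be the bookkeeping around the scarce portions and the boundary $g_0$ — checking that the per-machine invariant really can be re-anchored at the start of each rich interval and that the slack accumulated over all crash/recovery cycles stays a fixed constant rather than growing with their number — which is handled exactly as in Theorem~\ref{t:srho}; the genuinely new ingredient is just the two-case speed-$s$ estimate displayed above.
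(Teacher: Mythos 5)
Your proof is correct, and it follows essentially the same strategy as the paper's: work machine by machine (the error pattern fixes each machine's alive intervals identically for $\Alg_W$ and $X$), set aside the \emph{scarce} periods (fewer than $m^2$ pending tasks) by a one-time additive $O(m^2\lmax)$ absorption, and on the \emph{rich} periods exploit that $s\ge 1+\rho$ gives $s\lambda-\lmax\ge\lambda$ for any $\lambda\ge\lmin$. The small organizational difference is the decomposition: the paper runs an ``induction on time'' anchored at the last restart or $X$-completion before $t$, so that in Case~2 the interval $T$ has length exactly $\ell$ (the size of the task $X$ just finished), and it compares $\Alg_W$'s progress $s\ell-\lmax$ directly against $X$'s exact progress $\ell$. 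You instead sum over whole alive sub-intervals $J$ of the rich period, bound $X$ crudely by $C^X_p(J)\le|J|$, and compare against $C^A_p(J)\ge s|J|-\lmax\ge|J|$; the case split $|J|<\lmin$ vs.\ $|J|\ge\lmin$ replaces the paper's split on whether $X$ completes a task. Both reduce to the same arithmetic inequality. Your version is somewhat cleaner to state because it avoids the continuous ``induction on $t$'' and is more explicit about why the additive slack does not accumulate across epochs — the scarce-instant bound $C^A(g_0)\ge C^X(g_0)-m^2\lmax$ re-anchors the invariant from scratch at the start of every rich interval — a point the paper's proof handles only implicitly. Your extra $m\lmax$ term for the straddling (possibly redundant) task per machine at $g_0$ is a legitimate refinement the paper glosses over.
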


\remove{
\begin{proof}
Let us consider only the periods of the execution where there are at least $m^2$ tasks pending. For the rest, even though redundancy might not be avoided, the completed load of the algorithm is bounded by the number of tasks pending. Let us then consider the execution of each machine of the system separately, denoting by $C^p(X)$ and $P^p(X)$ the completed and pending load of machine $p$ by algorithm $X$. This means that the total completed load of algorithm $X$ is $C(X) = \sum\limits_{i=1}^m C^i(X)$ and the corresponding pending load is $P(X) = \sum\limits_{i=1}^m P^i(X)$.

Consider then, an execution of any work-conserving distributed algorithm $\Alg$ in machine $p$, running with speedup $s \geq 1 + \rho$ under arrival and error patterns $A$ and $E$ respectively, as well as the corresponding executions of offline algorithm $X$ in the same machine. We look at any time $t$ of the execution, and define time instant $t'<t$ to be the latest time before $t$ at which one of the following events happens: (1) and {\em active} period starts ($t'$ is a restart point of $p$), (2) algorithm $X$ has successfully completed a task, or (3) the queue of pending tasks of \Alg has less than $m^2$ tasks, i.e., $P^A(t') < m^2$.

It is then trivial that $P_0^p(\Alg,A,E) \leq P_0^p(X,A,E)$ holds at the beginning of the executions. Now, assuming that $P_{t'}^p(\Alg,A,E) \leq P_{t'}^p(X,A,E)$ holds at time $t'$, we prove by induction that $P_t^p(\Alg,A,E) \leq P_t^p(X,A,E)$ still holds at time $t$. This also means that the tasks successfully completed by \Alg by time $t$ have at least the same total size as the ones completed by $X$.

Looking at the interval $T = (t',t]$, we have to consider the following two cases:\vspace{0.2em}

\noindent {\em Case 1: $X$ is not able to complete any task in the interval $T$.} This means that $P_t^p(X,A,E) = P_{t'}^p(X,A,E) + i_T$, where $i_T$ is the total size of the tasks injected during the interval. For \Alg it holds that $P_t^p(\Alg,A,E) \leq P_{t'}^p(\Alg,A,E) + i_T$, even if it is not able to complete any tasks in $T$. Therefore, $P_t^p(\Alg,A,E) \leq P_t^p(X,A,E)$.\vspace{0.2em}

\noindent {\em Case 2: $X$ completes a task in the interval $T$.} Note that, due to the definition of $t'$, there can only be one task completed by $X$ within $T$, and it must be completed exactly at time instant $t$. There are then two subcases: (a) First, $t'$ is such that case (3) in its definition holds. This means that at all times in $T$ there are at least $m^2$ tasks pending, and thus $P_t^p(\Alg,A,E) \geq m^2$, even though $P_t^p(\Alg,A,E) < m^2 + i_T$. Hence, $i_T \geq 2$. At time $t'$ algorithm $X$ was executing the task completed at $t$, which was injected before $t'$. Thus $X$ has not completed any of the newly injected tasks within $T$. Hence, $P_t^p(X,A,E) \geq i_t \geq \dots$ \ez{[[[HERE, SHOULD WE ASSUME THAT X ALSO HAS AT LEAST $m^2$ TASKS PENDING??]]]}\\
(b) Second, $t'$ is such that case (1) or (2) in its definition holds. Then, interval $T$ has length equal to a task, i.e., $\ell \in [\lmin,\lmax]$; more precisely the size of the task completed by $X$. In $T$ algorithm \Alg executes tasks continuously, whose aggregate size is at least $\ell s - \lmax$. Then, the pending load of the two algorithms at $t$ satisfies $P_t^p(X,A,E) = P_{t'}^p(X,A,E) + i_T - \ell$ and $P_t^p(\Alg,A,E) \leq P_{t'}^p(\Alg,A,E) + i_T - (\ell s - \lmax)$. Observe that the fact that $s\geq 1 + \rho$ implies that $\ell s - \lmax \geq \ell$. Hence, $P_t^p(\Alg,A,E) \leq P_t^p(X,A,E)$.

The above analysis shows that the pending-load competitiveness at any time of the execution of each machine individually is 1. Therefore, taking the sum of pending loads of all machines gives the result claimed.
In other words, the pending-load competitiveness ratio of any work-conserving distributed algorithm that guarantees non redundant executions while there are at least $m^2$ tasks pending, is $\cP(\Alg) \leq 1$, when $s \geq 1 + \rho$. Hence, it also holds that the completed-load competitiveness is $\cC(\Alg) \geq 1$.
\end{proof}
}

In~\cite{Anta2015competitive}, we studied some of the most popular algorithms in task scheduling, in the setting of one machine and analyzed their complete-load competitiveness under various ranges of speedup. Algorithm $\LIS$ becomes $1$-completed-load competitive as soon as $s \geq \max\{\rho,2\}$. However, when looking at its performance in the setting of $m$ machines (see its pseudo-code Alg.~\ref{algLIS} in the Appendix), we realized that even in the case of $2$ machines, it may not achieve $1$-completed-load competitiveness.

\begin{theorem}
\label{t:mLIS}
When algorithm $\LISs$ runs in a parallel system of two machines ($m=2$) and speedup $s = \rho = 2$, it is not $1$-completed-load competitive, i.e., $\cC(\LISs) < 1$.
\end{theorem}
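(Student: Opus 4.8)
The plan is to prove this as an impossibility (lower bound) result: exhibit a single admissible adversarial pattern $(A,E)$ and an offline reference algorithm $X$ under which the completed load of $\LISs$ falls behind that of $X$ by an amount that grows without bound. Since, by the definition in Section~\ref{sec:model}, $1$-completed-load competitiveness of $\LISs$ would mean $C^2_t(\LISs,A,E)\ge C^1_t(X,A,E)+\Delta_C$ for all $t,X,A,E$ with a fixed $\Delta_C$, showing that $C^1_t(X,A,E)-C^2_t(\LISs,A,E)$ can be made arbitrarily large is exactly what is needed for $\cC(\LISs)<1$. The pattern will be built by concatenating a fixed ``gadget'' $N$ times, each copy costing $\LISs$ a constant $\delta>0$ relative to $X$, and then letting $N\to\infty$.

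First I would normalize the parameters: since $\rho=2$ we may take $\lmin=1$ and $\lmax=2$, so that with speedup $s=2$ the algorithm $\LISs$ executes a $\lmin$-task in time $1/2$ and a $\lmax$-task in time $1$, while the clairvoyant offline $X$ (which runs at speed $1$ and may be fully centralized) needs times $1$ and $2$ respectively. Next I would isolate the weakness of $\LISs$ at $m=2$. By Lemma~\ref{l:nonred}, $\LISs$ (being of \GroupLIS type) is guaranteed to avoid redundant executions only inside an interval in which the relevant pending queue holds at least $m^2=4$ tasks; with fewer pending tasks the $(p\cdot m)$-indexing rule funnels both machines onto the same — oldest — task. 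The gadget is therefore designed so that (i) the pending queue seen by $\LISs$ stays below the $m^2$ threshold, so that the two machines of $\LISs$ repeatedly commit, in LIS order, to the same oldest $\lmax$-task and thus progress only at the throughput of a single speed-$2$ machine rather than of two; and (ii) a handful of crash/restart events are timed so as to destroy a bounded but strictly positive amount of $\LISs$'s partial work on those shared $\lmax$-tasks, while never allowing the queue to reach the $\ge 4$-task regime in which $\LISs$ would recover its full two-machine advantage. Against the same pattern, $X$ is made to use clairvoyance to split the two task sizes across its two machines, completing every task exactly once and never losing partial work to a crash; a bounded-queue accounting of injections minus completions then gives, per gadget, $C^1(X)\ge C^2(\LISs)+\delta$ for a fixed $\delta>0$ depending only on $\lmin$ and $\lmax$.

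Finally I would choose the gadget so that its terminal configuration — both pending queues and the up/down status of the two machines — coincides with its initial configuration, so that $N$ back-to-back copies form an admissible pattern (at least one machine alive at every instant) and the per-gadget deficits add. Then at the end $t_N$ of the $N$-th copy, $C^1_{t_N}(X)-C^2_{t_N}(\LISs)\ge N\delta-O(1)$, which is unbounded, so no additive constant $\Delta_C$ can make $C^2_t(\LISs)\ge C^1_t(X)+\Delta_C$ hold for all $t$, i.e., $\cC(\LISs)<1$.

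The main obstacle is step (ii): producing a crash schedule that is admissible yet still forces a genuine per-gadget deficit. This is delicate precisely because — by Lemma~\ref{l:nonred} together with $s=2$ — a single $\LISs$ machine left alive already matches the rate of the entire offline system, so the deficit cannot come from crashing alone; it must be synthesized from the combination of enforced redundancy between the two machines (when both are alive) and a small, repeatedly incurred loss of partial work, and one must check that these effects net out to a true positive constant $\delta$ rather than cancelling. Verifying this requires tracing, crash by crash and completion by completion, exactly which task each of the two machines selects under the $\LISs$ position rule, since the positions in the queues shift whenever the two machines finish tasks of different sizes at different moments; keeping the $\lmax$-queue pinned just below $m^2$ through injections and completions across the whole gadget, while simultaneously giving $X$ a clean clairvoyant schedule that provably does better, is the technically demanding part of the argument.
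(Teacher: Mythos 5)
Your high-level framing is right (exhibit an admissible $(A,E)$ with an unbounded completed-load deficit, then note that no additive constant $\Delta_C$ can absorb a gap of the form $N\delta$), but the mechanism you pin the deficit on — forcing \emph{redundant} execution by keeping $|Q|<m^2=4$ — is the wrong one and does not actually yield a deficit. First, with $m=2$ the position rule $(p\cdot m)\bmod|Q|$ funnels both machines onto the same task only when $|Q|\le 2$; for $|Q|=3$ the two machines already land on positions $0$ and $2$, so there is no redundancy there. Second, and more fundamentally, redundancy at $s=\rho=2$ is cost-neutral at best: if both $\LISs$ machines execute the same task, $\LISs$ still clears work at rate $s=2$ on one effective machine, while $X$ on the same two (un-sped-up) machines clears at total rate at most $1+1=2$, so the rates tie rather than diverge. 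And the ``loss of partial work'' you hope to stack on top actually cuts the \emph{other} way: crashing a machine destroys partial work for $X$ as well, and redundant $\LISs$ execution is \emph{more} robust to a single crash, since the surviving machine still holds the same progress. So your gadget, as designed, produces no strictly positive per-copy deficit $\delta$, which is exactly the quantity your argument needs.

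The paper's construction exploits a completely different weakness, with at most \emph{one} machine alive at any time. When only machine $p$ is up, $\LISs$'s GroupLIS rule forces $p$ to schedule the task at position $p\cdot m$ in LIS order rather than the oldest task or a task of its choosing. The adversary uses a richer palette of six task sizes ($\delta^0,\dots,\delta^5$ with $\delta=\rho^{1/5}$), an injection sequence $1,\delta^2,\delta^2,\delta^4,\delta,\delta^3,\delta^3,\delta^5,\dots$, and an eight-phase epoch of crashes/restarts timed so that, in each phase, the unique surviving machine under $\LISs$ is handed (by the position rule) a task that does \emph{not} fit in the time before the next crash, whereas $X$ (running at speed $1$ on the same machine) has the freedom to pick a task that fits exactly. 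Summing over an epoch gives $C(X)\approx 11.62$ versus $C(\LISs)\approx 11.56$, a fixed positive per-epoch gap, repeated indefinitely. So the deficit comes from $\LISs$'s \emph{inflexibility} in its ordering (and the offset inherited from the machine id), not from redundant execution. If you want to make your proof go through, you would need to redesign your gadget around this single-machine, wrong-position/wrong-size mechanism; the sub-$m^2$ redundancy regime is not where $\LISs$ is vulnerable at $s=\rho=2$.
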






\section{Conclusions}
\label{sec:conclusions}

In this work, we present the problem of online distributed scheduling of tasks with different computational demands on fault-prone parallel systems. We conduct worst-case analysis of deterministic work-conserving algorithms, looking at their completed-load competitiveness as the performance metric.

We show that the upper bound shown for the case of a single machine and no speedup in~\cite{Anta2013measuring} can be achieved in our setting with $m$ machines, making the result a tight bound. Additionally, the algorithms for scheduling packets of $k$ packet lengths in one link in~\cite{Jurdzinski2015} can also be adapted to task scheduling in one machine, and then non-trivially generalized to $m$ machines. Hence,
we present algorithms for the cases of two or $k$ different task sizes that achieve optimal completed load when run without speedup.

We also show that in the case of speedup, $s > 1$, the competitiveness can be improved. In particular, when speedup is $s \geq 1 +\rho$, any deterministic work-conserving algorithm $\Alg_W$ may achieve optimal completed load $\cC(\Alg) = 1$.
However, we also give a negative result for algorithm $\LISs$, the natural parallel version of the popular {\em Longest In System} scheduling policy. We show, that while
with $1$ machine it achieves $1$-completed-load competitiveness with speedup $s \geq \max\{\rho,2\}$, in a system of two machines running with speedup $s = \rho = 2$ its completed-load competitiveness is $\cC(\LISs) < 1$.

There are still a few open questions though, some of which we would like to answer in future works. We believe that \LISs is an important and interesting algorithm,
for its popularity and fairness. As we just mentioned, the completed-load competitiveness depends on the number of machines $m$. It would be interesting
to understand better the exact relation between $m$, $s$, and completed-load competitiveness in \LISs.
A second concrete question is whether there is a scheduling algorithm
that achieves $1$-completed load competitiveness with speedup $s < 1 + \rho$.


\section{Acknowledgements}
Supported in part by the grant TEC2014-55713-R of the Spanish Ministry of Economy and Competitiveness (MINECO), the Regional Government of Madrid (CM) grant Cloud4BigData (S2013/ICE-2894, co- funded by FSE \& FEDER), the NSF of China grant 61520106005, and European Commission H2020 grants ReCred and NOTRE.
Also partially supported by the FPU12/00505 grant from the Spanish Ministry of Education, Culture and Sports (MECD).

\bibliographystyle{plain}
\bibliography{references}

\appendix

\section{Omitted Proofs}

\subsection*{Completed-load of algorithm $\SLBurst$}


\begin{proofof}{Lemma~\ref{l:typeB}}
Let us fix a pair of arrival and error patterns, such that executions of case (b) occur. Let us now look at the scheduling decisions and performance of each machine individually, after the defined time instant $t$. Note that in such a case, there will only be time intervals of type $T^+$ and/or $T^-$. Otherwise, the execution would be of case (a) since for every time instant $t$ there would exist a future $t'>t$ for which $|L_\lmin(A,t')| < \rhoflr m^2 \bigwedge |L_\lmax(A,t')| < m^2$ would hold.
We define two types of periods for the machine status: the {\em active} and the {\em inactive} periods. During an active period the machine remains alive and the queue of pending tasks does not become empty (recall that the queue of pending tasks never becomes empty in the execution we are studying). An inactive period is a non-active one. In other words, a time interval $[t_r,t_c)$ is active if it starts with time instant $t_r$ such that it is the time right after a restart of the machine. Correspondingly, it ends with time instant $t_c$ such that the machine crashes.
We then focus on the active periods\footnote{We safely ignore the inactive ones since the queue of pending tasks does not become empty and the algorithm $\SLBurst$ is {\em work-conserving}. Hence inactive periods are only while the machine is still crashed.}, with length $\lambda$, which are further categorized in the following four kinds of phases:
\begin{enumerate}
\item Starts with $\lmin$-tasks and has length $\lambda < \rhoflr\lmin$.
\item Starts with $\lmin$-tasks and has length $\lambda \geq \rhoflr\lmin$.
\item Starts with $\lmax$-tasks and has length $\lambda < \lmax$.
\item Starts with $\lmax$-tasks and has length $\lambda \geq \lmax$.
\end{enumerate}

Let as look at the $i^{th}$ period after time $t$ in the execution of $\SLBurst$. Let us also denote by $a_i$ the number of completed $\lmin$-tasks, apart from the $\rhoflr$ preamble, by $b_i$ the number of completed $\lmax$-tasks and by $c_i$ the number of completed $\lmin$-tasks in the preamble. For the execution of $X$ we denote by $a^*_i$ the total number of completed $\lmin$-tasks and by $b^*_i$ the total number of completed $\lmax$-tasks. Let also $C^A(i_j)$ and $C^X(i_j)$ denote the total completed load within a phase $i$ of type $j$ by $\SLBurst$ and $X$ respectively. Analyzing the four types of active periods, we make the following observations.

For phases of type 1, $\SLBurst$ is not able to complete the $\rhoflr$  $\lmin$ tasks of the preamble, while $X$ is only able to complete at most as much load, so $\sum\limits_{\forall i} C^X(i_1) \leq \sum\limits_{\forall i} C^A(i_1)$.

For phases of type 2, the total completed load by $X$ minus the completed load by $\SLBurst$ is at most $\lmax$ (i.e., $\sum\limits_{\forall i} \big( C^X(i_2) - C^A(i_2) \big) < \lmax$). Therefore,
$$
\sum\limits_{\forall i} C^A(i_2) \geq \frac{\rhoflr\lmin}{\lmax + \rhoflr\lmin} \cdot \sum\limits_{\forall i} C^X(i_2).
$$
(Observe that $\frac{\rhoflr\lmin}{\lmax + \rhoflr\lmin} \leq 1/2$.)

The same holds for phases of type 4 
and hence $\sum\limits_{\forall i} C^X(i_4) \leq 2\sum\limits_{\forall i} C^A(i_4)$.

In phases of type 3, $\SLBurst$ is not able complete any task and hence $\sum\limits_{\forall i} C^A(i_3) = 0$, whereas $X$ might complete up to $(\lceil \rho \rceil - 1)\lmin$ tasks. There are two cases of executions to be considered then:\vspace{0.15em}

\indent {\bf Case 1:} The number of phases of type 3 is finite.\vspace{0.15em}

In this case, there is a phase $i^*$ such that $\forall i > i^*$ phase $i$ is not of type 3. Then,
\begin{equation}
\cC_1(A) = \frac{\sum\limits_{j\leq i^*} C^A(j) + \sum\limits_{j>i^*} C^A(j)}{\sum\limits_{j\leq i^*} C^X(j) + \sum\limits_{j>i^*} C^X(j)}
\end{equation}
Observe that the total progress completed by the end of phase $i^*$ by both algorithms is bounded. So for simplicity, we overload notations $A$ and $X$ and define $\sum\limits_{j\leq i^*} C^A(j) = A$ and $\sum\limits_{j\leq i^*} C^X(j) = X$. Therefore,
$$
\cC_1(A) = \frac{A + \sum\limits_{j>i^*} C^A(j)}{X + \sum\limits_{j>i^*} C^X(j)} \geq \frac{A + \frac{\rhoflr\lmin}{\lmax + \rhoflr\lmin}\sum\limits_{j>i^*} C^X(j)}{X + \sum\limits_{j>i^*}C^X(j)}.
$$
Hence, the completed load competitiveness of $\SLBurst$ at the end of each phase can be computed as $\cC(\SLBurst) = \lim_{t\rightarrow\infty}\cC_1(A)$, i.e.,
\begin{eqnarray*}
\cC(\SLBurst) & = & \lim\limits_{j\rightarrow\infty} \frac{A + \frac{\rhoflr\lmin}{\lmax + \rhoflr\lmin}\sum\limits_{j>i^*}C^X(j)}{X + \sum\limits_{j>i^*}C^X(j)} \\
 	& = & \lim\limits_{j\rightarrow\infty} \bigg( \frac{\rhoflr\lmin}{\lmax + \rhoflr\lmin} + \frac{(\lmax + \rhoflr\lmin)A - (\rhoflr\lmin)X}{(\lmax +\rhoflr\lmin)(X + \sum\limits_{j>i^*}C^X(j))} \bigg) \\
 	& = & \frac{\rhoflr\lmin}{\lmax +\rhoflr\lmin} = \frac{\rhoflr}{\rho +\rhoflr}.	
\end{eqnarray*}
It is important to note that the assumption $\lim_{t\rightarrow\infty}C^X(t) = \infty$ is used, which corresponds to the expression $\lim_{j\rightarrow\infty}\sum\limits_{j>i^*}C^X(j)$ in the above equality.

The above analysis shows the completed-load competitiveness at the end of each phase. However, we have to guarantee that the lower bound holds at all times within the phases. For this, consider any time instant $t$ of phase $i>i^*$. At that instant $\cC_i(t) = \frac{\sum_{j\in(i^*,i-1]}C^A(j) + A_t}{\sum_{j\in(i^*,i-1]}C^X(j) + X_t}$, where $A_t$ and $X_t$ represent the load completed by $\SLBurst$ and $X$ within phase $i$ up to time $t$. Using the above proof, and the fact that for phases of type 1,2 and 4 we have
$$
\sum\limits_{\forall i} \big( C^A(i_1) + C^A(i_2) + C^A(i_4) \big)
	\geq  \frac{\lim\rhoflr}{\lmax +\lmin\rhoflr}\cdot \sum\limits_{\forall i} \big( C^X(i_1) + C^X(i_2) + C^X(i_4) \big),
$$
we know that $A_t \geq \frac{\lim\rhoflr}{\lmax +\lmin\rhoflr}\cdot X_t$ as well. Hence,
\begin{eqnarray*}
\cC_i(t) & \geq & \frac{\frac{\rhoflr\lmin}{\lmax + \rhoflr\lmin}\sum_{j\in(i^*,i-1]}C^X(j) + \frac{\rhoflr\lmin}{\lmax+\rhoflr\lmin}X_t}{\sum_{j\in(i^*,i-1]}C^X(j) + X_t} \\
 & = & \frac{\rhoflr\lmin}{\lmax +\rhoflr\lmin} = \frac{\rhoflr}{\rho + \rhoflr}.
\end{eqnarray*}
\vspace{0.15em}

\indent {\bf Case 2:} The number of phases of type 3 is infinite.\\
In this case we must show that the number of $\lmin$ and $\lmax$-tasks completed are bounded for both $\SLBurst$ and $X$.

\begin{claim}
Consider the time instant $t$ at the beginning of a phase $j$ of type 3. Then the number of $\lmin$-tasks completed by $X$ by time $t$ is no more than the number of $\lmin$-tasks completed by $\SLBurst$, plus $\rhoflr -1$, i.e., $\sum_{i<j}a_i^* \leq \sum_{i<j}(a_i+c_i) + (\rhoflr-1)$.
\end{claim}
\begin{proof}
Consider the beginning of phase $j$ of type 3. We know that at that time instant algorithm $\SLBurst$ has at most $(\rhoflr -1)$ $\lmin$-tasks pending. Recall that a machine following algorithm $\SLBurst$, after restarting it first completes a preamble of $\rhoflr\lmin$ tasks, before executing any $\lmax$ ones. By the definition of type 3, it may only occur if there are not enough $\lmin$-tasks pending at time instant $t$. Hence, the amount of $\lmin$-tasks completed by $X$ by the beginning of phase $j$ is no more than the ones completed by algorithm $\SLBurst$ (including the ones in preambles) plus $\rhoflr - 1$.\qed
\end{proof}

\begin{claim}
Considering all types of phases and the number of $\lmax$-tasks completed, it holds that $\sum\limits_{i\leq j}b_i^* \leq \sum\limits_{i\leq j}b_i + \sum\limits_{i\leq j}\frac{c_i}{\rhoflr} + 2$, for every phase $j$.
\end{claim}
\begin{proof}
To prove this, we use induction on phase $j$.\\
\emph{Base Case:} For $j=0$ the claim is trivial.\\
\emph{Induction Hypothesis:} It holds that
$$
\sum\limits_{i\leq j-1}b_i^* \leq \sum\limits_{i\leq j-1}b_i + \sum\limits_{i\leq j-1}\frac{c_i}{\rhoflr} + 2.
$$
\emph{Induction Step:} We need to prove that the relationship holds up to the end of phase $j$. Consider first that during phase $j$ there is a time when $\SLBurst$ has no $\lmax$-tasks pending, and let $t$ be the latest such time in the phase. We define $b^*(t)$ and $b(t)$ being the number of $\lmax$-task completed up to time $t$ by algorithm $X$ and $\SLBurst$ respectively. We know that $b^*(t) \leq b(t)$. We also define $x^*_j(t)$ and $x_j(t)$ to be the number of $\lmax$-tasks scheduled by $X$ and $\SLBurst$ respectively after time instant $t$ and until the end of the phase $j$. We claim that $x^*_j(t) \leq x_j(t) + 2$. From our definitions, at time $t$ algorithm $\SLBurst$ is executing a $\lmin$-task. Since it is the last instant that it has no $\lmax$-task pending, the wort case is to be at the beginning of the preamble (by inspection of the 4 types of phases). Then, if the phase ends at time $t'$, period $I = [t,t']$ is such that $|I| < \rhoflr\lmin + (x_j(t)+1)\lmax \leq (x_j(t)+2)\lmax$. (The +1 $\lmax$-task is because of the machine crash before completing the last $\lmax$-task scheduled in the phase.) Observe that $X$ could be executing a $\lmax$-task at time $t$, completed at some point in $[t,t+\lmax]$ and accounted for in $x_j^*(t)$. Therefore,
$$
\sum\limits_{i\leq j}b^*_j = b^*(t) + x^*_j(t) \leq b(t)+ x_j(t) + 2 = \sum\limits_{i\leq j}b_i + 2.
$$
Now consider the case where at all times of phase $j$ there are $\lmax$-tasks pending for $\SLBurst$. By inspection of the 4 types of phases, the worst case is when $j$ is of type 2. After completing the preamble of $\rhoflr\lmin$ tasks, the algorithm schedules $\lmax$-tasks until the machine crashes again interrupting the last one scheduled. On the same time, $X$ is able to complete at most $\left\lfloor \frac{\lambda_j}{\lmax} \right\rfloor \leq b_j + 1$ $\lmax$-tasks, where $\lambda_j$ is the length of the phase. Hence, in all types of phases $b^*_j \leq \frac{c_j}{\rhoflr} + b_j$ and by induction, the claim follows; $\sum\limits_{i\leq j} b^*_j \leq \sum\limits_{i\leq j} \frac{c_i}{\rhoflr} + \sum\limits_{i\leq j}b_i + 2$.\qed
\end{proof}

Combining the two claims above, the completed load competitiveness ratio of case 2 is as follows:
\begin{align*}
\MoveEqLeft[3] C_2(A) = \frac{\sum\limits_{i\leq j}C^A(i)}{\sum\limits_{i\leq j}C^X(j)} = \frac{\sum\limits_{i\leq j} [(a_i + c_i)\lmin + b_i\lmax]}{\sum\limits_{i\leq j}[a_i^*\lmin + b_i^*\lmax]} \\
	\geq{}& \frac{\sum\limits_{i\leq j} [(a_i + c_i)\lmin + b_i\lmax]}{\sum\limits_{i\leq j}(a_i\! +\! c_i)\lmin\! +\! (\rhoflr\! -\!1)\lmin\! +\! \sum\limits_{i\leq j}(b_i\! +\! \frac{c_i}{\rhoflr})\lmax\! +\! 2\lmax} \\
	\geq{}& \frac{\sum\limits_{i\leq j}[(a_i + c_i)\lmin + b_i\lmax]}{\sum\limits_{i\leq j}[(a_i + 2c_i)\lmin + b_i\lmax] + 3\lmax} \\
	={}& \frac{\sum\limits_{i\leq j}[(a_i + c_i)\lmin + b_i\lmax] + \frac{3}{2}\lmax - \frac{3}{2}\lmax}{2 \sum\limits_{i\leq j}[(a_i+c_i)\lmin + b_i\lmax] + 3\lmax} \\
	\geq{}& \frac{1}{2} - \frac{\frac{3}{2}\lmax}{2 \sum\limits_{i\leq j}[(a_i+c_i)\lmin + b_i\lmax] + 3\lmax}.
\end{align*}

Note that, due to the parameters $a_i,b_i$ and $c_i$, the second ratio tends to zero (the denominator tends to infinity) and hence the completed load competitive ratio tends to $\cC(\SLBurst) = \lim\limits_{t\rightarrow\infty} C_2(A) \geq \frac{1}{2}$.

Combining now the results from the two cases concerning the number of phases of type 3, since $\frac{\rhoflr}{\rho + \rhoflr} \leq \frac{1}{2}$, the completed load of algorithm $\SLBurst$ is at least $\frac{\rhoflr}{\rho + \rhoflr}$ as claimed.\hfill\rule{2mm}{2mm} \!{\sf\tiny Lemma}
\end{proofof}

\subsection*{Completed-load of Algorithm Amortized}


\begin{proofof}{Lemma~\ref{l:l1}}
Let us divide the $n$-busy interval $T = [t,t']$ of $p$ into two intervals; the first being from the beginning, $t$, to a time instant $t^*\geq t$ such that the first restart happens in the interval, and the second being the remaining of the interval, from $t^*$ to $t'$. In other words, interval $T_1 = [t,t^*]$ and $T_2 = [t^*,t']$.

Looking first at interval $T_2$, it starts by a restart and then, either includes more crashes and restarts or not, and never schedules tasks of size more than $\ell_n$. Hence, at time $t^*$ the machine starts executing a new task with both \Amortized and $X$.
Also, since $p$ is busy at all times of the interval, for every $\ell_i$-task completed by $X$ in $T_2$ -- say in interval $T_i = [t_1,t_2]\in T_2$ where $t_2 = t_1 + \ell_i$ and $i \geq n$ -- the machine is able to complete $\rho_{i,n}$ $n$-groups in $T_i$ (each of size $\ell_n$). These groups correspond to executions of the recursive function {\em Schedule\_Group$(n)$}. Hence, we can assign each $n$-group to the task completed by $X$ at the moment when the last task in the $n$-group is completed by \Amortized, which gives inequality $C_p^A(T_2) \geq C_p^X(T_2,\geq\ell_n)$.

Looking now at interval $T_1$, we must consider the following cases for the execution of $p$:\\
(1) At time instant $t$ there was a restart ($t^* = t$) and hence the machine started executing a new task, with both $X$ or \Amortized. In this case the analysis of interval $T_2$ will hold.\\
(2) At time instant $t$ it is already executing a task $\tau$ with $X$, scheduled before $t$ and then, it either a) gets interrupted by the crash at time $t^*$, or b) it completes it within the interval $T_1$. 
In the first case, $X$ is not able to complete any task in $T_1$ while \Amortized may complete up to $|T_1|$, for which it is trivial that $C_p^A(T_1) \geq C_p^X(T_1)$ holds.
In the latter, task $\tau$ will be of maximum size $\ell_k$. Then for the rest of the interval, the same analysis as for $T_2$ holds, for every $\ell_i$-task fully contained in the interval and completed by $X$, where $i \geq n$. Hence, $C_p^A(T_1) \geq C_p^X(T_1, \geq \ell_n) - \ell_k$.

From the two intervals, we have the claim of the lemma, $C_p^A(T) \geq C_p^X(T,\geq\ell_n) - \ell_k$.\qed
\end{proofof}


\begin{proofof}{Lemma~\ref{l:l2}}
The idea of the proof for this lemma, is that tasks completed by algorithm \Amortized are associated to tasks completed by $X$ in such a way, that: (a) each task completed by \Amortized corresponds to at most twice the size of tasks completed by $X$ and (b) each task completed by $X$ is associated to tasks of the same aggregate size completed by \Amortized.
This amortization follows these two rules:\\
\indent 1. the $r^{th}$ task of size $\ell_i$ completed by algorithm \Amortized within $T$, for $i<n$, is associated to the $r^{th}$ task of size $\ell_i$ completed by $X$ within $T$ (if completed).\\
\indent 2. the completion of a task $\tau$ of size $\ell_i \geq \ell_n$ by $X$, corresponds to $\ell_i/\ell_n$ $n$-groups completed by \Amortized, such that the execution of the last task of each of the groups is finished during the execution of task $\tau$.

\vspace{.5em}
First, looking at rule \#1 and interval $[t_1,t]$, the following two equations hold for the pending tasks at the end of the interval:
$$
P^A(t, < \ell_n) = P^A(t_1, < \ell_n) + I([t_1,t], < \ell_n) - C_p^A([t_1,t], < \ell_n),
$$
$$
P^X(t, < \ell_n) = P^X(t_1, < \ell_n) + I([t_1,t], < \ell_n) - C_p^X([t_1,t], < \ell_n),
$$
where $I([t_1,t], < \ell_n)$ is the set of tasks smaller than $\ell_n$ that were injected during the interval, up to time $t$. Since they are the same for both algorithms, from the above equations we have:
$$
P^A(t, < \ell_n) - P^A(t_1, < \ell_n) + C_p^A([t_1,t], < \ell_n)
	= P^X(t, < \ell_n) - P^X(t_1, < \ell_n) + C_p^X([t_1,t], < \ell_n)
$$
which leads to the completed load of \Amortized containing only {\em small} tasks, $< \ell_n$, being bounded as:
\begin{equation}
\label{e:eqA}
C_p^A([t_1,t], < \ell_n) \geq C_p^X([t_1,t], < \ell_n) + P^X(t, < \ell_n) - \sum\limits_{i=1}^{n-1}(\ell_n + \ell_i)(m^2 + m\rho_{i+1}).
\end{equation}
To see why the inequality holds, look first at 
the pseudo-code of the algorithm, more precisely line 6 of algorithm~\ref{alg:schgroup}; a task of size $\ell_n$ is scheduled at time $t$ only in the case when the total size of {\em smaller} pending tasks is $P^A(t,<\ell_n) \leq \sum\limits_{i=1}^{n-1}(\ell_n + \ell_i)(m^2 + m\rho_{i+1})$. 
Recall also condition (3) of the $n$-busy interval of the machine; i.e., $P^A(t_1,\ell_i) \geq P^X(t_1,\ell_i), \forall i\in[1,n]$, which also means that $P^A(t_1, \leq \ell_n) \geq P^X(t_1, \leq \ell_n)$. Combining these properties, the inequality follows.

Now, looking at rule \#2, for any task $\tau$ of size $\ell_i \geq \ell_n$ completed by $X$, we have already shown in Lemma~\ref{l:l1} that
\begin{equation}
\label{e:eqB}
C_p^A\big([t_1,t]\big) \geq C_p^X\big([t_1,t],\geq\ell_n\big) - \ell_k.
\end{equation}

Combining the two equations,~\ref{e:eqA} and~\ref{e:eqB}, the claim follows:
$
2C_p^A\big([t_1,t]\big) \geq C_p^X\big([t_1,t]\big) - \sum\limits_{i=1}^{n-1}(\ell_n + \ell_i)(m^2 + m\rho_{i+1}) - \ell_k.
$\qed
\end{proofof}


\begin{proofof}{Lemma~\ref{l:l3}}
We prove this lemma by induction on $n$.\\
{\em Base case.} For $n = 1$, the result is immediate from Lemma~\ref{l:l1}. More precisely, since $C_p^X(T,\geq\ell_1) = C_p^X(T)$, then $2C_p^A(T) \geq C_p^X(T) - \ell_k$ holds directly.\\
{\em Induction Hypothesis.} We assume that the result holds for some $n<k$, i.e., $2C_p^A(T) \geq C_p^X(T) - f_n$.\\
{\em Inductive Step.} We show that the result still holds for $n+1$. For this, we split the $(n+1)$-busy interval $T$ in three sub-intervals:\vspace{-2mm}
\begin{itemize} \itemsep1pt
\item $T_1$ is the interval from the beginning of $T$ to time instant $t$ at which \Amortized starts executing an $\ell_{n+1}$-task for the last time during $T$.
\item $T_2$ is the interval from $t$ to $t'\in T$ s.t. either \Amortized completes the $\ell_{n+1}$-task, or it gives up scheduling tasks of size $\ell_{n+1}$ at $t'$ since it now has enough smaller tasks pending to cover the $\ell_{n+1}$ time and there was a crash and restart of the machine.
\item $T_3$ from time instant $t'$ to the end of $T$.
\end{itemize}
For sub-interval $T_1$ we know that Lemma~\ref{l:l2} holds, hence
\begin{equation}
\label{e:eq7}
2C_p^A(T_1) \geq C_p^X(T_1) + P^X(t,<\ell_{n+1}) - \sum\limits_{i=1}^n (\ell_{n+1} + \ell_i)(m^2 + m\rho_{i+1}) - \ell_k.
\end{equation} 
Let us now consider an offline algorithm $X'$, which acts as $X$ during $T$, except the fact that it starts sub-interval $T_2$ only with tasks with length at least $\ell_{n+1}$ (no smaller ones), and stays idle whenever $X$ executes a task that was pending in its queue at time instant $t$ but not in the queue of $X'$.

Note here, that algorithm \Amortized finishes the last attempt to complete a $\ell_{n+1}$-task, no longer than $\ell_{n+1}$ time after time instant $t$, in other words, $|T_2| \leq \ell_{n+1}$. Hence, $C_p^{X'}(T_2) \leq \ell_{n+1} + \ell_k$, where $\ell_k$ comes from the possibility of $X'$ scheduling a task before $T_2$ and completing it within $T_2$. Hence,
\begin{equation}
\label{e:eq8}
2C_p^A(T_2) \geq 0 \geq C_p^{X'}(T_2) - \ell_{n+1} - \ell_k.
\end{equation}

At the beginning of sub-interval $T_3$, we have that $P^A(t',\ell_i) \geq P^{X'}(t',\ell_i)$ for each $i\leq n$, since \Amortized only attempted the execution of a $\ell_{n+1}$ during $T_2$ and $X'$ starts the $T_2$ without any tasks smaller than $\ell_{n+1}$. This means that the inductive hypothesis holds for sub-interval $T_3$ for the largest task $\ell_n$ and offline algorithm $X'$ instead of $X$:
\begin{equation}
\label{e:eq9}
2C_p^A(T_3) \geq C_p^{X'}(T_3) - f_n.
\end{equation}

Now observe that at time instant $t$, at the beginning of interval $T_2$, algorithm $X$ has $P^X(t,<\ell_{n+1})$ more tasks pending than $X'$. Hence, by the end of interval $T_3$ the following will hold:
\begin{equation}
\label{e:eq10}
C_p^{X'}(T_2\cup T_3) \geq C_p^X(T_2\cup T_3) - P^X(t,<\ell_{n+1}).
\end{equation}

Putting equations~\ref{e:eq7} to~\ref{e:eq10} together to calculate the completed load of the total interval $T$, we have
\begin{eqnarray*}
2C_p^A(T) & \geq & C_p^X(T_1) + C_p^{X'}(T_2\cup T_3) + P^X(t,<\ell_{n+1}) - \sum\limits_{i=1}^n (\ell_{n+1} + \ell_i)(m^2 + m\rho_{i+1}) - \ell_{n+1} - 2\ell_k + f_n \\
	& \geq & C_p^X(T_1) + C_p^X(T_2\cup T_3) - \sum\limits_{i=1}^n (\ell_{n+1}\! +\! \ell_i)(m^2\! +\! m\rho_{i+1})\! -\! \ell_{n+1}\! -\! 2\ell_k\! -\! f_n \\
	& \geq & C_p^X(T) - \sum\limits_{i=1}^n (\ell_{n+1} + \ell_i)(m^2 + m\rho_{i+1}) - \ell_{n+1} - 2\ell_k - f_n \\
	& \geq & C_p^X(T) - f_{n+1}
\end{eqnarray*}
which completes the induction step and thus the proof of the lemma.\qed
\end{proofof}

\subsection*{Completed-load of Algorithm MAmortized}

The following definition and Claim are necessary for the proof of Lemma~\ref{l:Mamortized}.

\begin{definition}
An execution of {\em Schedule\_Group($k$)} is considered to be {\em uniform}, if the algorithm completes tasks of a single fixed size $\ell_i$ only, during the executions of the current {\em Schedule\_Group($k$)} as well as the previous one.
\end{definition}

\begin{claim}
\label{c:uniform}
There are at least $ck-2k$ uniform calls of {\em Schedule\_Group($k$)} in a stage of M\Amortized.
\end{claim}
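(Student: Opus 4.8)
The plan is to count the calls to \texttt{Schedule\_Group}$(k)$ in a stage that fail to be uniform, and show there are at most $2k$ of them. Recall that one stage of M\Amortized\ consists of exactly $ck$ consecutive calls $G_1,\dots,G_{ck}$ to \texttt{Schedule\_Group}$(k)$ (the \texttt{For} loop of Alg.~\ref{algMAmort}), and that, as already observed after the algorithm's description, during a stage the set $\cc$ is non-empty at the outset and only grows, so the exceptional branch with $\cc=\emptyset$ is never taken; hence every task completed during a stage has the current appropriate size $\ell_{i^*}$, and $i^*$ is non-increasing throughout the stage.

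First I would introduce the set $B\subseteq\{1,\dots,ck\}$ of indices $j$ for which the value of $i^*$ changes at some point during the execution of $G_j$. Since $i^*$ takes at most $k$ distinct values and is non-increasing within the stage, it changes at most $k-1$ times, and each such change occurs inside exactly one of the $G_j$'s; therefore $|B|\le k-1$.

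Next I would show that every call $G_j$ with $j\ge 2$, $j\notin B$, and $j-1\notin B$ is uniform. If $j\notin B$, then $i^*$ equals some single fixed value, say $v$, throughout $G_j$, so the only tasks completed in $G_j$ are $\ell_v$-tasks. Because nothing in the \texttt{For} loop of Alg.~\ref{algMAmort} alters $i^*$ between two consecutive iterations, the value of $i^*$ at the start of $G_j$ equals its value at the end of $G_{j-1}$, namely $v$; and since $j-1\notin B$, the value $i^*$ is constantly $v$ during $G_{j-1}$ as well, so $G_{j-1}$ also completes only $\ell_v$-tasks. By the definition of uniformity, $G_j$ is then uniform. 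Consequently the set of non-uniform calls of a stage is contained in $B\cup\{\,j:j-1\in B\,\}\cup\{1\}$, a set of size at most $2|B|+1\le 2(k-1)+1\le 2k$, and so at least $ck-2k$ of the $ck$ calls are uniform, as claimed.

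The main obstacle is not the count itself but nailing down two small boundary issues: (i) arguing carefully that $i^*$ cannot change \emph{between} two successive iterations of the \texttt{For} loop, so that constancy of $i^*$ inside $G_{j-1}$ and inside $G_j$ really forces the \emph{same} value in both; and (ii) the first call $G_1$ of a stage, whose ``previous call'' belongs to the preceding stage where $i^*$ may have been strictly larger — this single extra potentially non-uniform call is exactly what makes the bound $ck-2k$ rather than $ck-2(k-1)$. A one-line remark will also be needed to justify that within a stage the arbitrary-position branch of Alg.~\ref{algMAmort} never fires (since $\cc\neq\emptyset$), which is what legitimizes the statement ``all completed tasks of $G_j$ have size $\ell_{i^*}$''.
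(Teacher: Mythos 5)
Your proposal is correct and uses essentially the same argument as the paper: $i^*$ is non-increasing within a stage and so changes at most $k-1$ times, each change spoils uniformity for at most two consecutive calls of {\em Schedule\_Group}$(k)$, and the first call of the stage is handled separately, yielding at most $2k$ non-uniform calls and hence at least $ck-2k$ uniform ones. Your version is merely spelled out more carefully (the explicit set $B$, the observation that $i^*$ is not modified between consecutive iterations of the \texttt{For} loop, and the explicit treatment of $j=1$), whereas the paper compresses this into a few lines.
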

\begin{proof}
As already mentioned, the value of $i^*$ can only decrease during a stage. Since there are up to $k$ task sizes, this can happen up to $k-1$ times. However, for a {\em Schedule\_Group($k$)} execution to be uniform, its previous execution must also be uniform. Hence, there can be at least $ck - 2k$ calls of {\em Schedule\_Group($k$)} in a stage that are uniform.\qed
\end{proof}


\begin{proofof}{Lemma~\ref{l:Mamortized}}
Let us start by assuming that all the executions of {\em Schedule\_Group} in M\Amortized are uniform.
Every task $\ell_i \geq \ell_n$ completed by $X$, will correspond to a group of $\lfloor \ell_i/\ell_n \rfloor$ tasks of total size $\ell_i$ completed by algorithm M\Amortized. In particular, to the group of tasks that completed their execution successfully during the execution of the task $\ell_i$ by $X$.

Now let $\gamma = \min\limits_{i\leq j < i \leq k} \Big\{ \frac{\overline{\rho_{i,j} }}{\overline{ \rho_{i,j} } + \rho_{i,j}} \Big\}$. Let us also define $\delta_{i,j} = \frac{\rho_{i,j}}{\overline{\rho_{i,j} }}$ and $\delta = \max\limits_{i>j} \{\delta_{i,j} \}$. This means that $\gamma = 1/(1+\delta)$. The assignment of tasks completed by M\Amortized, to each $\ell_i$-task completed by $X$, where $i\geq n$, makes the inequalities from Lemmas~\ref{l:l1},~\ref{l:l2} and~\ref{l:l3} as follows:
$$
\delta\cdot C^A(T) \geq C^X(T, \geq \ell_n) - \ell_k
$$
$$
(1 + \delta)\cdot C^A([t_1,t]) \geq C^X([t_1,t]) + P^X(t,<\ell_n) - \sum\limits_{i=1}^n (\ell_{n+1} +\ell_i)(m^2 + m\rho_{i+1}) - \ell_n
$$
$$
(1 + \delta)\cdot C^A(T) \geq C^X(T) - f_n
$$
Then, if we apply the above inequalities in the proof of Theorem~\ref{th:Amortized}, we obtain the claimed result, without the $c'$ factor, since $\frac{1}{1+\delta} = \min\limits_{1\leq j<i\leq k} \Big\{ \frac{\overline{\rho_{i,j} }}{\overline{\rho_{i,j}} + \rho_{i,j}} \Big\}$.

Nonetheless, the above case only covers the uniform executions. Let us now consider the cases where there is some execution of {\em Schedule\_Group} that is not uniform. By Claim~\ref{c:uniform}, at most a fraction of calls to {\em Schedule\_Group($k$)} are not uniform; that is, $2/c$ of them. From Claim~\ref{c:mod1}, even without the uniform executions we have the completed load competitiveness at least $\eta = \min\limits_{i\in [2,k]} \Big\{ \frac{\rho_i - 1}{2\rho_i - 1} \Big\}$.

Combining the two cases, we separate the uniform and the non-uniform executions, and denote the corresponding completed load of $X$ by $C^X_1(T)$ and $C^X_2(T)$ respectively. We therefore have the following relationships between the two algorithms:
$$
\left(1 - \frac{2}{c}\right) \cdot C^A(T) \geq \gamma \cdot C^X_1(T) - c_1
$$
$$
\frac{2}{c} \cdot C^A(T) \geq \eta \cdot C^X_2(T) - c_2
$$
for constants $c_1$ and $c_2$ that depend on the task sizes. This means that
$$
\left( 1 - \frac{2}{c} + \frac{2\gamma}{c\eta} \right) \cdot C^A(T) \geq \gamma \cdot \Big( C^X_1(T) + C^X_2(T) \Big) - c_1 - c_2
$$
which leads to the desired completed load competitiveness:
$$
C^A(T) \geq \frac{1}{1 + 2/(c\eta)}\cdot \gamma\cdot C^X(T) - c_1 - c_2
$$
where we can define the $c'$ of the lemma equal to $\frac{1}{1 + 2/(c\eta)}$ and choose such a $c$, large enough, to make $c'$ arbitrarily close to 1. This completed the proof of the lemma.\qed
\end{proofof}

\subsection*{Completed-load in the case of speedup}

\begin{proofof}{Theorem~\ref{t:srhoplus}}
\remove{	
We consider any distributed work-conserving algorithm $\Alg_W$, running on $m$ parallel machines with speedup $s \geq 1 + \rho$.
For the proof of the theorem let us consider only the periods of execution during which there are at least $m^2$ pending tasks 
During the remaining time of the executions, the completed load is bounded by the number of tasks pending (i.e., $< m^2\lmax$).


Consider then, the execution of machine $p$ and the corresponding execution of offline algorithm $X$ in the same machine. We look at any time $t$ and define time instant $t'<t$ to be the latest time before $t$ at which one of the following two events happens: (1) and {\em active} period starts ($t'$ is a restart point of $p$), or (2) algorithm $X$ has successfully completed a task.

It is then trivial that $P_0^p(\Alg,A,E) \leq P_0^p(X,A,E)$ holds at the beginning of the executions. Now, assuming that $P_{t'}^p(\Alg,A,E) \leq P_{t'}^p(X,A,E)$ holds at time $t'$, we prove by induction that $P_t^p(\Alg,A,E) \leq P_t^p(X,A,E)$ still holds at time $t$. This also means that the tasks successfully completed by \Alg by time $t$ have at least the same total size as the ones completed by $X$.

Looking at the interval $T = (t',t]$, we have to consider the following two cases:\vspace{0.2em}

\noindent {\em Case 1: $X$ is not able to complete any task in the interval $T$.} This means that $P_t^p(X,A,E) = P_{t'}^p(X,A,E) + i_T$, where $i_T$ is the total size of the tasks injected during the interval. For \Alg it holds that $P_t^p(\Alg,A,E) \leq P_{t'}^p(\Alg,A,E) + i_T$, even if it is not able to complete any tasks in $T$. Therefore, $P_t^p(\Alg,A,E) \leq P_t^p(X,A,E)$.\vspace{0.2em}

\noindent {\em Case 2: $X$ completes a task in the interval $T$.} Note that, due to the definition of $t'$, there can only be one task completed by $X$ within $T$, and it must be completed exactly at time instant $t$. Then, interval $T$ has length equal to a task, i.e., $\ell \in [\lmin,\lmax]$; more precisely the size of the task completed by $X$. In $T$ algorithm \Alg executes tasks continuously, whose aggregate size is at least $\ell s - \lmax$. Then, the pending load of the two algorithms at $t$ satisfies $P_t^p(X,A,E) = P_{t'}^p(X,A,E) + i_T - \ell$ and $P_t^p(\Alg,A,E) \leq P_{t'}^p(\Alg,A,E) + i_T - (\ell s - \lmax)$. Observe that the fact that $s\geq 1 + \rho$ implies that $\ell s - \lmax \geq \ell$. Hence, $P_t^p(\Alg,A,E) \leq P_t^p(X,A,E)$.

The above analysis shows that the pending-load competitiveness at any time of the execution of each machine individually is 1. Therefore, taking the sum of pending loads of all machines gives the result claimed.
In other words, the pending-load competitiveness ratio of any work-conserving distributed algorithm that guarantees non redundant executions while there are at least $m^2$ tasks pending, is $\cP(\Alg) \leq 1$, when $s \geq 1 + \rho$. Hence, it also holds that the completed-load competitiveness is $\cC(\Alg) \geq 1$.\qed
}
We consider any distributed work-conserving algorithm $\Alg_W$, running on $m$ parallel machines with speedup $s \geq 1 + \rho$.
For the proof of the theorem let us consider only the periods of execution during which there are at least $m^2$ pending tasks. 
During the remaining time of the executions, the completed load is bounded by the number of tasks pending (i.e., $< m^2\lmax$).


Consider then, the execution of machine $p$ and the corresponding execution of offline algorithm $X$ in the same machine. We will be looking at their completed load by machine $p$ at different time instances. Let us look at any time $t$ and define time instant $t'<t$ to be the latest time before $t$ at which one of the following two events happens: (1) and {\em active} period starts ($t'$ is a restart point of $p$), or (2) algorithm $X$ has successfully completed a task.

It is then trivial that $C^p_0(\Alg,A,E) \geq C^p_0(X,A,E)$ holds at the beginning of the executions. Now, assuming that $C^p_{t'}(\Alg,A,E) \geq C^p_{t'}(X,A,E)$ holds at time $t'$, we prove by induction that $C^p_t(\Alg,A,E) \geq C^p_t(X,A,E)$ still holds at time $t$. This also means that the tasks successfully completed by machine $p$ in the execution of \Alg by time $t$ have at least the same total size as the ones completed by $X$.

Looking at the interval $T = (t',t]$, we have to consider the following two cases:

\noindent {\em Case 1: $X$ is not able to complete any task in the interval $T$.} This means that $C^p_t(X,A,E) = C^p_{t'}(X,A,E)$. For \Alg it holds that $C^p_t(\Alg,A,E) \geq C^p_{t'}(\Alg,A,E)$, even if it is not able to complete any tasks in $T$. Therefore, $C^p_t(\Alg,A,E) \geq C^p_t(X,A,E)$.

\noindent {\em Case 2: $X$ completes a task in the interval $T$.} Note that, due to the definition of $t'$, there can only be one task completed by $X$ within $T$, and it must be completed exactly at time instant $t$. Then, interval $T$ has length equal to a task $\ell \in [\lmin,\lmax]$; the size of the task completed by $X$. In $T$ algorithm \Alg executes tasks continuously, whose aggregate size is at least $\ell s - \lmax$. Then, the completed load of machine $p$ with the two algorithms at $t$ satisfies $C^p_t(X,A,E) = C^p_{t'}(X,A,E) + \ell$ and $C^p_t(\Alg,A,E) \geq C^p_{t'}(\Alg,A,E) + (\ell s - \lmax)$. Observe that the fact that $s\geq 1 + \rho$ implies that $\ell s - \lmax \geq \ell$. Hence, $C^p_t(\Alg,A,E) \geq C^p_t(X,A,E)$.

The above analysis shows that the completed-load competitiveness at any time of the execution of each machine is 1. Therefore, taking the sum of completed loads of all machines, gives the claimed result; the completed-load competitiveness ratio of any work-conserving distributed algorithm that guarantees non redundant executions while there are at least $m^2$ tasks pending, is $\cC(\Alg) \geq 1$, when $s \geq 1 + \rho$. \qed
\end{proofof}

Let us present here the pseudo-code of algorithm $\LISs$.

\lstset{columns=fullflexible,
	tabsize=3,
	columns=flexible,
	mathescape=true,
	morekeywords={Parameters, Upon, awaking, or, restart, Repeat, Get, Sort, If, then, Inform, Set, else},
}
\lstset{escapeinside={('}{')}}

\setlength{\textfloatsep}{2pt}
\begin{algorithm}[H]
\caption{ {\bf $\LISs$} (for machine $p$)}
\label{algLIS}
\begin{lstlisting}
Parameters: $m, \lmin, \lmax$
Upon awaking or restart
	Repeat
		Get sorted queue $Q$ from the Repository;
		If $|Q| \geq m^2$ then
			Schedule task $\ell$ at position $p \cdot m$ in $Q$;
		else
			Schedule task $\ell$ at position $(p \cdot m) \mod |Q|$;
		Inform Repository of completion of task $\ell$;
\end{lstlisting}
\end{algorithm}

\begin{proofof}{Theorem~\ref{t:mLIS} ({\em Sketch})}
We look at the case when $\LISs$ runs in a parallel system of two machines ($m=2$) and speedup $s = \rho = 2$. Let us fix an adversarial strategy, consisting of task arrival and machine error patterns $A$ and $E$, that work as follows:

We define $\delta=\rho^{1/5} \approx 1.15$, and use only tasks of sizes $x \cdot \lmin$, for $x \in \{1, \delta, \delta^2, \delta^3, \delta^4, \delta^5\}$. For simplicity,
in the rest of the proof we remove the factor $\lmin$, which only introduces a scaling factor.

The arrival pattern $A$ is the following sequence of task sizes that is repeated over and over:
\begin{equation}
\label{lis-arrival}
1, \delta^2, \delta^2, \delta^4, \delta, \delta^3, \delta^3, \delta^5, \cdots
\end{equation}
Let the arrival of tasks be fast enough so that whenever algorithms $\LISs$ or $X$ are supposed to schedule a task in the description below such a task is in the repository.\\

The execution then behaves as follows: We divide the execution in {\em epochs}, so that in one epoch $\LISs$ executes the tasks that are in positions 1 to 8 in the repository sorted by arrival time, as shown above. As defined by $\LISs$, processor 1 always schedules the task in position 1, while processor 2 schedule the task in position 3. In each epoch there are 8 {\em phases} as follows:
\begin{enumerate}
\item 
In the first phase, processor 1 is crashed, while processor 2 restarts, is active for time $\delta^3$, and the crashes again. In this phase,
$X$ schedules and completes a task of length $\delta^3$. On its hand, $\LISs$ schedules and completes a task of length $\delta^2$, and schedules
a task of length $\delta^4$ that is interrupted, since $\frac{\delta^2+\delta^4}{2} >\delta^3$.
\item
In the second phase, processor 2 stays crashed. Processor 1 is active for time $\delta$. In this phase $X$ schedules and completes a task of length $\delta$.
$\LISs$ schedules and completes a task of length $1$, and schedules a task of length $\delta^2$ that is interrupted.
\item
In the third phase, processor 2 remains crashed. The tasks in positions 1 and 2 at the start of the task have lengths $\delta^2$ and $\delta^4$. In this phase,
processor 1 is active for $\delta^3$ time. As in the first phase, $X$ completes a task of length $\delta^3$, while $\LISs$ only completes a task of length $\delta^2$.
\item
In the fourth phase, processor 2 keeps being crashed. The tasks in positions 1 and 2 at the start of the task have lengths $\delta^4$ and $\delta$.
In this phase, processor 1 is active for $\delta^2$ time. Hence, $X$ completes a task of length $\delta^2$ while $\LISs$ only completes a task of length $\delta^4$.
\end{enumerate}
These first four phases complete the execution of the 4 first tasks in the sequence \ref{lis-arrival} above. The next three phases of the epoch are
similar to phases 1 to 3, but all task lengths have an additional factor $\delta$. In the final eighth phase, only processor 1 is active, for $\delta^2$ time,
$X$ completes a task of length $\delta^2$, and $\LISs$ completes a task of length $\delta^5$ (the next task in arrival order is the task of length 1 that starts
a new sequence of 8 tasks like sequence \ref{lis-arrival} above).

The total length of the tasks completed by $X$ in the first epoch is then $C(X)=2\delta^4+2\delta^3 + 3\delta^2+\delta \approx 11.62$, while $n\LIS$ has completed
$C(\LISs)=\delta^5+\delta^4+2\delta^3+2\delta^2+\delta+1 \approx 11.56$. Exactly the same behavior is repeated in every epoch.
Hence, $m\LIS$ is not 1-competitive with $s=\rho=m=2$.\qed
\end{proofof}

\end{document}